\newtheorem{theorem}{Theorem}
\newtheorem{corollary}[theorem]{Corollary}
\newtheorem{lemma}[theorem]{Lemma}
\newtheorem{proposition}[theorem]{Proposition}
\newtheorem{remark}[theorem]{Remark}
\begin{document}

\title[Asymptotics for the Average of GBM and Asian Options]
{Asymptotics for the Discrete-Time Average of the Geometric Brownian Motion and Asian Options}

\author{Dan Pirjol}
\email{
dpirjol@gmail.com}

\author{Lingjiong Zhu}
\address
{Department of Mathematics \newline
\indent Florida State University \newline
\indent 1017 Academic Way \newline
\indent Tallahassee, FL-32306 \newline
\indent United States of America}
\email{
zhu@math.fsu.edu}

\date{15 September 2016}.


\subjclass[2010]{91G20,91G80, 60F05,60F10}
\keywords{Asian options, central limit theorems, Berry-Esseen bound, large deviations.}

\begin{abstract}
The time average of geometric Brownian motion plays a crucial role in the pricing of
Asian options in mathematical finance. In this paper we consider the asymptotics of
the discrete-time average of a geometric Brownian motion sampled on uniformly spaced
times in the limit of a very large number of averaging time steps. We derive almost sure
limit, fluctuations, large deviations, and also the asymptotics of the moment generating
function of the average. Based on these results, we derive the asymptotics for the price of
Asian options with discrete-time averaging in the Black-Scholes model, with both fixed
and floating strike.
\end{abstract}

\maketitle

\section{Introduction}

Asian (or average) options are widely traded instruments in the financial 
markets, which involve the time average of  the price of an asset $S_t$. 
Most commonly $S_t$ is a stock price or a commodity futures contract price, 
for example oil or natural gas futures. An Asian call option has payoff of 
the form
\begin{equation}
\mbox{Payoff} = \mbox{max}
\left\{\frac{1}{n} \sum_{i=1}^n S_{t_i} - K, 0\right\} \,,
\end{equation}
where $0 \leq t_1 < t_2 < \cdots < t_n$ is a sequence of strictly increasing
times, called sampling or averaging dates. Under risk-free neutral pricing, 
the price of such an option is given by the expectation of the payoff in the
risk-neutral measure. Assuming the Black-Scholes model one is led to
study the distributional properties of 
the discrete time average of the asset price
\begin{equation}\label{Adef}
A_n = \frac{1}{n} \sum_{i=1}^n S_{t_i}
\end{equation}
under the assumption that $S_n$ follows a geometric Brownian motion
\begin{equation}
dS_{t} = (r - q)S_{t}dt + \sigma S_{t}dZ_{t},
\end{equation}
where $Z_{t}$ is a standard Brownian motion, $r$ is the risk-free rate, $q$ is the dividend yield
and $\sigma$ is the volatility.

The main technical difficulty for pricing Asian options is that the probability
distribution of the discrete time average (\ref{Adef}) does not have a simple
expression. If the averaging times are uniformly distributed, the time average
can be well approximated, for sufficiently small time step, 
by a continuous average
\begin{equation}
A_n = \frac{1}{t_n} \int_0^{t_n} S_t dt\,.
\end{equation}
When $S_t$ follows a geometric Brownian motion, the problem is reduced to the
study of the distributional properties of the time integral of the geometric
Brownian motion, which has been extensively studied in the literature. 
See \cite{DufresneReview}
for a review of the main results and their applications to the Asian 
options pricing.

A wide variety of methods have been proposed for pricing Asian options,
and a brief survey is given below. 

1. PDE methods \cite{RogersShi,Vecer,VecerRisk,Lord}. The pricing of an Asian option
can be reduced to the solution of a 1+1 partial differential equation,
which is solved numerically. This method can be applied both to continuous-time 
and discrete-time averaging Asian options \cite{Andreasen}.
See also Alziary et al. \cite{Alziary}.

2. The Laplace transform method \cite{GemanYor,CS}: 
the Asian option price with random exponentially distributed maturity can be
found in closed form for the case when the asset price
$S_t$ follows a geometric Brownian motion. This reduces the problem of the 
Asian option pricing to the inversion of a Laplace transform. 

3. Spectral method \cite{Linetsky}. 
The probability distribution of the time integral of the geometric Brownian 
motion can be related to that of a Bessel process \cite{1,2}. 
The transition density of this Bessel process can be expanded in an 
eigenfunction series \cite{Wong}, and Asian option prices
can be evaluated using the eigenfunction expansion, truncated
to a sufficient high order \cite{Linetsky}.

4. Bounds and control variates methods. There is a large literature on
deriving bounds on Asian option prices. Both lower and upper bounds have been 
given, see \cite{Lord} for an overview. They can be used also in conjunction 
with Monte Carlo methods as control variates. One precise method of this type 
which is popular in practice was given by Curran \cite{Curran}. 
Other methods which take into account the discrete time averaging
have been proposed in \cite{disc1,disc2,disc3}.

5. Monte Carlo simulation. See e.g. Kemna and Vorst \cite{Kemna}, Fu, Madan,
Wang \cite{FMW},  Lapeyre and Teman \cite{Lapeyre}.

6. Analytical approximations. Various numerical methods have been proposed 
which approximate the distribution of the arithmetic average $A_n$ using 
parametric forms, such as log-normal \cite{Levy} or inverse Gamma 
distributions \cite{MP}. 

We note also the more general approach of \cite{Vecer2014} which can be 
applied for a wide class of models. 

Most of the theoretical results in the literature concerning the distribution 
of the time average of the geometric Brownian motion refer to the continuous 
time average.
The discrete sum of the geometric Brownian motion is a particular case of the 
sum of correlated log-normals which has been studied extensively in the 
literature, see \cite{ReviewLit} for an overview.
Dufresne has obtained in \cite{DufresneLN} limit distribution for the discrete
time average in the limit of very small volatility $\sigma \to 0$. 
A recent work by the present authors \cite{IME} studied the properties of the 
discrete time average at fixed $\sigma$ in the limit $n\to \infty$,
and its convergence to the continuous time average as the time step $\tau \to 0$.

In this paper, we concentrate on the discrete time average of the geometric 
Brownian motion, $A_{n}=\frac{1}{n}\sum_{i=1}^{n}S_{t_{i}}$.
We assume the Black-Scholes model, that is, the asset price follows a 
geometric Brownian motion
\begin{equation}
S_t = S_0 e^{\sigma Z_t + (r - q- \frac12\sigma^2) t} \,,
\end{equation}
where $Z_t$ is a standard Brownian motion. 
We would like to study the distributional properties of the average of the 
discretely sampled asset price (\ref{Adef}) defined on the discrete times 
uniformly spaced $t_i =i\tau$ with time step $\tau$.

We will derive in this paper asymptotic results about $A_n$ in the limit 
$n\to \infty$ by keeping fixed the following
combinations of model parameters
\begin{align}
&\beta = \frac12 \sigma^2 t_n n = \frac12 \sigma^2 \tau n^2\label{AssumpI} \,,
\\
&(r - q) \tau n = \rho\,.\label{AssumpII}
\end{align}
Note that $\beta$ is always positive but $\rho$ can be both positive and negative.
We also note that the conditions \eqref{AssumpI} and \eqref{AssumpII} can be
replaced by $\lim_{n\rightarrow\infty}\frac{1}{2}\sigma^{2}\tau n^{2}=\beta$
and $\lim_{n\rightarrow\infty}(r-q)\tau n=\rho$ and all the results
in this paper will still hold. 

The constraints \eqref{AssumpI}, \eqref{AssumpII} include two interesting 
regimes:
\begin{itemize}
\item
When the maturity $t_{n}=\tau n$ is constant, and so are the interest rates $r$ 
and dividend yield $q$, then, \eqref{AssumpI} assumes that the volatility 
$\sigma$ is of the order $O(\frac{1}{\sqrt{n}})$. 
Therefore, the conditions \eqref{AssumpI} and \eqref{AssumpII} include the 
\emph{small volatility} regime.
\item
When the maturity $t_{n}=\tau n$ is small, that is, $t_{n}\rightarrow 0$ as 
$n\rightarrow\infty$ and in particular is of the order $\frac{1}{n}$, then by 
\eqref{AssumpII}, the volatility $\sigma$ is a constant.
If the interest rates $r$ and dividend yield $q$ are constant, then 
\eqref{AssumpII} is replaced by
$\lim_{n\rightarrow\infty}(r-q)\tau n=0$, i.e., $\rho=0$. 
Therefore, the conditions \eqref{AssumpI} and \eqref{AssumpII} include the 
\emph{short maturity} regime.
\end{itemize}
We emphasize that we do not make any assumptions about the values of
$\rho,\beta$, and they can be arbitrary. The validity of our asymptotic 
results require only that $n\gg 1$, such that these regimes cover
most cases of practical interest, provided that the number of averaging times
is sufficiently large. 

We present in this paper three asymptotic results for the distributional 
properties of the discrete time average of a geometric Brownian motion in the 
limit of a large number of averaging time steps $n$: i) almost sure limit and 
fluctuation results for $A_n$, ii) an asymptotic result for the moment generating 
function of the partial sums $nA_n$ for $n\to \infty$, and iii) large deviations
results for $\mathbb{P}(A_{n}\in\cdot)$. Using these asymptotic results, 
we derive rigorously asymptotics for the prices of out-of-the-money, 
in-the-money and at-the-money Asian options. 

Section~\ref{sec:2} presents the almost sure and fluctuations results for $A_n$
in the $n\to \infty$ limit. Section~\ref{sec:3} presents an asymptotic result 
for the Laplace transform of the finite sum of the geometric
Brownian motion sampled on $n$ discrete times $nA_n$, in the limit $n \to \infty$.
In Section~\ref{sec:4} we consider the asymptotics of fixed strike Asian 
options following from the large deviations result iii), and in 
Section~\ref{sec:5} we treat the case of the floating strike Asian options.
These asymptotic results can be used to obtain approximative pricing formulas
for Asian options, and in Section~\ref{sec:6} we compare the numerical 
performance of the asymptotic result against alternative methods for pricing 
Asian options under the BS model. 
Some of the proposed methods are known to be less efficient numerically 
in the small maturity and/or small volatility limit \cite{GemanYor,Linetsky}. 
The asymptotic results derived in this paper are of practical interest as they
complement these approaches in a region where their numerical performance is
not very good. We demonstrate good agreement of our asymptotic results with 
alternative pricing methods for Asian options with realistic values of the 
model parameters.

\section{Asymptotics for the discrete time average of geometric Brownian motion}
\label{sec:2}

We have the almost sure limit:
\begin{proposition}\label{prop1}
We have
\begin{equation}
\lim_{n\to \infty} A_n = A_\infty \equiv
S_0 \frac{1}{\rho} (e^{\rho}-1)\mbox{  a.s.}\,.
\end{equation}
\end{proposition}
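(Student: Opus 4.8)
The plan is to show that the log-returns are asymptotically deterministic \emph{uniformly over the sampling grid}, and then to recognize $A_n$ as a perturbed Riemann sum. Write $X_{n,i} := \log(S_{t_i}/S_0) = \sigma Z_{i\tau} + (r-q-\tfrac12\sigma^2)i\tau$ for $1\le i\le n$, where $\sigma,\tau,r,q$ carry a suppressed dependence on $n$ through \eqref{AssumpI}--\eqref{AssumpII}. Using $(r-q)\tau n = \rho$ and $\tfrac12\sigma^2\tau n^2 = \beta$ one has $(r-q)\,i\tau = \rho\,(i/n)$ and $0 \le \tfrac12\sigma^2\, i\tau = \beta i/n^2 \le \beta/n$, so the deterministic part of $X_{n,i}$ equals $\rho\,(i/n) + O(1/n)$ uniformly in $i\le n$ (and if one only assumes the limiting versions of the hypotheses, one picks up an additional $o(1)$, still uniform since $i/n\le 1$). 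Everything therefore reduces to controlling the stochastic term $\max_{1\le i\le n}|\sigma Z_{i\tau}|$.

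For that term I would estimate $\max_{1\le i\le n}|\sigma Z_{i\tau}| \le \sigma\sup_{0\le t\le t_n}|Z_t|$ and invoke Brownian scaling, $\sup_{0\le t\le t_n}|Z_t| \stackrel{d}{=} \sqrt{t_n}\,\sup_{0\le s\le 1}|Z_s|$, together with $\sigma^2 t_n = \sigma^2\tau n = 2\beta/n$, to get
\[
\sigma\sup_{0\le t\le t_n}|Z_t| \;\stackrel{d}{=}\; \sqrt{2\beta/n}\;M, \qquad M := \sup_{0\le s\le 1}|Z_s|.
\]
By the reflection principle $M$ has a Gaussian tail, so for every $\varepsilon>0$ the probability $\mathbb{P}\big(\sigma\sup_{0\le t\le t_n}|Z_t| > \varepsilon\big) = \mathbb{P}\big(M > \varepsilon\sqrt{n/(2\beta)}\,\big)$ is summable in $n$; Borel--Cantelli, applied along $\varepsilon = 1/k$ and intersected over $k$, yields $\delta_n := \max_{1\le i\le n}\big|X_{n,i} - \rho\,(i/n)\big| \to 0$ almost surely.

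To conclude, compare $A_n$ with the Riemann sum $R_n := \tfrac{S_0}{n}\sum_{i=1}^n e^{\rho i/n}$. Since $|e^y-1|\le e^{|y|}-1$ and $e^{\rho i/n}\le e^{|\rho|}$,
\[
\big|A_n - R_n\big| \;\le\; \frac{S_0}{n}\sum_{i=1}^n e^{\rho i/n}\big|e^{X_{n,i}-\rho i/n}-1\big| \;\le\; S_0 e^{|\rho|}\big(e^{\delta_n}-1\big) \longrightarrow 0 \quad\text{a.s.},
\]
while $R_n \to S_0\int_0^1 e^{\rho x}\,dx = S_0\,\tfrac1\rho(e^\rho-1)$ (interpreted as $S_0$ when $\rho=0$) by convergence of the right-endpoint Riemann sums of the continuous function $x\mapsto e^{\rho x}$. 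Combining the two displays gives the proposition.

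The only genuinely probabilistic ingredient, and the step I expect to demand the most care, is the \emph{almost sure} (not merely in-probability) vanishing of $\max_{1\le i\le n}|\sigma Z_{i\tau}|$: depending on the regime the averaging window $[0,t_n]$ may shrink, stay fixed, or even grow with $n$, so one cannot simply invoke continuity of a single Brownian path on a fixed compact set. The scaling identity collapses all regimes to the single random variable $M$, after which the Gaussian tail bound makes Borel--Cantelli applicable; the remainder of the argument is deterministic and routine.
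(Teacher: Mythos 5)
Your proof is correct and shares the paper's architecture: show that $\max_{1\le i\le n}\bigl|\log(S_{t_i}/S_0)-\rho\,i/n\bigr|\to 0$ a.s., then pass to the deterministic limit $S_0\int_0^1 e^{\rho x}\,dx$. Where you differ is in the treatment of the almost-sure vanishing of $\max_i\sigma|Z_{t_i}|$, and your version is the more careful one. The paper rewrites this quantity as $\sqrt{2\beta/\tau}\cdot\tfrac{1}{n}\max_i|Z_{i\tau}|$ and asserts the second factor tends to $0$ a.s.\ ``from the property of Brownian motion,'' but this decomposition is circular (since $\sqrt{2\beta/\tau}/n=\sigma$) and leaves the $n$-dependence of the prefactor $\sqrt{2\beta/\tau}$ --- which grows like $\sqrt{n}$ in the fixed-maturity regime $\tau=T/n$ --- unaccounted for. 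Your route closes that gap cleanly and works uniformly across the regimes discussed in the introduction: the pathwise bound $\max_i\sigma|Z_{t_i}|\le\sigma\sup_{t\le t_n}|Z_t|$, Brownian scaling to reduce to $\sqrt{2\beta/n}\,M$ in distribution with $M$ having a Gaussian tail, summability of $\mathbb{P}(M>\varepsilon\sqrt{n/(2\beta)})$, and Borel--Cantelli along $\varepsilon=1/k$. You are also right to flag that the scaling relation is only a distributional identity, not a pathwise coupling, so Borel--Cantelli (first lemma, no independence required) is genuinely the needed closing step rather than a convenience. The remaining deterministic bookkeeping --- the bound $|A_n-R_n|\le S_0 e^{|\rho|}(e^{\delta_n}-1)$ and the Riemann-sum limit, which the paper instead evaluates via the geometric series $\frac{1}{n}\frac{e^\rho-1}{1-e^{-\rho/n}}\to\frac{1}{\rho}(e^\rho-1)$ --- matches the paper modulo presentation.
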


\begin{proof}
Note that $\max_{1\leq i\leq n}\sigma|Z_{t_{i}}|
=\max_{1\leq i\leq n}\sqrt{\frac{2\beta}{\tau}}\frac{1}{n}|Z_{i\tau}|$
and from the property of Brownian motion, $\frac{1}{n}\max_{1\leq i\leq n}|Z_{i\tau}|\rightarrow 0$
a.s. as $n\rightarrow\infty$.
Moreover, $\frac{1}{2}\sigma^{2}t_{i}=\frac{\beta i}{n^{2}}\leq\frac{\beta}{n}\rightarrow 0$ as $n\rightarrow\infty$
uniformly in $1\leq i\leq n$.
Therefore, $S_{t_{i}}$ can be approximated by $S_0 e^{(r-q)t_{i}}$ uniformly in $1\leq i\leq n$, 
that is, $\max_{1\leq i\leq n}|S_{t_{i}}-S_0 e^{(r-q)t_{i}}|\rightarrow 0$ a.s. as $n\rightarrow\infty$.
Finally, notice that
\begin{equation}
\frac{1}{n} \sum_{i=1}^{n}e^{(r-q)t_{i}} = 
\frac{1}{n}\sum_{i=1}^{n}e^{\rho \frac{i}{n}} =
\frac{1}{n}\frac{e^{\rho} -1}{1 - e^{-\frac{\rho}{n}}} \to 
\frac{1}{\rho}
(e^{\rho}-1)\,,\quad n \to \infty\,.
\end{equation}
Hence, we proved the desired result.
\end{proof}

We have also the following fluctuation result:
\begin{proposition}\label{prop2}
The time average $A_n$ converges in distribution to a normal
distribution in the $n\to \infty$ limit
\begin{equation}
\lim_{n\to \infty} \sqrt{n}\frac{A_n- A_\infty}{S_0} = 
N\left(0,2\beta v(\rho)\right)\,.
\end{equation}
with
\begin{equation}\label{vdef}
v(a) :=\frac{1}{a^{3}}\left[ae^{2a}-\frac{3}{2}e^{2a}+2e^{a}-\frac{1}{2}\right].
\end{equation}
\end{proposition}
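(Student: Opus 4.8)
The plan is to rescale the Brownian motion so that each summand becomes an explicit function of a fixed Brownian path, and then to expand to first order in $n^{-1/2}$. Using \eqref{AssumpI} one has $\sigma=\frac1n\sqrt{2\beta/\tau}$, and by the Brownian scaling property the vector $(Z_{t_1},\dots,Z_{t_n})$ has the same law as $(\sqrt{t_n}\,B_{1/n},\dots,\sqrt{t_n}\,B_{n/n})$ for a standard Brownian motion $B$ on $[0,1]$; combined with \eqref{AssumpII}, which gives $(r-q-\tfrac12\sigma^2)t_i=\rho\frac in-\beta\frac{i}{n^2}$, this yields
\begin{equation}
\sqrt n\,\frac{A_n-A_\infty}{S_0}\ \stackrel{d}{=}\ \sqrt n\left(\frac1n\sum_{i=1}^n e^{\sqrt{2\beta/n}\,B_{i/n}+\rho i/n-\beta i/n^2}-\frac1\rho\bigl(e^\rho-1\bigr)\right).
\end{equation}
Since only the distribution of the left-hand side matters, I would fix one Brownian motion $B$ and prove that the right-hand side converges almost surely.

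For the expansion, write $x_{n,i}=\sqrt{2\beta/n}\,B_{i/n}-\beta i/n^2$ and use $e^{x}=1+x+\tfrac12x^2+O(|x|^3e^{|x|})$, controlling the remainder uniformly in $i$ via the almost sure bounds $\max_{1\le i\le n}|B_{i/n}|\le\sup_{[0,1]}|B|<\infty$ and $\max_{1\le i\le n}\beta i/n^2\le\beta/n$, the latter already observed in the proof of Proposition~\ref{prop1}; this makes the remainder $O(n^{-3/2})$ uniformly. Collecting the terms, the right-hand side equals
\begin{equation}
\sqrt n\left(\frac1n\sum_{i=1}^n e^{\rho i/n}-\frac1\rho\bigl(e^\rho-1\bigr)\right)+\sqrt{2\beta}\,\frac1n\sum_{i=1}^n e^{\rho i/n}B_{i/n}+\frac{\beta}{\sqrt n}\cdot\frac1n\sum_{i=1}^n e^{\rho i/n}\Bigl(B_{i/n}^2-\tfrac in\Bigr)+O(n^{-1}).
\end{equation}
The quantity in the first parentheses is the error of a right Riemann sum for the smooth function $s\mapsto e^{\rho s}$ and is $O(1/n)$, so the first term is $O(n^{-1/2})$ and vanishes; the third term is $\beta n^{-1/2}$ times a Riemann sum converging almost surely to $\int_0^1 e^{\rho s}(B_s^2-s)\,ds<\infty$ and hence vanishes; and the middle term converges, by the almost sure uniform continuity of $s\mapsto e^{\rho s}B_s$ on $[0,1]$, to $\sqrt{2\beta}\int_0^1 e^{\rho s}B_s\,ds$. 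Thus $\sqrt n\,(A_n-A_\infty)/S_0\to\sqrt{2\beta}\int_0^1 e^{\rho s}B_s\,ds$ in distribution. The main obstacle is precisely this bookkeeping: one must check that every correction beyond the linear term, after multiplication by $\sqrt n$, is $o(1)$ almost surely, and the crude bound on $\sup_{[0,1]}|B|$ together with the smallness of $\sigma^2t_i$ is exactly what makes this go through (the same two facts drive the proof of Proposition~\ref{prop1}).

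It remains to identify the limit law. By the stochastic Fubini theorem, for $\rho\ne0$,
\begin{equation}
\int_0^1 e^{\rho s}B_s\,ds=\int_0^1 e^{\rho s}\!\left(\int_0^s dB_u\right)ds=\int_0^1\frac{e^\rho-e^{\rho u}}{\rho}\,dB_u,
\end{equation}
which is a centered Gaussian random variable (when $\rho=0$ the integral equals $\int_0^1(1-u)\,dB_u$), so the limit is $N\!\left(0,\,2\beta\,\mathrm{Var}\bigl(\int_0^1 e^{\rho s}B_s\,ds\bigr)\right)$. Finally,
\begin{equation}
\mathrm{Var}\left(\int_0^1 e^{\rho s}B_s\,ds\right)=\frac1{\rho^2}\int_0^1\bigl(e^\rho-e^{\rho u}\bigr)^2\,du=\int_0^1\!\!\int_0^1 e^{\rho(s+t)}\min(s,t)\,ds\,dt=2\int_0^1 e^{\rho t}\!\int_0^t s\,e^{\rho s}\,ds\,dt,
\end{equation}
and carrying out this elementary iterated integral produces exactly $v(\rho)$ from \eqref{vdef} (with $v(0)=\tfrac13$ recovered in the limit $\rho\to0$), which completes the proof.
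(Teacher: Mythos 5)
Your proof is correct and arrives at the same variance integral $\frac{1}{\rho^2}\int_0^1(e^\rho-e^{\rho u})^2\,du$ as the paper, but by a genuinely different route. The paper stays at the discrete level: writing $B_i=\sum_{j<i}V_j$ with i.i.d.\ $V_j\sim N(0,1)$, it isolates the \emph{exactly Gaussian} linear term $\frac{\sqrt{2\beta}}{n^{3/2}}\sum_i e^{\rho i/n}B_i$, computes its variance by a summation-by-parts rearrangement, and shows separately that the Taylor remainder $\xi_n$ tends to $0$ \emph{in probability}, using $e^x-1-x\ge 0$ together with Markov's inequality for an upper bound and $e^x\ge 1+x$ for a lower bound. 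You instead use Brownian scaling to couple the whole sequence to a single Brownian path $B$ on $[0,1]$, expand to second order with a cubic remainder, and prove that each correction beyond the leading linear term vanishes \emph{almost surely} along this coupling; the crude bound $\sup_{[0,1]}|B|<\infty$ then handles the remainder uniformly in $i$ with no probabilistic estimates needed. The leading term becomes a Riemann sum converging a.s.\ to $\sqrt{2\beta}\int_0^1 e^{\rho s}B_s\,ds$, whose Gaussian law and variance you read off via the stochastic Fubini/It\^{o}-isometry identity $\int_0^1 e^{\rho s}B_s\,ds=\int_0^1\frac{e^\rho-e^{\rho u}}{\rho}\,dB_u$ — a continuous analogue of the paper's discrete rearrangement. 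Your approach buys a stronger mode of convergence (a.s.\ along the coupling rather than merely in probability for the error) and a slightly slicker variance identification; the paper's approach is more elementary in that it avoids invoking Brownian scaling and stochastic Fubini and works purely with sums of normal random variables.
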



\begin{proof}
We have
\begin{align}
&\sqrt{n}\frac{A_n- A_\infty}{S_0} 
\\
&=\frac{1}{\sqrt{n}}
\sum_{i=1}^n (e^{\sigma Z_i  + (r-q- \frac12 \sigma^2) t_i} 
- e^{\rho \frac{i}{n}})
+\left[\frac{1}{\sqrt{n}}\sum_{i=1}^{n}e^{\rho \frac{i}{n}}-\sqrt{n}\frac{e^{\rho}-1}{\rho}\right]
\\
&=\frac{1}{\sqrt{n}} 
\sum_{i=1}^n e^{\rho \frac{i}{n}}
(e^{\frac{\sqrt{2\beta}}{n} B_i 
- \beta\frac{i}{n^2}} - 1) 
+\left[\frac{1}{\sqrt{n}}\sum_{i=1}^{n}e^{\rho \frac{i}{n}}-\sqrt{n}\frac{e^{\rho}-1}{\rho}\right],
\label{TwoTerms}
\end{align}
where $Z_i = \sqrt{\tau} B_i$ with $B_i$ a standard Brownian motion.
We can rewrite the second term in \eqref{TwoTerms} as
\begin{align}
\frac{1}{\sqrt{n}}\sum_{i=1}^{n}e^{\rho \frac{i}{n}}-\sqrt{n}\frac{e^{\rho}-1}{\rho}
&=\frac{1}{\sqrt{n}}\frac{e^{\rho}-1}{1-e^{-\frac{\rho}{n}}}-\sqrt{n}\frac{e^{\rho}-1}{\rho}
\\
&=(e^{\rho}-1)\frac{1}{\sqrt{n}}
\left[\frac{1}{\frac{\rho}{n}-\frac{1}{2}\frac{\rho^{2}}{n^{2}}+O(n^{-3})}-
\frac{n}{\rho}\right]
\nonumber
\\
&=(e^{\rho}-1)\frac{1}{\sqrt{n}}\frac{\frac{1}{2}\frac{\rho^{2}}{n}+O(n^{-2})}
{\rho(\frac{\rho}{n}-\frac{1}{2}\frac{\rho^{2}}{n^{2}}+O(n^{-3}))}
\rightarrow 0\,,
\nonumber
\end{align}
as $n\rightarrow\infty$.

The first term in \eqref{TwoTerms} can be written further as
\begin{eqnarray}\label{8}
&& \frac{1}{\sqrt{n}} 
\sum_{i=1}^n e^{\rho \frac{i}{n}}
(e^{\frac{\sqrt{2\beta}}{n} B_i - \beta\frac{i}{n^2}} - 1) =
\frac{1}{\sqrt{n}} 
\sum_{i=1}^n e^{\rho \frac{i}{n}}
\frac{\sqrt{2\beta}}{n} B_i + \xi_n \,,
\end{eqnarray}
where we defined
\begin{equation}
\xi_n \equiv \frac{1}{\sqrt{n}}
\sum_{i=1}^n e^{\rho \frac{i}{n}}
\Big(e^{\frac{\sqrt{2\beta}}{n} B_i 
- \beta\frac{i}{n^2}} 
- \frac{\sqrt{2\beta}}{n} B_i- 1\Big) \,.
\end{equation}
We claim that $\xi_{n}\rightarrow 0$ in probability as $n\rightarrow\infty$.

We have the following upper bound on $\xi_n$.
\begin{equation}
\xi_n \leq \frac{1}{\sqrt{n}}
\sum_{i=1}^n e^{\rho \frac{i}{n}}
\left(e^{\frac{\sqrt{2\beta}}{n} B_i} 
- \frac{\sqrt{2\beta}}{n} B_i- 1\right) \equiv \xi_n^{\rm (up)} \,.
\end{equation}
The upper bound $\xi_n^{\rm (up)}$ is a non-negative random variable 
since $e^x-1-x \geq 0$ for any real $x$.
The expectation of $\xi_n^{\rm (up)}$ can be computed exactly
\begin{align}
\mathbb{E}[\xi_n^{\rm (up)}]
&=\frac{1}{\sqrt{n}}
\sum_{i=1}^n e^{\rho \frac{i}{n}}
(e^{\frac{\beta}{n^2}i} - 1) 
\\
&=\frac{1}{\sqrt{n}} 
\left( \frac{e^{\rho + \frac{\beta}{n}}-1}
{1 - e^{-\frac{\rho}{n} - \frac{\beta}{n^2}}} -
\frac{e^{\rho}-1}
{1 - e^{-\frac{\rho}{n}}}\right) 
=\frac{1}{\sqrt{n}}
\left( \frac{\beta}{\rho} + o(1/n) \right)\,.
\nonumber
\end{align}
This goes to zero as $n\to \infty$. The Markov inequality 
implies that $\xi_n^{\rm (up)} \to 0$ in probability as $n\rightarrow\infty$. 

Next, let us estimate the lower bound on $\xi_n$. We have
\begin{align}
\xi_n 
&\geq\frac{1}{\sqrt{n}} 
\sum_{i=1}^n e^{\rho \frac{i}{n}}
\left(e^{\frac{\sqrt{2\beta}}{n} B_i - \frac{\beta}{n}} 
-\left(\frac{\sqrt{2\beta}}{n}B_{i}-\frac{\beta}{n}\right)-1- \frac{\beta}{n}\right) 
\\
&\geq \frac{1}{\sqrt{n}} 
\sum_{i=1}^n e^{\rho \frac{i}{n}}\left(- \frac{\beta}{n}\right) 
= - \frac{\beta}{\sqrt{n}} \frac{e^{\rho}-1}{n(1-e^{-\frac{\rho}{n}})} 
\to 0\,, \nonumber
\end{align}
where we used again in the second step the inequality 
$e^{x} \geq 1 + x$.

The first term in (\ref{8}) is a normal random variable 
and converges in distribution to a normal distribution
with mean zero and variance to be determined.
\begin{equation}
\frac{1}{\sqrt{n}} \sum_{i=1}^n e^{\rho \frac{i}{n}}
\frac{\sqrt{2\beta}}{n} B_i \to N\left(0,2\beta v(\rho)\right)\,.
\end{equation}

This can be computed by writing $B_i = \sum_{j=0}^{i-1} V_j$
with $V_j \sim N(0,1)$ i.i.d. normally distributed random variables with mean
zero and unit variance. The sum can be written as
\begin{align}
\frac{1}{\sqrt{n}} \sum_{i=1}^n e^{\rho \frac{i}{n}}
\frac{\sqrt{2\beta}}{n} B_i 
&=\frac{\sqrt{2\beta}}{n^{3/2}}
\sum_{j=0}^{n-1} V_j \sum_{i=j+1}^{n} e^{\rho \frac{i}{n}} 
\\
&=\frac{\sqrt{2\beta}}{n^{3/2}}
\sum_{j=0}^{n-1} V_j \frac{1}{e^{\rho\frac{1}{n}}-1} 
\left\{ e^{\rho \frac{n+1}{n}}
- e^{\rho \frac{j+1}{n}} \right\}\,.\nonumber
\end{align}
We can compute the variance of this random variable as
\begin{align}
&\mbox{Var}\left(\frac{1}{\sqrt{n}} 
\sum_{i=1}^n e^{\rho \frac{i}{n}}\frac{\sqrt{2\beta}}{n} B_i \right)
=\frac{2\beta}{n^{3}}
\sum_{j=0}^{n-1} \frac{1}{(e^{\rho\frac{1}{n}}-1)^{2}} 
\left(e^{\rho \frac{n+1}{n}}
- e^{\rho \frac{j+1}{n}} \right)^{2} \\
&=2\beta\frac{1}{n^{2}(e^{\rho\frac{1}{n}}-1)^{2}} 
\sum_{j=0}^{n-1} 
\left(e^{\rho \frac{n+1}{n}}
- e^{\rho \frac{j+1}{n}} \right)^{2}\frac{1}{n}
\rightarrow
\frac{2\beta}{\rho^{2}}\int_{0}^{1}(e^{\rho}-e^{\rho x})^{2}dx,
\nonumber
\end{align}
as $n\rightarrow\infty$, where we can compute that
\begin{equation}
\frac{2\beta}{\rho^{2}}\int_{0}^{1}(e^{\rho}-e^{\rho x})^{2}dx
=\frac{2\beta}{\rho^{3}}\left[\rho e^{2\rho}-\frac{3}{2}e^{2\rho}+2e^{\rho}-\frac{1}{2}\right].
\end{equation}
\end{proof}


\section{Moment generating function}
\label{sec:3}

Define the moment generating function of $n A_n$ as
\begin{equation}
F_n(\theta) := \mathbb{E}[e^{\theta n A_n}] \,.
\end{equation}
For $\theta < 0$, this is the Laplace transform of the distribution 
function of $n A_n$. 

We are interested in the limit $\lim_{n\rightarrow\infty}\frac{1}{n}\log F_{n}(\theta)$.
We will compute this limit using the theory of large deviations.
Before we proceed, recall that a sequence $(P_{n})_{n\in\mathbb{N}}$ of probability measures on a topological space $X$ 
satisfies the large deviation principle with rate function $I:X\rightarrow\mathbb{R}$ if $I$ is non-negative, 
lower semicontinuous and for any measurable set $A$, we have
\begin{equation}
-\inf_{x\in A^{o}}I(x)\leq\liminf_{n\rightarrow\infty}\frac{1}{n}\log P_{n}(A)
\leq\limsup_{n\rightarrow\infty}\frac{1}{n}\log P_{n}(A)\leq-\inf_{x\in\overline{A}}I(x).
\end{equation}
Here, $A^{o}$ is the interior of $A$ and $\overline{A}$ is its closure. 
The rate function $I$ is said to be good if for any $m$, the level set $\{x:I(x)\leq m\}$ is compact.
We refer to Dembo and Zeitouni \cite{Dembo} or Varadhan \cite{VaradhanII} for general background of large deviations and the applications. 

We have the following limit theorem for the generating function in the limit 
$n\to \infty$ at fixed $\beta$.

\begin{theorem}\label{theorem3}
For any $\theta>0$, $F_{n}(\theta)=\infty$ and for any $\theta\leq 0$,
\begin{align}
\lim_{n\rightarrow\infty}\frac{1}{n} \log F_n(\theta) 
=\sup_{g\in\mathcal{AC}_{0}[0,1]}
\left\{\theta S_{0}\int_{0}^{1}e^{\sqrt{2\beta}g(x)}dx
-\frac{1}{2}\int_{0}^{1}\left(g'(x)-\frac{\rho}{\sqrt{2\beta}}\right)^{2}dx
\right\}\,.
\end{align}
\end{theorem}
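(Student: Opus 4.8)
The plan is to read Theorem~\ref{theorem3} as a sample-path large deviations statement and then invoke Varadhan's integral lemma. The case $\theta>0$ is immediate: since $nA_n\ge S_{t_1}=S_0e^{\sigma Z_{t_1}+(r-q-\frac12\sigma^2)t_1}$, we have $e^{\theta nA_n}\ge\exp(\theta S_0e^{\sigma Z_{t_1}+c})$ with $c$ a deterministic constant, and the expectation of a double exponential of a Gaussian is infinite, so $F_n(\theta)=\infty$. From now on assume $\theta\le0$, for which $e^{\theta nA_n}\le1$ and $F_n(\theta)\in(0,1]$; all the content is in identifying the $\frac1n\log$ limit.

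The essential reformulation uses the scaling relations $\sigma^2\tau=2\beta/n^2$ and $(r-q)\tau=\rho/n$ to write, with $\eta_1,\eta_2,\dots$ i.i.d.\ standard normal so that $\sigma Z_{t_i}=\frac{\sqrt{2\beta}}{n}\sum_{j\le i}\eta_j$,
\begin{equation}
nA_n=S_0\sum_{i=1}^n\exp\!\Big(\sqrt{2\beta}\,g_n(i/n)-\tfrac{\beta i}{n^2}\Big),
\qquad
g_n(i/n)=\frac1n\sum_{j=1}^i\Big(\eta_j+\frac{\rho}{\sqrt{2\beta}}\Big).
\end{equation}
Let $g_n\in C_0[0,1]$ be the polygonal interpolation of the grid values $g_n(i/n)$. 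Then $g_n$ is precisely the rescaled partial-sum process of the i.i.d.\ increments $\eta_j+\rho/\sqrt{2\beta}\sim N(\rho/\sqrt{2\beta},1)$, whose logarithmic moment generating function $\Lambda(\lambda)=\frac{\rho}{\sqrt{2\beta}}\lambda+\frac12\lambda^2$ is finite for all $\lambda$. By Mogulskii's theorem (Dembo--Zeitouni \cite{Dembo}, Theorem~5.1.2), $g_n$ satisfies an LDP in $(C_0[0,1],\|\cdot\|_\infty)$ with speed $n$ and good rate function $\int_0^1\Lambda^*(g'(x))\,dx=\frac12\int_0^1\big(g'(x)-\frac{\rho}{\sqrt{2\beta}}\big)^2dx$ for $g\in\mathcal{AC}_0[0,1]$, and $+\infty$ otherwise (the quadratic, hence superlinear, $\Lambda^*$ forces the rate to be $+\infty$ off $\mathcal{AC}_0[0,1]$). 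This is already the rate function appearing in the theorem.

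Next I would transfer the LDP to $A_n$. The functional $\Phi(g):=S_0\int_0^1e^{\sqrt{2\beta}g(x)}dx$ is $\|\cdot\|_\infty$-continuous on $C_0[0,1]$ (dominated convergence on bounded balls), so by the contraction principle $\Phi(g_n)$ satisfies an LDP on $[0,\infty)$ with speed $n$ and good rate $\mathcal I(a)=\inf\{I(g):\Phi(g)=a\}$. One then checks that $A_n$ and $\Phi(g_n)$ are exponentially equivalent at speed $n$: $A_n-\Phi(g_n)$ is the sum of the $e^{-\beta i/n^2}$ correction, bounded by $\frac{\beta}{n}e^{\sqrt{2\beta}\|g_n\|_\infty}$, and the right-endpoint Riemann-sum error $\frac1n\sum_ie^{\sqrt{2\beta}g_n(i/n)}-\int_0^1e^{\sqrt{2\beta}g_n}$, which is at most $\sqrt{2\beta}\,e^{\sqrt{2\beta}\|g_n\|_\infty}$ times the maximal oscillation of $g_n$ over a mesh interval, itself $\le\frac1n\max_j|\eta_j+\rho/\sqrt{2\beta}|$. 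Hence $A_n$ inherits the LDP of $\Phi(g_n)$ with the same good rate $\mathcal I$, and since $\theta\le0$ makes $a\mapsto\theta a$ continuous and bounded above on $[0,\infty)$, the moment condition in Varadhan's integral lemma is trivially satisfied and
\begin{equation}
\lim_{n\to\infty}\frac1n\log F_n(\theta)=\sup_{a\ge0}\{\theta a-\mathcal I(a)\}=\sup_{g\in\mathcal{AC}_0[0,1]}\{\theta\Phi(g)-I(g)\},
\end{equation}
which, unwinding the definitions of $\Phi$ and $I$, is exactly the asserted identity.

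The main obstacle is the exponential equivalence: both error terms above, though pointwise small, carry the factor $e^{\sqrt{2\beta}\|g_n\|_\infty}$, whose tail decays only exponentially in $n$ rather than superexponentially. The remedy is the usual exhaustion of $C_0[0,1]$ by large balls $\{\|g_n\|_\infty\le M\}$: on such a ball the error probability is superexponentially small, because $\max_j|\eta_j+\rho/\sqrt{2\beta}|$ has Gaussian tails, so $\mathbb{P}(\frac1n\max_j|\eta_j+\rho/\sqrt{2\beta}|>\varepsilon)\le 2ne^{-cn^2\varepsilon^2}$; off the ball, $\mathbb{P}(\|g_n\|_\infty>M)\le Ce^{-cnM^2}$ by a maximal inequality; letting $M\to\infty$ at the end kills the remainder and yields $\limsup_n\frac1n\log\mathbb{P}(|A_n-\Phi(g_n)|>\delta)=-\infty$ for every $\delta>0$. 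The remaining points — continuity of $\Phi$, goodness of the contracted rate, applicability of Mogulskii's theorem to a drifted partial-sum process via the Legendre transform $\Lambda^*$, and the elementary bookkeeping in the two error bounds — are routine.
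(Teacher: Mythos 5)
Your proposal is correct and rests on the same two pillars as the paper's proof: Mogulskii's theorem for the sample-path LDP of the rescaled random walk, followed by Varadhan's integral lemma (whose superexponential moment condition holds trivially because $\theta\le 0$ forces $\theta A_n\le 0$). The $\theta>0$ case and the identification of the rate functional $\frac12\int_0^1(g'-\rho/\sqrt{2\beta})^2$ are the same in both.

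Where you diverge from the paper is in the technical plumbing around the discretization error, and it is worth noting what each choice buys. The paper works with the \emph{piecewise-constant} interpolation $g(x)=\frac1n\sum_{j\le\lfloor xn\rfloor}(V_j+\rho/\sqrt{2\beta})$, for which $\int_0^1 e^{\sqrt{2\beta}g(x)}\,dx$ equals the discrete sum \emph{exactly}; there is no Riemann-sum error at all. The only remaining discrepancy with $nA_n$ is the factor $e^{-\beta k/n^2}$, which the paper disposes of by a monotone sandwich: since $e^{-\beta/n}\le e^{-\beta k/n^2}\le 1$ and $\theta\le 0$, one squeezes $F_n(\theta)$ between two expectations to which Varadhan applies directly in $L^\infty[0,1]$. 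You instead use the \emph{polygonal} interpolation, which is the cleaner object for Theorem~5.1.2 of Dembo--Zeitouni but reintroduces a Riemann-sum error; you then contract to an LDP for $\Phi(g_n)$ in $\mathbb{R}$, establish exponential equivalence with $A_n$ by exhausting path space with balls $\{\|g_n\|_\infty\le M\}$ (correctly noting that the prefactor $e^{\sqrt{2\beta}\|g_n\|_\infty}$ must be tamed this way, since the oscillation bound alone only decays at exponential speed), and finally apply Varadhan on $\mathbb{R}$. Your route is more modular and makes the LDP for $A_n$ itself explicit — which the paper also needs and proves separately in Proposition~\ref{LDPProp} — but at the cost of the ball-exhaustion argument; the paper's route avoids exponential equivalence entirely by exploiting the exact identity and the sign of $\theta$. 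Both are complete proofs of the theorem. The one small gap you left implicit, which the paper's squeeze also handles silently, is that at the last step the $\sup$-expression must be shown continuous in the multiplicative constant ($\theta S_0$ vs.\ $\theta S_0 e^{-\beta/n}$); in your formulation that issue does not even arise, which is a minor advantage of the exponential-equivalence route.
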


\begin{proof}
Since $\mathbb{E}[e^{\theta X}]=\infty$ for any $\theta>0$ for any log-normal random variable $X$, 
it is clear that $\mathbb{E}[e^{\theta nA_{n}}]=\infty$ for any $\theta>0$.
Next, for any $\theta\leq 0$,
\begin{align}
\mathbb{E}[e^{\theta nA_{n}}]
&=\mathbb{E}\left[e^{\theta\sum_{k=0}^{n-1}S_{0}
e^{\sigma Z_{t_{k}} + (r-q-\frac{1}{2}\sigma^{2})t_{k}}}\right]
\\
&=\mathbb{E}\left[e^{\theta S_{0}\sum_{k=0}^{n-1}
e^{\sigma\sqrt{\tau}\sum_{j=1}^{k}V_{j} + (r-q-\frac{1}{2}\sigma^{2})k\tau}}\right]
\nonumber
\\
&=\mathbb{E}\left[e^{\theta S_{0}\sum_{k=0}^{n-1}
e^{\frac{\sqrt{2\beta}}{n}\sum_{j=1}^{k}V_{j} + \frac{\rho k}{n}-\frac{\beta}{n^{2}}k}}\right]
\nonumber
\\
&=\mathbb{E}\left[e^{\theta S_{0}\sum_{k=0}^{n-1}
e^{\frac{\sqrt{2\beta}}{n}\sum_{j=1}^{k}(V_{j} + \frac{\rho}{\sqrt{2\beta}})-\frac{\beta}{n^{2}}k}}\right],
\nonumber
\end{align}
where $V_{j}:=\frac{1}{\sqrt{\tau}}(Z_{j}-Z_{j-1})$, $1\leq j\leq k$, are i.i.d. $N(0,1)$ random variables.
Note that $\sum_{j=1}^{0}V_{j}$ is defined as $0$.
By Mogulskii theorem, see e.g. \cite{Dembo}, 
$\mathbb{P}(\frac{1}{n}\sum_{j=1}^{\lfloor\cdot n\rfloor}(V_{j}+\frac{\rho}{\sqrt{2\beta}})\in\cdot)$
satisfies a large deviation principle on $L^{\infty}[0,1]$ with the good rate function
\begin{equation}
I(g)=\int_{0}^{1}\Lambda(g'(x))dx,
\end{equation}
if $g\in\mathcal{AC}_{0}[0,1]$, i.e., absolutely continuous and $g(0)=0$ 
and $I(g)=+\infty$ otherwise, where
\begin{equation}
\Lambda(x):=\sup_{\theta\in\mathbb{R}}
\left\{\theta x-\log\mathbb{E}\left[e^{\theta(V_{1}+\frac{\rho}{\sqrt{2\beta}})}\right]\right\}
=\frac{1}{2}\left(x-\frac{\rho}{\sqrt{2\beta}}\right)^{2}.
\end{equation}
Let $g(x):=\frac{1}{n}\sum_{j=1}^{\lfloor xn\rfloor}(V_{j}+\frac{\rho}{\sqrt{2\beta}})$. 
Then, 
\begin{equation}
\int_{0}^{1}e^{\sqrt{2\beta}g(x)}dx
=\sum_{k=0}^{n-1}\int_{\frac{k}{n}}^{\frac{k+1}{n}}e^{\sqrt{2\beta}g(x)}dx
=\frac{1}{n}\sum_{k=0}^{n-1}e^{\frac{\sqrt{2\beta}}{n}\sum_{j=1}^{k}(V_{j}+\frac{\rho}{\sqrt{2\beta}})}.
\end{equation}
Moreover, we claim that
\begin{equation}
g\mapsto\int_{0}^{1}e^{\sqrt{2\beta}g(x)}dx
\end{equation}
is a continuous map. 
Let $g_{n}$ be any sequence in $L^{\infty}[0,1]$ so that
$g_{n}\rightarrow g$ in $L^{\infty}[0,1]$. 
Observe that for any $|x|\leq\frac{1}{2}$.
\begin{align}
|e^{x}-1| = \left|x+\frac{x^{2}}{2!}+\frac{x^{3}}{3!}+\cdots\right|
\leq |x|(1+|x|+|x|^{2}+\cdots) \leq 2|x|.
\end{align}
Let $n$ be sufficiently large so that $\sqrt{2\beta}\Vert g_{n}-g\Vert_{L^{\infty}[0,1]}\leq\frac{1}{2}$. 
Therefore, we have
\begin{align}
&\left|\int_{0}^{1}e^{\sqrt{2\beta}g_{n}(x)}dx
-\int_{0}^{1}e^{\sqrt{2\beta}g(x)}dx\right|
=\left|\int_{0}^{1}e^{\sqrt{2\beta}g(x)}\left(e^{\sqrt{2\beta}(g_{n}(x)-g(x))}
-1\right)dx\right|
\nonumber
\\
&\leq e^{\sqrt{2\beta}\Vert g\Vert_{L^{\infty}[0,1]}}
\int_{0}^{1}\left|e^{\sqrt{2\beta}(g_{n}(x)-g(x))}
-1\right|dx
\leq 2\sqrt{2\beta}e^{\sqrt{2\beta}\Vert g\Vert_{L^{\infty}[0,1]}}\Vert g_{n}-g\Vert_{L^{\infty}[0,1]}
\nonumber
\end{align}
which converges to $0$ as $n\rightarrow\infty$. Hence the map is continuous. 
Let us recall the celebrated Varadhan's lemma from large deviations theory, see e.g. \cite{Dembo}.
if $\mathbb{P}(Z_{n}\in\cdot)$ satisfies a large deviation
principle with good rate function $I:\mathcal{X}\rightarrow[0,+\infty]$, and if $\phi$ is a continuous map and
\begin{equation}\label{superexp}
\lim_{M\rightarrow+\infty}\limsup_{n\rightarrow\infty}
\frac{1}{n}\log\mathbb{E}\left[e^{n\phi(Z_{n})}1_{\phi(Z_{n})\geq M}\right]=-\infty,
\end{equation}
then
\begin{equation*}
\lim_{n\rightarrow\infty}\frac{1}{n}\log
\mathbb{E}[e^{n\phi(Z_{n})}]=\sup_{x\in\mathcal{X}}\{\phi(x)-I(x)\}.
\end{equation*}
In our case, 
\begin{equation*}
\phi(g)=\theta S_{0}\int_{0}^{1}e^{\sqrt{2\beta}g(x)}dx
\end{equation*}
is a continuous map. Moreover, for $\theta\leq 0$, $\phi(g)\leq 0$
and thus the condition \eqref{superexp} is trivially satisfied. 
Hence we can apply the Varadhan's lemma and get,
\begin{align}
&\lim_{n\rightarrow\infty}\frac{1}{n}\log\mathbb{E}\left[e^{\theta S_{0}\sum_{k=0}^{n-1}
e^{\frac{\sqrt{2\beta}}{n}\sum_{j=1}^{k}(V_{j}+\frac{\rho}{\sqrt{2\beta}})}}\right]
\\
&=\sup_{g\in\mathcal{AC}_{0}[0,1]}\left\{\theta S_{0}\int_{0}^{1}e^{\sqrt{2\beta}g(x)}dx
-\frac{1}{2}\int_{0}^{1}\left(g'(x)-\frac{\rho}{\sqrt{2\beta}}\right)^{2}dx\right\}.
\nonumber
\end{align}
Finally, notice that
\begin{align}
\mathbb{E}\left[e^{\theta S_{0}\sum_{k=0}^{n-1}
e^{\frac{\sqrt{2\beta}}{n}\sum_{j=1}^{k}(V_{j}+\frac{\rho}{\sqrt{2\beta}})}}\right]
&\leq \mathbb{E}[e^{\theta nA_{n}}]
\\
&\leq\mathbb{E}\left[e^{\theta S_{0}e^{-\frac{\beta}{n}}\sum_{k=0}^{n-1}
e^{\frac{\sqrt{2\beta}}{n}\sum_{j=1}^{k}(V_{j}+\frac{\rho}{\sqrt{2\beta}})}}\right].
\nonumber
\end{align}
Hence, for any $\theta\leq 0$,
\begin{align}
&\lim_{n\rightarrow\infty}\frac{1}{n}\log\mathbb{E}[e^{\theta nA_{n}}]
\\
&=\sup_{g\in\mathcal{AC}_{0}[0,1]}\left\{\theta S_{0}\int_{0}^{1}e^{\sqrt{2\beta}g(x)}dx
-\frac{1}{2}\int_{0}^{1}\left(g'(x)-\frac{\rho}{\sqrt{2\beta}}\right)^{2}dx\right\}.
\nonumber
\end{align}
\end{proof}

\subsection{Solution of the Variational Problem}
The variational problem in Theorem~\ref{theorem3} can be re-stated as
\begin{equation}
\lim_{n\to \infty} \frac{1}{n}\log F_n(\theta) = 
\lambda(-\theta S_0,\sqrt{2\beta};\rho)
\end{equation}
where $\lambda(a,b;\rho)$ is the solution of the variational problem
\begin{equation}\label{varprob40}
\lambda(a,b;\rho) =\sup_{g\in\mathcal{AC}_{0}[0,1]}
\left\{ - a\int_{0}^{1}e^{bg(x)}dx
-\frac{1}{2}\int_{0}^{1}\left(g'(x)-\frac{\rho}{b}\right)^{2}dx
\right\}\,.
\end{equation}
Here we have $a,b>0$.

This variational problem can be solved explicitly,
and the solution is given by the following result.


\begin{proposition}\label{VarProblem}
The function $\lambda(a,b;\rho)$ is given by one of the two expressions
\begin{align}\label{lam1}
\lambda_1(a,b;\rho)
&= a \left\{ 1 + \sinh^2\left(\frac{\delta}{2}\right) \left( 1 - \frac{4\rho}{\delta^2}
+ \frac{\rho^2}{\delta^2} \right) - \frac{2-\rho}{\delta} \sinh \delta \right\}\\
&
\qquad
+ \frac{2}{b^2}\rho \log 
\left[\cosh\left(\frac{\delta}{2}\right) + \frac{\rho}{\delta} \sinh\left(\frac{\delta}{2}\right) \right]
- \frac{\rho^2}{b^2}, \nonumber 
\end{align}
or 
\begin{align}\label{lam2}
\lambda_2(a,b;\rho) &=\
a \left\{ 1 - \sin^2 \xi \left(1 + \frac{\rho}{\xi^2} - 
\frac{\rho^2}{4\xi^2}\right) + \frac{\rho-2}{2\xi} \sin(2\xi)
\right\} \\
&
\qquad
+\frac{2\rho}{b^2} \log \left[\cos \xi + \frac{\rho}{2\xi}\sin\xi\right]
- \frac{\rho^2}{b^2} \,. \nonumber
\end{align}
In (\ref{lam1}) $\delta$ is the solution of the equation
\begin{equation}\label{deltaeq0}
\rho^2 - \delta^2 = 2ab^2 \left( \cosh\left(\frac12\delta\right) + \frac{\rho}{\delta}
\sinh\left(\frac12\delta\right) \right)^2\,,
\end{equation}
and in (\ref{lam2}) $\xi$ is the unique solution 
$\xi \in (0,\xi_{\rm max})$ of the equation
\begin{equation}\label{lambdaeq0}
2\xi^2 (4\xi^2 + \rho^2) = ab^2 
(2\xi\cos\xi + \rho \sin\xi)^2 \,.
\end{equation}
$\xi_{\rm max}$ is the smallest solution of the equation 
$\tan \xi_{\rm max} = - 2\xi_{\rm max}/\rho$.

For given $(a>0,b,\rho)$ only one of the two equations \eqref{deltaeq0} and 
\eqref{lambdaeq0} has a solution, such
that the solution of the variational problem is unique.
\end{proposition}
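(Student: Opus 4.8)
The plan is to treat \eqref{varprob40} as a strictly concave problem in the calculus of variations whose optimizer solves an explicitly integrable Euler--Lagrange equation of Liouville type.

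First I would note that, for $a>0$, the functional $J[g]=-a\int_0^1 e^{bg}\,dx-\tfrac12\int_0^1(g'-\rho/b)^2\,dx$ is $\le 0$, while the competitor $g(x)=\tfrac{\rho}{b}x$ gives $J=-a\tfrac{e^\rho-1}{\rho}>-\infty$, so $\lambda(a,b;\rho)\in(-\infty,0]$. Since $t\mapsto-ae^{bt}$ and $s\mapsto-\tfrac12(s-\rho/b)^2$ are strictly concave, $J$ is strictly concave on $\mathcal{AC}_0[0,1]$ and hence has at most one maximizer, which is then its unique critical point. Existence follows by the direct method: along a maximizing sequence $\int(g_n')^2$ is bounded, so $(g_n)$ is bounded in $H^1$ because $g_n(0)=0$; a weak $H^1$-limit $g^*$ is a maximizer since $g_n\to g^*$ strongly in $C[0,1]$ by the compact embedding $H^1\hookrightarrow C[0,1]$ (so the exponential term passes to the limit) while the quadratic gradient term is weakly lower semicontinuous. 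Bootstrapping gives $g^*\in C^\infty(0,1)$, and the first variation with the free endpoint at $x=1$ yields
\[
g^{*\prime\prime}=ab\,e^{bg^*}\ \text{on}\ (0,1),\qquad g^*(0)=0,\qquad g^{*\prime}(1)=\tfrac{\rho}{b}.
\]

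Next I would integrate this boundary value problem. With $u=bg^*$ it becomes $u''=ab^2e^u$, $u(0)=0$, $u'(1)=\rho$; multiplying by $u'$ gives the first integral $(u')^2=2ab^2e^u+C$, and the substitution $v:=e^{-u/2}>0$ linearises it to $(v')^2=\tfrac{C}{4}v^2+\tfrac{ab^2}{2}$ with $v(0)=1$ and $v'(1)/v(1)=-\rho/2$. Depending on the sign of $C$ the solution is $v(x)=\cosh(\tfrac{\delta}{2}x)+B\sinh(\tfrac{\delta}{2}x)$ with $\delta=\sqrt{C}$, or $v(x)=\cos(\xi x)+B\sin(\xi x)$ with $\xi=\tfrac12\sqrt{-C}$, or a linear $v$ when $C=0$. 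The condition $v'(1)/v(1)=-\rho/2$ fixes $B$, and demanding that $v$ actually solve $(v')^2=\tfrac{C}{4}v^2+\tfrac{ab^2}{2}$ (using $v(0)=1$) gives $\tfrac{\delta^2}{4}(B^2-1)=\tfrac{ab^2}{2}$ in the hyperbolic case and $\xi^2(1+B^2)=\tfrac{ab^2}{2}$ in the trigonometric case; substituting $B$ and simplifying with the identities $(2m\cos m+\rho\sin m)^2+(2m\sin m-\rho\cos m)^2=4m^2+\rho^2$ and $(2m\cosh m+\rho\sinh m)^2-(2m\sinh m+\rho\cosh m)^2=4m^2-\rho^2$ reduces these respectively to \eqref{lambdaeq0} and \eqref{deltaeq0}. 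The value is then read off from $\lambda=J[g^*]=-a\int_0^1 e^u\,dx-\tfrac{1}{2b^2}\int_0^1(u'-\rho)^2\,dx$: inserting $(u')^2=2ab^2e^u+C$ and $\int_0^1 u'\,dx=u(1)$ collapses everything to $\int_0^1 e^u\,dx$, $u(1)$ and $C$, and evaluating these on the closed form — where one finds e.g. $v(1)=2m/(2m\cosh m+\rho\sinh m)$, $\int_0^1 e^u\,dx=\sinh m/(m\,v(1))$ in the hyperbolic case, and their trigonometric analogues — produces exactly \eqref{lam1} and \eqref{lam2} after using $e^{u(1)}=1/v(1)^2$ and $u(1)=-2\log v(1)$.

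Finally, the dichotomy at the end of the statement is a consequence of uniqueness of the maximizer: $s^*:=u'(0)$ is unique, so $C=(s^*)^2-2ab^2$ has a fixed sign and exactly one of the three cases occurs — $C>0$ giving $\lambda_1$, $C<0$ giving $\lambda_2$, and the borderline locus $ab^2=2\rho^2/(2+\rho)^2$ (where $C=0$) recovered as the common limit $\delta\to0$, $\xi\to0$ of the two formulas; in the relevant branch $\delta=\sqrt{(s^*)^2-2ab^2}$ (resp. $\xi=\tfrac12\sqrt{2ab^2-(s^*)^2}$) is thereby unique, lying in $(0,|\rho|)$ (resp. in $(0,\xi_{\mathrm{max}})$), where $\xi_{\mathrm{max}}$ is the first zero of $2\xi\cos\xi+\rho\sin\xi$ — i.e. $\tan\xi_{\mathrm{max}}=-2\xi_{\mathrm{max}}/\rho$ — which is exactly the threshold beyond which $v(1)=2\xi/(2\xi\cos\xi+\rho\sin\xi)$, hence the solution $v$, ceases to be positive on $[0,1]$. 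The genuinely delicate point is not any isolated conceptual step but the bookkeeping: keeping the integration constant and the hyperbolic/trigonometric identities straight well enough to land on the precise expressions \eqref{lam1}--\eqref{lam2}, correctly pinning down the admissibility interval for $\xi$ from positivity of $v$, and handling both signs of $\rho$ uniformly.
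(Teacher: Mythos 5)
Your proposal is correct and lands on the same structural argument as the paper: pass to $u=bg$, derive the Euler--Lagrange equation $u''=ab^2e^u$ with $u(0)=0$, $u'(1)=\rho$, exploit the first integral $(u')^2-2ab^2e^u=C$, exhibit the hyperbolic/trigonometric closed-form solutions, pin down the free constant from the boundary conditions (giving \eqref{deltaeq0} and \eqref{lambdaeq0}), and evaluate the functional using the conserved quantity so that only $\int_0^1 e^u\,dx$, $u(1)$ and $C$ survive. The parametrizations match: your $v(x)=\cosh(\tfrac{\delta}{2}x)+B\sinh(\tfrac{\delta}{2}x)$ is the paper's $f_1$ with $B=\frac{1-\gamma}{1+\gamma}$, and $v(x)=\cos(\xi x)+B\sin(\xi x)$ is the paper's $f_2$ with $B=-\tan\eta$; your Pythagorean identities reproduce \eqref{deltaeq0}, \eqref{lambdaeq0}.

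The genuine differences are that you supply pieces the paper only asserts. The paper posits the solutions $f_1,f_2$ and does not prove that a maximizer exists or that it is unique; you derive them systematically via the linearizing substitution $v=e^{-u/2}$ (so $v''=\tfrac{C}{4}v$), prove existence by the direct method (coercivity of the quadratic term given $a>0$, $H^1$-boundedness and compact embedding into $C[0,1]$), and prove uniqueness by strict concavity of $J$ on $\mathcal{AC}_0[0,1]$ (strictness coming from the $-a\int e^{bg}$ term with $a>0$). From uniqueness of $u'(0)$ you get the sign of $C=(u'(0))^2-2ab^2$, hence the dichotomy between the two branches and the identification of the borderline locus $ab^2=2\rho^2/(2+\rho)^2$, which is precisely the last sentence of the proposition that the paper states without argument. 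Your identification of $\xi_{\max}$ as the first zero of $2\xi\cos\xi+\rho\sin\xi$, i.e. $\tan\xi_{\max}=-2\xi_{\max}/\rho$, is also the correct positivity threshold for $v$ via $v(1)=2\xi/(2\xi\cos\xi+\rho\sin\xi)$. In short: same computational core, but your version closes the existence/uniqueness gaps the paper leaves implicit, at the cost of some extra analysis; this is a mild strengthening rather than a different route.
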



\begin{proof}
The proof will be given in the Appendix.
\end{proof}

Let us recall that $\lim_{n\rightarrow\infty}(r-q)\tau n=\rho$, and 
in the short maturity limit $t_n\to 0$ at constant $r,q$, we have $\rho=0$.
Therefore, the special case $\rho=0$ is of practical interest
when considering the short maturity limit.
For this case it is clear that only (\ref{lambdaeq0}) has a solution for 
$a>0$ so we get the simpler result.
\begin{corollary}
The function $\lambda(a,b;0)$ in the $\rho=0$ limit is given by
\begin{equation}
\lambda(a,b;0)=a \left( \cos^2 \xi  - \frac{1}{\xi} \sin(2\xi)
\right)\,,
\end{equation}
where $\xi$ is the solution of the equation
\begin{eqnarray}
2\xi^2 = ab^2 \cos^2\xi \,.
\end{eqnarray}
\end{corollary}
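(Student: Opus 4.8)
The plan is to obtain the corollary as a direct specialization of Proposition~\ref{VarProblem} to $\rho = 0$, with essentially all the work being substitution plus an elementary monotonicity argument. First I would determine which of the two branches applies. Setting $\rho = 0$ in equation \eqref{deltaeq0} gives $-\delta^2 = 2ab^2\cosh^2(\delta/2)$, whose left-hand side is nonpositive while the right-hand side is strictly positive for $a,b>0$; hence \eqref{deltaeq0} has no real solution and the branch $\lambda_1$ is never selected. By the dichotomy asserted at the end of Proposition~\ref{VarProblem}, the solution is therefore given by the $\lambda_2$ branch, so it only remains to evaluate \eqref{lam2} and \eqref{lambdaeq0} at $\rho = 0$.

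Substituting $\rho = 0$ into \eqref{lambdaeq0}, both factors of $\rho$ drop out and the equation reduces to $8\xi^4 = 4ab^2\xi^2\cos^2\xi$, i.e. $2\xi^2 = ab^2\cos^2\xi$, which is the stated defining equation for $\xi$. For the admissible range, note that as $\rho \to 0$ the equation $\tan\xi_{\max} = -2\xi_{\max}/\rho$ forces $\xi_{\max} \to \pi/2$, so $\xi$ is sought in $(0,\pi/2)$; existence and uniqueness there follow because $\xi \mapsto 2\xi^2$ is strictly increasing from $0$ while $\xi \mapsto ab^2\cos^2\xi$ is strictly decreasing from $ab^2>0$ down to $0$ on that interval, so the two curves cross exactly once.

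Finally, plugging $\rho = 0$ into \eqref{lam2}: the term $\frac{2\rho}{b^2}\log[\cdots]$ and the term $\rho^2/b^2$ both vanish, and the bracket collapses to $1 - \sin^2\xi - \frac{1}{\xi}\sin(2\xi) = \cos^2\xi - \frac{1}{\xi}\sin(2\xi)$, giving $\lambda(a,b;0) = a\left(\cos^2\xi - \frac{1}{\xi}\sin(2\xi)\right)$ as claimed. As a consistency check one can verify that this quantity is nonpositive on $(0,\pi/2)$ — it equals $\cos\xi(\cos\xi - \frac{2}{\xi}\sin\xi)$ and $\tan\xi > \xi > \xi/2$ there — in agreement with $\lambda \le 0$ for the relevant sign $a = -\theta S_0 \ge 0$.

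Since the argument is pure substitution together with the monotonicity estimate above, there is no genuine obstacle; the only points requiring a moment's care are verifying that the $\lambda_1$ branch is truly excluded (handled by the sign argument in the first paragraph) and confirming that the degenerate limit of the constraint $\tan\xi_{\max} = -2\xi_{\max}/\rho$ is the expected $\xi_{\max} = \pi/2$, so that the admissible interval is $(0,\pi/2)$ rather than something smaller.
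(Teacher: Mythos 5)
Your proof is correct and follows the same route the paper implicitly takes: specialize Proposition~\ref{VarProblem} to $\rho=0$, rule out the $\lambda_1$ branch by observing that \eqref{deltaeq0} becomes $-\delta^2=2ab^2\cosh^2(\delta/2)$ with incompatible signs for $a>0$, and then substitute $\rho=0$ into \eqref{lam2} and \eqref{lambdaeq0}. The paper states this in one line; your version simply makes the branch-selection, the reduction of the defining equation, and the existence/uniqueness of $\xi\in(0,\pi/2)$ explicit, all of which check out.
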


In conclusion, the result of Theorem~\ref{theorem3} and 
Proposition~\ref{VarProblem} gives an asymptotic expression for the Laplace 
transform of the discrete sum of
the geometric Brownian motion $nA_n$ in the limit $n\to \infty$, of the 
form $\mathbb{E}[e^{-\theta n A_n}] = 
\exp(n\lambda(\theta S_0,\sqrt{2\beta};\rho) + o(n))$.
This result could be used for numerical simulations of $nA_n$,
similar to the approach presented in \cite{Laplace} using an asymptotic
result for the Laplace transform of the sum of correlated log-normals.
Another possible application would be to obtain a first-order approximation 
of Asian options prices in the asymptotic $n \gg 1$ limit using the Carr-Madan 
formula \cite{CarrMadan}. 

In the next Section we present the leading asymptotics for the Asian option 
prices using the theory of large deviations. 

\section{Asymptotics for Asian options prices}
\label{sec:4}

Asymptotics for the option pricing is a well studied subject in mathematical finance.
There is a vast literature on the asymptotics for option pricing, especially
the asymptotics for the vanilla option pricing and the corresponding implied 
volatility for various continuous-time models,
see e.g. \cite{Berestycki,GHLPW,Feng,Forde,Tehranchi}. 
We are interested in the asymptotics for the pricing of the Asian options in 
the discrete time setting, under the assumptions \eqref{AssumpI} and 
\eqref{AssumpII}.

Let us consider an Asian option with strike price $K$, in the Black-Scholes
model with volatility $\sigma$, risk free rate $r$ and dividend yield $q$. 
The prices of the put and call options at time zero are given by
\begin{align}
&P(n):=e^{-rt_{n}}\mathbb{E}[(K-A_{n})^{+}],
\\
&C(n):=e^{-rt_{n}}\mathbb{E}[(A_{n}-K)^{+}],
\end{align}
respectively, where $A_{n}=\frac{1}{n}\sum_{i=1}^{n}S_{t_{i}}$ and 
the expectation are taken under the risk-neutral
probability measure under which the asset price satisfies the SDE 
$dS_{t}=(r-q)S_{t}dt+\sigma S_{t}dW_{t}$.
Also notice that $e^{-rt_{n}}=e^{-\frac{r}{r-q}(r-q)\tau n}=e^{-\frac{r}{r-q}\rho}$.
Recall that we have proved that
$A_{n}\rightarrow A_{\infty}=\frac{S_{0}}{\rho}(e^{\rho}-1)$ a.s. as $n\rightarrow\infty$.
Since $(K-A_{n})^{+}\leq K$, by the bounded convergence theorem from real 
analysis, we have
\begin{equation}
\lim_{n\rightarrow\infty}P(n)=e^{-\frac{r}{r-q}\rho}\lim_{n\rightarrow\infty}\mathbb{E}[(K-A_{n})^{+}]
=e^{-\frac{r}{r-q}\rho}\left(K-\frac{S_{0}}{\rho}(e^{\rho}-1)\right)^{+}.
\end{equation}
From put-call parity,
\begin{align}
C(n)-P(n)&=e^{-rt_{n}}\mathbb{E}[A_{n}-K]
=e^{-\frac{r}{r-q}\rho}\left[\frac{1}{n}\sum_{i=1}^{n}\mathbb{E}[S_{t_{i}}]-K\right]
\nonumber
\\
&=e^{-\frac{r}{r-q}\rho}\left[\frac{1}{n}S_0\sum_{i=1}^{n}e^{\rho\frac{i}{n}}-K\right]
\rightarrow e^{-\frac{r}{r-q}\rho}\left(\frac{S_{0}}{\rho}(e^{\rho}-1)-K\right),
\nonumber
\end{align}
as $n\rightarrow\infty$. Therefore,
\begin{equation}
\lim_{n\rightarrow\infty}C(n)
=e^{-\frac{r}{r-q}\rho}\left(\frac{S_{0}}{\rho}(e^{\rho}-1)-K\right)^{+}.
\end{equation}

\subsection{Out-of-the-Money Case}

When $K<\frac{S_{0}}{\rho}(e^{\rho}-1)$, $\lim_{n\rightarrow\infty}P(n)=0$
and the put option is out-of-the-money and the decaying rate of $P(n)$ to zero
as $n\rightarrow\infty$ is governed by the left tail of the large deviations
of $A_{n}$. When $K>\frac{S_{0}}{\rho}(e^{\rho}-1)$,  
$\lim_{n\rightarrow\infty}C(n)=0$
and the call option is out-of-the-money and the decaying rate of $C(n)$ to zero
as $n\rightarrow\infty$ is governed by the right tail of the large deviations of $A_{n}$.
Before we proceed, let us first derive the large deviation principle for $\mathbb{P}(A_{n}\in\cdot)$.

\begin{proposition}\label{LDPProp}
$\mathbb{P}(A_{n}\in\cdot)$ satisfies a large deviation principle with rate function
\begin{equation}\label{RateFunction}
\mathcal{I}(x)=\inf_{g\in\mathcal{AC}_{0}[0,1],\int_{0}^{1}e^{\sqrt{2\beta}g(y)}dy=\frac{x}{S_{0}}}
\frac{1}{2}\int_{0}^{1}\left(g'(x)-\frac{\rho}{\sqrt{2\beta}}\right)^{2}dx,
\end{equation}
for $x\geq 0$ and $\mathcal{I}(x)=+\infty$ otherwise.
\end{proposition}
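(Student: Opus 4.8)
The plan is to obtain the large deviation principle for $\mathbb{P}(A_{n}\in\cdot)$ by transporting, through the contraction principle, the Mogulskii-type LDP already established inside the proof of Theorem~\ref{theorem3}, and then removing the drift correction $e^{-\beta k/n^{2}}$ by an exponential-equivalence argument. Concretely, with $V_{j}:=\tau^{-1/2}(Z_{j}-Z_{j-1})$ i.i.d.\ $N(0,1)$ and $g_{n}(x):=\frac{1}{n}\sum_{j=1}^{\lfloor xn\rfloor}(V_{j}+\tfrac{\rho}{\sqrt{2\beta}})$, Mogulskii's theorem (as invoked in the proof of Theorem~\ref{theorem3}) shows that the laws of $g_{n}$ on $L^{\infty}[0,1]$ satisfy an LDP with good rate function $I(g)=\frac{1}{2}\int_{0}^{1}(g'(y)-\tfrac{\rho}{\sqrt{2\beta}})^{2}\,dy$ for $g\in\mathcal{AC}_{0}[0,1]$ and $I(g)=+\infty$ otherwise, while the functional $\Phi(g):=S_{0}\int_{0}^{1}e^{\sqrt{2\beta}\,g(y)}\,dy$ was shown there to be continuous on $L^{\infty}[0,1]$. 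Since $\Phi(g_{n})=\frac{S_{0}}{n}\sum_{k=0}^{n-1}e^{\frac{\sqrt{2\beta}}{n}\sum_{j=1}^{k}(V_{j}+\rho/\sqrt{2\beta})}=:\widetilde{A}_{n}$, the contraction principle (see \cite{Dembo}) yields that $\mathbb{P}(\widetilde{A}_{n}\in\cdot)$ satisfies an LDP on $\mathbb{R}$ with good rate function $x\mapsto\inf\{I(g):\Phi(g)=x\}$; writing the constraint $\Phi(g)=x$ as $\int_{0}^{1}e^{\sqrt{2\beta}g(y)}\,dy=x/S_{0}$, this infimum is precisely $\mathcal{I}(x)$ of \eqref{RateFunction} for $x>0$, and it equals $+\infty$ for $x\le 0$ because $\Phi$ is strictly positive, which matches the stated form of $\mathcal{I}$.

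It then remains to show that $A_{n}$ and $\widetilde{A}_{n}$ are exponentially equivalent. From the computation in the proof of Theorem~\ref{theorem3} one has $nA_{n}=S_{0}\sum_{k=0}^{n-1}e^{\frac{\sqrt{2\beta}}{n}\sum_{j=1}^{k}(V_{j}+\rho/\sqrt{2\beta})-\frac{\beta}{n^{2}}k}$, and since $0\le\frac{\beta}{n^{2}}k<\frac{\beta}{n}$ for $0\le k\le n-1$ this gives the deterministic sandwich $e^{-\beta/n}\widetilde{A}_{n}\le A_{n}\le\widetilde{A}_{n}$, whence $|A_{n}-\widetilde{A}_{n}|\le(1-e^{-\beta/n})\widetilde{A}_{n}\le\frac{\beta}{n}\widetilde{A}_{n}$. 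Hence, for fixed $\delta>0$, $\{|A_{n}-\widetilde{A}_{n}|>\delta\}\subseteq\{\widetilde{A}_{n}>n\delta/\beta\}$; bounding $\widetilde{A}_{n}\le S_{0}e^{\rho\vee 0}\exp(\tfrac{\sqrt{2\beta}}{n}\max_{0\le k\le n-1}W_{k})$ with $W_{k}:=\sum_{j=1}^{k}V_{j}$, this event is contained in $\{\max_{0\le k\le n-1}W_{k}>c_{n}\}$ where $c_{n}=\frac{n}{\sqrt{2\beta}}\log\frac{n\delta}{\beta S_{0}e^{\rho\vee 0}}$, which is of order $\frac{n\log n}{\sqrt{2\beta}}$ for large $n$. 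By the reflection principle and the Gaussian tail bound, $\mathbb{P}(\max_{0\le k\le n-1}W_{k}>c_{n})\le 2e^{-c_{n}^{2}/(2(n-1))}$, and since $c_{n}^{2}/n^{2}\to\infty$ we conclude $\limsup_{n\to\infty}\frac{1}{n}\log\mathbb{P}(|A_{n}-\widetilde{A}_{n}|>\delta)=-\infty$ for every $\delta>0$. Since $\widetilde{A}_{n}$ satisfies an LDP with the good rate function $\mathcal{I}$ and $A_{n}$ is exponentially equivalent to it, the LDP transfers to $\mathbb{P}(A_{n}\in\cdot)$ with the same rate function (see \cite[Theorem~4.2.13]{Dembo}), which is the assertion.

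The only genuine work here is the exponential-equivalence step: producing the deterministic sandwich and then a super-exponentially small upper tail for $\widetilde{A}_{n}$. This is straightforward once one observes that $\widetilde{A}_{n}$ is dominated by $S_{0}e^{\rho\vee 0}$ times the exponential of a scaled running maximum of a standard random walk, whose upper tail the reflection principle controls; the remaining content is a direct application of the contraction principle to the LDP already proved in the course of Theorem~\ref{theorem3}. I would also record explicitly that $\mathcal{I}(x)=+\infty$ for $x\le 0$, since $\Phi$ takes only positive values, consistent with \eqref{RateFunction}.
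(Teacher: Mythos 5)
Your proof is correct and follows the same route as the paper: Mogulskii's theorem for the polygonal process $g_{n}$, continuity of $g\mapsto\int_{0}^{1}e^{\sqrt{2\beta}g}$, and the contraction principle, followed by a comparison between $A_{n}$ and $\widetilde{A}_{n}$. The only real difference is in that last comparison step: the paper simply records the deterministic sandwich $e^{-\beta/n}\widetilde{A}_{n}\le A_{n}\le\widetilde{A}_{n}$ and asserts that the LDP carries over, while you supply a self-contained exponential-equivalence argument, bounding $\widetilde{A}_{n}$ by the exponential of a scaled running maximum of the random walk and using the reflection principle to obtain a super-exponentially small tail for $\{|A_{n}-\widetilde{A}_{n}|>\delta\}$. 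This is a perfectly valid way to make the transfer rigorous (Dembo--Zeitouni, Theorem 4.2.13), and in fact it is more explicit than the paper's own one-line justification; an equally quick alternative would be to note that $\{\widetilde{A}_{n}>n\delta/\beta\}$ has super-exponentially small probability directly from the already-established LDP for $\widetilde{A}_{n}$, since its good rate function diverges at infinity. Either way, your argument is sound.
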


\begin{proof}
We proved already that $\frac{1}{n}\sum_{k=0}^{n-1}
e^{\frac{\sqrt{2\beta}}{n}\sum_{j=1}^{k}(V_{j}+\frac{\rho}{\sqrt{2\beta}})}
=\int_{0}^{1}e^{\sqrt{2\beta}g(x)}dx$, where $g(x)=\frac{1}{n}\sum_{j=1}^{\lfloor xn\rfloor}
(V_{j}+\frac{\rho}{\sqrt{2\beta}})$ and the map $g\mapsto\int_{0}^{1}e^{\sqrt{2\beta}g(x)}dx$
is continuous in the supremum norm. 
Since $\mathbb{P}(\frac{1}{n}
\sum_{j=1}^{\lfloor\cdot n\rfloor}(V_{j}+\frac{\rho}{\sqrt{2\beta}})\in\cdot)$ 
satisfies a large deviation principle
on $L^{\infty}[0,1]$ with rate function $\frac{1}{2}\int_{0}^{1}\left(g'(x)-\frac{\rho}{\sqrt{2\beta}}\right)^{2}dx$
if $g\in\mathcal{AC}_{0}[0,1]$ and $+\infty$ otherwise. From the contraction principle,
and the fact that $e^{-\frac{\beta}{n}}\leq e^{-\frac{\beta}{n^{2}}k}\leq 1$ uniformly in $0\leq k\leq n-1$,
we conclude that $\mathbb{P}(A_{n}\in\cdot)$ satisfies a large deviation principle
with rate function defined in \eqref{RateFunction}. Finally, notice that $A_{n}$ is positive and thus
$\mathcal{I}(x)=+\infty$ for any $x<0$.
\end{proof}

\begin{remark}
$\mathcal{I}(x)=0$ in \eqref{RateFunction} if and only if the optimal $g$ satisfies $g'(x)=\frac{\rho}{\sqrt{2\beta}}$
which is equivalent to $g(x)=\frac{\rho}{\sqrt{2\beta}}x$ since $g(0)=0$. This gives us
$\int_{0}^{1}e^{\sqrt{2\beta}g(y)}dy=\int_{0}^{1}e^{\rho y}dy=\frac{e^{\rho}-1}{\rho}$.
Thus $\mathcal{I}(x)=0$ if and only if $x=S_{0}\frac{e^{\rho}-1}{\rho}=A_{\infty}$ which
is consistent with the a.s. limit of $A_{n}$ as $n\rightarrow\infty$.
\end{remark}

\begin{remark}
We have proved that $\Gamma(\theta):=\lim_{n\rightarrow\infty}\frac{1}{n}\log\mathbb{E}[e^{\theta nA_{n}}]$ exists
for any $\theta\leq 0$ and is differentiable and $\Gamma(\theta)=+\infty$ for any $\theta>0$.
Since $\Gamma(\theta)=+\infty$ for any $\theta>0$, we cannot use G\"{a}rtner-Ellis theorem
to obtain large deviations for $\mathbb{P}(A_{n}\in\cdot)$. One may speculate that we might have subexponential
tails. But the intriguing fact is that we still have large deviations as stated in Proposition \ref{LDPProp}.
\end{remark}

We can further analyze and solve the variational problem 
\eqref{RateFunction}. For $\rho \neq 0$, the solution is given by the following
result.

\begin{proposition}\label{prop:rhogeneral}
The rate function of the discrete time average of the geometric Brownian 
motion is given by
\begin{equation}\label{Jcaldef}
\mathcal{I}(x) = \frac{1}{2\beta} \mathcal{J}(x/S_0,\rho),
\end{equation}
with
\begin{equation}
\mathcal{J}(x/S_0,\rho) =
\begin{cases}
\mathcal{J}_1(x/S_0,\rho) & \, \frac{x}{S_0} \geq 1 + \frac12\rho \\
\mathcal{J}_2(x/S_0,\rho) & \, \frac{x}{S_0} \leq 1 + \frac12\rho
\end{cases},
\end{equation}
where
\begin{align}
& \mathcal{J}_1\left(\frac{x}{S_0},\rho\right) =
\frac12 (\delta^2 - \rho^2) \left(1 - \frac{2\tanh(\frac12\delta)}
{\delta + \rho \tanh(\frac12\delta)} \right) \\
& \qquad\qquad\qquad\qquad\qquad - 
2\rho \log \left[\cosh\left(\frac12\delta\right) + 
\frac{\rho}{\delta} \sinh\left(\frac12 \delta\right) \right] + \rho^2 ,\nonumber \\
& \mathcal{J}_2\left(\frac{x}{S_0},\rho\right) =
2\left(\xi^2 + \frac14 \rho^2\right) 
\left\{
\frac{\tan\xi}{\xi + \frac{\rho}{2}\tan\xi}  - 1
 \right\} - 
2\rho \log\left( \cos\xi + \frac{\rho}{2\xi} \sin\xi \right) 
+ \rho^2 ,
\end{align}
and $\delta,\xi$ are the solutions of the equations
\begin{equation}\label{deltaeq2}
\frac{1}{\delta} \sinh \delta + 
\frac{2\rho}{\delta^2} \sinh^2\left(\frac12 \delta\right) = \frac{x}{S_0}\,.
\end{equation}
and
\begin{equation}\label{lambdaeq2}
\frac{1}{2\xi} \sin(2\xi) 
\left( 1 + \frac{\rho}{2} \frac{\tan\xi}{\xi} \right)
= \frac{x}{S_0}\,.
\end{equation}
\end{proposition}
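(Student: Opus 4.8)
The plan is to read \eqref{RateFunction} as a one-dimensional constrained problem in the calculus of variations and to solve it in closed form. Fix $x>0$, write $b:=\sqrt{2\beta}$ and $\mu:=\rho/b$, and consider minimizing $F[g]:=\tfrac12\int_0^1(g'(y)-\mu)^2\,dy$ over $g\in\mathcal{AC}_0[0,1]$ subject to $G[g]:=\int_0^1 e^{bg(y)}\,dy=x/S_0$. A minimizer $g_\star$ exists by the direct method: the compact embedding $H^1[0,1]\hookrightarrow C[0,1]$ makes $G$ weakly continuous and the constraint set weakly closed, while $F$ is convex, coercive and weakly lower semicontinuous. Introducing a Lagrange multiplier $\kappa$ for the constraint, the Euler--Lagrange equation of $\tfrac12(g'-\mu)^2-\kappa e^{bg}$ is the Liouville-type ODE $g''=-\kappa b\,e^{bg}$, together with the fixed condition $g(0)=0$ and the natural (free-endpoint) condition $g'(1)=\mu$, since $g(1)$ is not prescribed.

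First I would linearize the ODE. Multiplying by $g'$ gives the first integral $(g')^2=-2\kappa e^{bg}+C$, and the substitution $h:=e^{-bg/2}$ turns this into $h''=\tfrac14 b^2 C\,h$ with $h(0)=1$ and $h'(1)=-\tfrac\rho2 h(1)$ (which is equivalent to $g'(1)=\mu$). The sign of $C$ splits the analysis. For $C>0$ put $\delta^2=b^2C$; then $h(y)=\cosh(\tfrac\delta2 y)+a\sinh(\tfrac\delta2 y)$, and solving for $a$ from the endpoint condition yields $h(1)=\bigl(\cosh\tfrac\delta2+\tfrac\rho\delta\sinh\tfrac\delta2\bigr)^{-1}$, hence $g_\star(1)=\tfrac2b\log\bigl(\cosh\tfrac\delta2+\tfrac\rho\delta\sinh\tfrac\delta2\bigr)$. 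For $C<0$ the identical computation with $\cosh,\sinh$ replaced by $\cos,\sin$ and parameter $\xi$ (the analytic continuation $\delta\mapsto 2\mathrm{i}\xi$) produces the trigonometric quantities appearing in $\mathcal{J}_2$ and \eqref{lambdaeq2}.

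Next I would impose the constraint and extract the value. Using $\frac{d}{dy}\frac{\sinh(cy)+a\cosh(cy)}{\cosh(cy)+a\sinh(cy)}=\frac{c(1-a^2)}{\bigl(\cosh(cy)+a\sinh(cy)\bigr)^{2}}$, the constraint $\int_0^1 e^{bg_\star}\,dy=\int_0^1 h^{-2}\,dy=x/S_0$ collapses, after inserting the expression for $a$, to $\frac1\delta\sinh\delta+\frac{2\rho}{\delta^2}\sinh^2(\tfrac\delta2)=x/S_0$, i.e. \eqref{deltaeq2}; its continuation gives \eqref{lambdaeq2}. For the optimal cost I would combine the first integral, which gives $\int_0^1(g')^2=C-2\kappa\,x/S_0$ once the constraint is used, with $\int_0^1 g'=g_\star(1)$, to obtain $\mathcal{I}(x)=\tfrac12 C-\kappa\,\tfrac{x}{S_0}-\tfrac\rho b g_\star(1)+\tfrac{\rho^2}{2b^2}$; substituting $C=\delta^2/b^2$, eliminating $\kappa$ from the first integral evaluated at $y=1$ (which expresses $\kappa$ through $\delta^2-\rho^2$ and $h(1)$), and inserting $g_\star(1)$, elementary algebra reduces this to $\mathcal{I}(x)=\tfrac1{2\beta}\mathcal{J}_1(x/S_0,\rho)$, and likewise to $\tfrac1{2\beta}\mathcal{J}_2$ on the trigonometric branch. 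The branch boundary is $C=0$, i.e. $\delta,\xi\to0$, where both \eqref{deltaeq2} and \eqref{lambdaeq2} tend to $x/S_0=1+\tfrac\rho2$, which matches the stated case split; moreover $\delta=\rho$ (equivalently $\kappa=0$) recovers $x/S_0=(e^\rho-1)/\rho=A_\infty/S_0$ with $\mathcal{I}=0$, consistent with the remark after Proposition~\ref{LDPProp}.

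The hard parts are two matters of rigor rather than the computation. First, one must show the Euler--Lagrange critical point is the \emph{global} minimizer: $F$ is convex but the constraint set $\{G[g]=x/S_0\}$ is not (only $\{G[g]\le x/S_0\}$ is), so for $x/S_0>A_\infty/S_0$ one cannot simply relax to an inequality, and a second-variation argument or a direct comparison using the explicit $g_\star$ is required. Second, one must prove that for every admissible $x/S_0$ the defining equation has a unique root: strict monotonicity of $\delta\mapsto\frac1\delta\sinh\delta+\frac{2\rho}{\delta^2}\sinh^2(\tfrac\delta2)$ on $(0,\infty)$, with range $(1+\tfrac\rho2,\infty)$, and of $\xi\mapsto\frac1{2\xi}\sin(2\xi)\bigl(1+\tfrac\rho2\tfrac{\tan\xi}\xi\bigr)$ on $(0,\xi_{\max})$, with range $(0,1+\tfrac\rho2)$, where $\xi_{\max}$, the smallest positive root of $\xi+\tfrac\rho2\tan\xi=0$, is exactly where the left-hand side of \eqref{lambdaeq2} vanishes. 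As an alternative route, since \eqref{RateFunction} and the variational problem \eqref{varprob40} of Proposition~\ref{VarProblem} are Legendre-dual and are governed by the same Liouville ODE, Proposition~\ref{prop:rhogeneral} can also be deduced from Proposition~\ref{VarProblem} by Legendre inversion.
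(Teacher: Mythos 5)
Your proposal is correct and follows essentially the same strategy as the paper's Appendix proof: introduce a Lagrange multiplier, derive the Liouville-type Euler--Lagrange equation $f''=ae^{f}$ with transversality condition $f'(1)=\rho$, exploit the first integral to express the value in terms of the endpoint data, and impose the constraint to pin down $\delta$ or $\xi$. The one stylistic difference is that you linearize the ODE via the substitution $h=e^{-bg/2}$ rather than writing down the closed-form solutions $f_1,f_2$ directly as the paper does — a clean shortcut that reproduces the same identities — and you honestly flag two rigor points (that the Euler--Lagrange critical point is the global minimizer over a non-convex constraint set, and uniqueness of the roots of the defining equations) which the paper itself also leaves implicit.
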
 

\begin{proof}
The proof is given in the Appendix.
\end{proof}

\begin{remark}
We note that the equations for $\mathcal{J}_{1,2}(K/S_0,\rho)$ can be put into
a unique form by denoting $z=2\xi = i\delta$. Expressed in terms of this
variable we have
\begin{align}
& \mathcal{J}\left(\frac{x}{S_0},\rho\right) =
\frac12 (z^2 + \rho^2) \left(\frac{2\tan\left(\frac{z}{2}\right)}
{z + \rho \tan\left(\frac{z}{2}\right)} - 1 \right) \\
& \qquad\qquad\qquad\qquad\qquad - 
2\rho \log \left[\cos\left(\frac{z}{2}\right) + 
\frac{\rho}{z} \sin\left(\frac{z}{2}\right) \right] + \rho^2 \,,\nonumber
\end{align}
where $z$ is the solution of the equation
\begin{eqnarray}
\frac{1}{z} \sin z + \frac{2\rho}{z^2} \sin^2\left(\frac{z}{2}\right) = 
\frac{x}{S_0}\,.
\end{eqnarray}
\end{remark}

\begin{remark}
The rate function $\mathcal{J}(K/S_0,\rho)$ vanishes for 
$K = A_\infty = S_0\frac{1}{\rho}(e^\rho - 1)$, as expected from the general 
properties of the rate function. Since we have 
$\frac{1}{\rho}(e^\rho-1) \geq 1 + \frac12\rho$
for any $\rho \in \mathbb{R}$, this zero occurs for $\mathcal{J}_1(K/S_0,\rho)$.
We note that the rate function 
$\mathcal{J}_1(K/S_0,\rho)$ vanishes at $\delta = \pm \rho$. 
Both these values of $\delta$ satisfy (\ref{deltaeq2}) with 
$K/S_0 = \frac{1}{\rho}(e^\rho-1)$, which corresponds to $K=A_\infty$. 
However, the true solution of the variational problem \eqref{RateFunction}
corresponds to $\delta = - \rho$ which gives the optimal function $g(x)=
\frac{\rho x}{\sqrt{2\beta}}$, see (\ref{f1}).
\end{remark}


For $\rho=0$ the solution to the variational problem \eqref{RateFunction}
simplifies and is given as follows.

\begin{corollary}\label{VarRateFunction}
For the special case $\rho=0$,
\begin{equation}\label{Icalresult}
\mathcal{J}(x)=
\begin{cases}
\frac12\delta^2 - \delta \tanh\left(\frac12 \delta \right)  & \frac{x}{S_0} \geq 1,
\\
2\xi (\tan\xi - \xi) &  0 < \frac{x}{S_0} \leq 1.
\end{cases}
\end{equation}
and $\mathcal{J}(x)=\infty$ otherwise,
where $\delta$ is the unique solution of the equation
\begin{equation}
\frac{1}{\delta} \sinh \delta = \frac{x}{S_0} ,
\end{equation}
and $\xi$ is the unique solution in $(0,\frac12\pi)$ of the equation
\begin{eqnarray}
\frac{1}{2\xi} \sin (2\xi) = \frac{x}{S_0} \,.
\end{eqnarray}
\end{corollary}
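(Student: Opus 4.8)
The plan is to obtain the statement from Proposition~\ref{prop:rhogeneral} by specializing to $\rho=0$ and then recording the few elementary facts this requires. Upon setting $\rho=0$, the case boundary $\frac{x}{S_0}=1+\frac12\rho$ becomes $\frac{x}{S_0}=1$; in $\mathcal J_1$ the summand $-2\rho\log[\cosh(\frac12\delta)+\frac{\rho}{\delta}\sinh(\frac12\delta)]$ and the summand $+\rho^2$ vanish while $\frac12(\delta^2-\rho^2)\big(1-\frac{2\tanh(\frac12\delta)}{\delta+\rho\tanh(\frac12\delta)}\big)$ collapses to $\frac12\delta^2-\delta\tanh(\frac12\delta)$; similarly $\mathcal J_2$ collapses to $2\xi(\tan\xi-\xi)$; and the defining equations \eqref{deltaeq2} and \eqref{lambdaeq2} become $\frac1\delta\sinh\delta=\frac{x}{S_0}$ and $\frac1{2\xi}\sin(2\xi)=\frac{x}{S_0}$. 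The prefactor $\mathcal I(x)=\frac1{2\beta}\mathcal J(x/S_0)$ and the value $+\infty$ for $x\le 0$ are inherited from \eqref{Jcaldef} and Proposition~\ref{LDPProp}. It then remains to observe that $t\mapsto\frac{\sinh t}{t}$ is strictly increasing on $(0,\infty)$ with range $(1,\infty)$, so a positive $\delta$ exists and is unique precisely when $\frac{x}{S_0}>1$, and that $t\mapsto\frac{\sin 2t}{2t}$ is strictly decreasing on $(0,\frac\pi2)$ with range $(0,1)$, so $\xi\in(0,\frac\pi2)$ exists and is unique precisely when $0<\frac{x}{S_0}<1$; at $\frac{x}{S_0}=1$ both branches return $\mathcal J=0$, consistently with the almost sure limit $A_\infty$.

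Alternatively, and more transparently, I would solve the variational problem \eqref{RateFunction} directly at $\rho=0$, which is the $\rho=0$ instance of the Appendix computation for Proposition~\ref{prop:rhogeneral} but considerably cleaner. Writing $c=x/S_0$ and substituting $h=\sqrt{2\beta}\,g$ reduces \eqref{RateFunction} to $\mathcal I(x)=\frac1{2\beta}\mathcal J(c)$ with
\begin{equation*}
\mathcal J(c)=\inf\Big\{\tfrac12\int_0^1 (h'(y))^2\,dy:\ h\in\mathcal{AC}_0[0,1],\ \int_0^1 e^{h(y)}\,dy=c\Big\}.
\end{equation*}
Introducing a multiplier $\mu$ for the constraint and using that the right endpoint $y=1$ is free in the Mogulskii setting, the stationarity conditions are the Liouville equation $h''=-\mu e^h$ with $h(0)=0$ and the natural boundary condition $h'(1)=0$; multiplying by $h'$ gives the first integral $(h'(y))^2=2\mu(e^{h(1)}-e^{h(y)})$. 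For $c>1$ one must have $\mu>0$, and the (monotone, concave) solution is $h(y)=2\log\frac{\cosh\kappa}{\cosh(\kappa(1-y))}$, the constraint forcing $\frac{\sinh(2\kappa)}{2\kappa}=c$ and the energy $\tfrac12\int_0^1(h'(y))^2\,dy$ evaluating to $2\kappa^2-2\kappa\tanh\kappa=\frac12\delta^2-\delta\tanh(\frac12\delta)$ with $\delta:=2\kappa$; for $0<c<1$ one must have $\mu<0$, and the (monotone, convex) solution is $h(y)=2\log\frac{\cos\kappa}{\cos(\kappa(1-y))}$ with $\kappa\in(0,\frac\pi2)$, the constraint forcing $\frac{\sin(2\kappa)}{2\kappa}=c$ and the energy evaluating to $2\kappa\tan\kappa-2\kappa^2=2\xi(\tan\xi-\xi)$ with $\xi:=\kappa$; while $h\equiv 0$, $\mathcal J=0$ for $c=1$. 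This reproduces \eqref{Icalresult} and the two equations below it.

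The main obstacle will be, as always for problems of this type, not producing these critical points but showing that they are the \emph{global} minimizers: the constraint set $\{\int_0^1 e^h=c\}$ is not convex, and the unconstrained Lagrangian $\int_0^1(\tfrac12 (h')^2-\mu e^h)\,dy$ is in fact unbounded below (take $h$ close to a large constant), so the usual weak-duality shortcut fails. I would settle this by the same comparison argument used in the Appendix proof of Proposition~\ref{prop:rhogeneral}, specialized to $\rho=0$, the relevant inputs being the monotonicity and the concavity (for $c>1$) or convexity (for $c<1$) of the candidate $h$ together with the strict convexity of $p\mapsto\tfrac12 p^2$. Once global optimality is established, it is routine to check that exactly one branch is admissible for each $c>0$ and that the two expressions agree at $c=1$, completing the proof.
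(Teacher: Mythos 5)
Your first paragraph is exactly what the paper does: the corollary carries no separate proof and is obtained by setting $\rho=0$ in Proposition~\ref{prop:rhogeneral}, and your algebraic simplifications of $\mathcal{J}_1$, $\mathcal{J}_2$ and of the equations \eqref{deltaeq2}, \eqref{lambdaeq2} are correct, as are the monotonicity observations about $t\mapsto\sinh t/t$ and $t\mapsto\sin(2t)/(2t)$; your ``alternative'' derivation is just the Appendix proof of that proposition specialized to $\rho=0$ (with the sign convention $a=-\mu$ for the multiplier), so it is the same argument rather than a genuinely different route. The global-optimality concern you raise at the end is legitimate, but it applies equally to the paper's Appendix argument, which also stops at the Euler--Lagrange critical-point computation; the corollary itself adds nothing new to that issue.
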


It can be shown that this is identical to the rate function for the short 
maturity asymptotics of Asian options with continuous time averaging in the 
Black-Scholes model \cite{SMAO}. 
The rate function $\mathcal{J}(K/S_0,\rho)$ can be evaluated numerically
using the result of Proposition~\ref{prop:rhogeneral}. 
The plot of $\mathcal{J}(x/S_0,\rho)$ is shown in Figure~\ref{Fig:Ical}
for $\rho = 0, 0.1$.

\begin{figure}[t]
\centering
\includegraphics[width=4.5in]{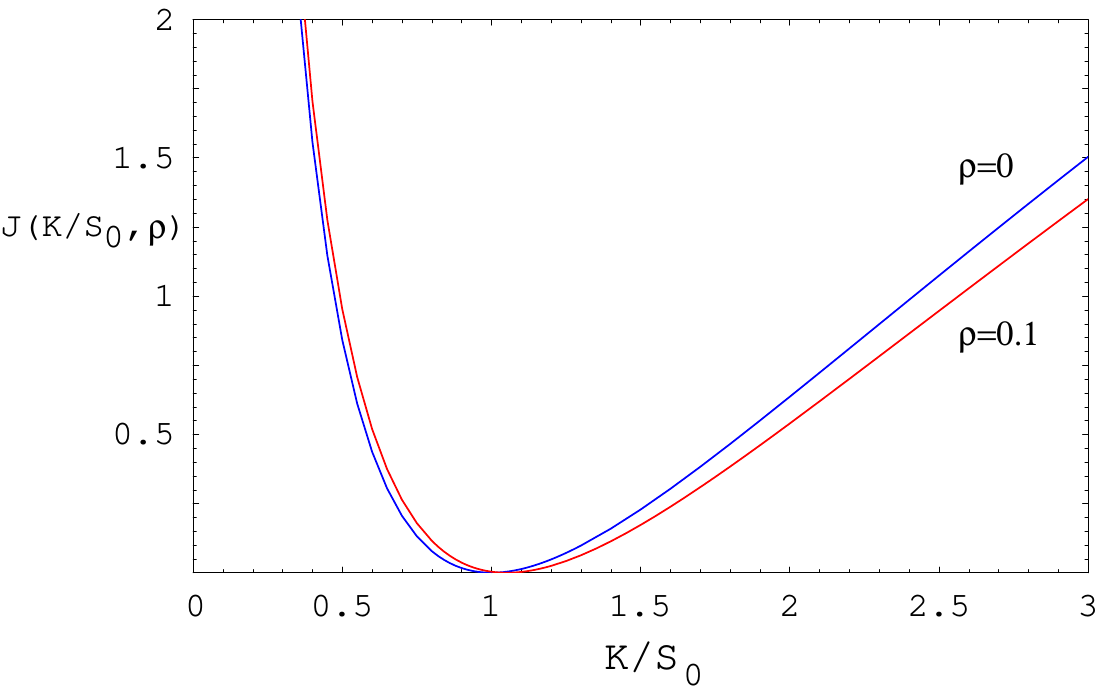}
\caption{Plot of the rate function $\mathcal{J}(K/S_0,\rho)$ vs $K/S_0$
for two values of the $\rho$ parameter $\rho=0, 0.1$. 
This is related to the rate function $\mathcal{I}(x)$ 
for the large deviations of the average of the geometric Brownian motion $A_n$
as in (\ref{Jcaldef}).}
\label{Fig:Ical}
\end{figure}

Using the large deviations results for $\mathbb{P}(A_{n}\in\cdot)$, we can 
obtain the asymptotics of the out-of-the-money Asian options prices. 
This is given by the following result.

\begin{proposition}\label{prop:13}
When $K<\frac{S_{0}}{\rho}(e^{\rho}-1)$, 
\begin{equation}\label{PLimit}
P(n)=e^{-n\mathcal{I}(K)+o(n)},\qquad\text{as $n\rightarrow\infty$},
\end{equation}
and when $K>\frac{S_{0}}{\rho}(e^{\rho}-1)$, 
\begin{equation}\label{CLimit}
C(n)=e^{-n\mathcal{I}(K)+o(n)},\qquad\text{as $n\rightarrow\infty$},
\end{equation}
where $\mathcal{I}(\cdot)$ was defined in \eqref{RateFunction}.
\end{proposition}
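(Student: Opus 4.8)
The plan is to reduce the two statements to the large deviation principle for $\mathbb{P}(A_n\in\cdot)$ established in Proposition~\ref{LDPProp}. Since $e^{-rt_n}=e^{-\frac{r}{r-q}\rho}$ is a positive constant independent of $n$, it is enough to identify the exponential rate of $\mathbb{E}[(K-A_n)^+]$ in the put case and of $\mathbb{E}[(A_n-K)^+]$ in the call case. Two structural facts about $\mathcal{I}$ will be used repeatedly: it is continuous on $(0,\infty)$ (being finite and convex there), and it is unimodal with unique zero at $A_\infty$, i.e.\ non-increasing on $[0,A_\infty]$ and non-decreasing on $[A_\infty,\infty)$. This can be read off the explicit expressions of Proposition~\ref{prop:rhogeneral}, or deduced from the convexity of the functional $g\mapsto\int_0^1 e^{\sqrt{2\beta}g(y)}\,dy$ in \eqref{RateFunction}. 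In particular, $\inf_{x\in[0,K]}\mathcal{I}(x)=\mathcal{I}(K)$ when $K<A_\infty$, and $\inf_{x\in[K,\infty)}\mathcal{I}(x)=\mathcal{I}(K)$ when $K>A_\infty$.

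For the out-of-the-money put ($K<A_\infty$), I would argue as follows. For the upper bound, use $0\le (K-A_n)^+\le K\,1_{\{A_n\le K\}}$, so $\mathbb{E}[(K-A_n)^+]\le K\,\mathbb{P}(A_n\in[0,K])$, and apply the large deviation upper bound to the closed set $[0,K]$ to get $\limsup_{n\to\infty}\frac1n\log\mathbb{E}[(K-A_n)^+]\le-\mathcal{I}(K)$. For the matching lower bound, fix $\varepsilon\in(0,K)$; since $(K-A_n)^+\ge\frac{\varepsilon}{2}$ on $\{A_n\in(K-\varepsilon,K-\tfrac{\varepsilon}{2})\}$, we have $\mathbb{E}[(K-A_n)^+]\ge\frac{\varepsilon}{2}\,\mathbb{P}(A_n\in(K-\varepsilon,K-\tfrac{\varepsilon}{2}))$, and the large deviation lower bound for this open interval together with continuity of $\mathcal{I}$ gives $\liminf_{n\to\infty}\frac1n\log\mathbb{E}[(K-A_n)^+]\ge-\mathcal{I}(K-\tfrac{\varepsilon}{2})$; letting $\varepsilon\downarrow 0$ yields $\ge-\mathcal{I}(K)$. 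Combining the two bounds proves \eqref{PLimit}.

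For the out-of-the-money call ($K>A_\infty$), the lower bound is identical, using the interval $(K+\tfrac{\varepsilon}{2},K+\varepsilon)$. The delicate point is the upper bound: since $(A_n-K)^+$ is unbounded and $\mathbb{E}[e^{\theta nA_n}]=+\infty$ for every $\theta>0$, there is no exponential Markov inequality available, so I would replace it by a H\"{o}lder split. For fixed $p>1$ with conjugate exponent $p'$,
\[
\mathbb{E}[(A_n-K)^+]\le\mathbb{E}[A_n\,1_{\{A_n\ge K\}}]\le\big(\mathbb{E}[A_n^{\,p}]\big)^{1/p}\,\mathbb{P}(A_n\ge K)^{1/p'}.
\]
The key estimate is that $\mathbb{E}[A_n^{\,p}]$ is bounded uniformly in $n$: by Jensen's inequality $\mathbb{E}[A_n^{\,p}]\le\frac1n\sum_{i=1}^n\mathbb{E}[S_{t_i}^{\,p}]$, and $\mathbb{E}[S_{t_i}^{\,p}]=S_0^p\exp\!\big(p(r-q)t_i+\tfrac12 p(p-1)\sigma^2 t_i\big)$ with $(r-q)t_i=\rho i/n$ and $\sigma^2 t_i=2\beta i/n^2\le 2\beta/n$ both bounded, so $\mathbb{E}[A_n^{\,p}]\le C_p<\infty$ uniformly in $n$. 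Hence $\frac1n\log\mathbb{E}[(A_n-K)^+]\le\frac{1}{np}\log C_p+\frac{1}{p'}\cdot\frac1n\log\mathbb{P}(A_n\ge K)$, and the large deviation upper bound for the closed set $[K,\infty)$ gives $\limsup_{n\to\infty}\frac1n\log\mathbb{P}(A_n\ge K)\le-\mathcal{I}(K)$, so $\limsup_{n\to\infty}\frac1n\log\mathbb{E}[(A_n-K)^+]\le-\tfrac{1}{p'}\mathcal{I}(K)$. Since $p>1$ was arbitrary, letting $p\to\infty$ (so $p'\to1$) yields $\limsup_{n\to\infty}\frac1n\log\mathbb{E}[(A_n-K)^+]\le-\mathcal{I}(K)$, which with the lower bound proves \eqref{CLimit}.

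The main obstacle is exactly this last upper bound for the call: the unboundedness of the payoff together with the absence of positive exponential moments of $nA_n$ rules out the usual exponential Chebyshev step, and one must instead keep the payoff under control by the uniform polynomial moment bound on $A_n$, which itself relies on the per-step variance $\sigma^2 t_i$ vanishing like $2\beta/n$ under the standing assumptions \eqref{AssumpI}--\eqref{AssumpII}. A secondary point requiring care is the unimodality (monotonicity on each side of $A_\infty$) of $\mathcal{I}$, which is what collapses the various infima of $\mathcal{I}$ over half-lines and shrinking intervals to the single value $\mathcal{I}(K)$ claimed in the statement.
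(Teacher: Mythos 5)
Your argument is essentially the paper's: both use the LDP from Proposition~\ref{LDPProp} for the probability factors, and for the unbounded call payoff both resort to a H\"older split with conjugate exponent $p'\to 1$, so the decisive diagnosis in your last paragraph (no exponential Markov step is available because $\mathbb{E}[e^{\theta nA_n}]=\infty$ for $\theta>0$) matches the paper exactly. Where you diverge is in how the uniform moment bound $\sup_n\mathbb{E}[A_n^p]<\infty$ is obtained: the paper bounds $A_n \le S_0 e^{|\rho|}\exp\bigl(\frac{\sqrt{2\beta}}{n}\max_{1\le i\le n}Z_i\bigr)$ and then uses the reflection principle and Brownian scaling to reduce to a single Gaussian exponential moment, whereas you apply Jensen's inequality to the power map, $\mathbb{E}[A_n^p]\le\frac{1}{n}\sum_i\mathbb{E}[S_{t_i}^p]$, and read off the uniform bound directly from the log-normal moment formula using $(r-q)t_i=\rho i/n$ and $\sigma^2 t_i\le 2\beta/n$. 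Your route is more elementary and arguably cleaner (it also avoids the paper's intermediate $p\ge 2$ restriction, since you drop $K$ at the start by bounding $(A_n-K)^+\le A_n$ rather than by $(A_n+K)$). You are also more explicit than the paper about the analytic facts being used implicitly on both sides of the LDP sandwich, namely continuity of $\mathcal{I}$ and its monotonicity on each side of $A_\infty$, which are what convert $\inf_{[0,K]}\mathcal{I}$ and $\inf_{[K,\infty)}\mathcal{I}$ into $\mathcal{I}(K)$ and let the $\varepsilon\downarrow 0$ step close. One small caveat: convexity of $\mathcal{I}$ does not follow automatically from the contraction principle or from convexity of the constraint map, so it is safer to justify the unimodality directly from the explicit formulas of Proposition~\ref{prop:rhogeneral} (or from the monotonicity of $T_1(\delta)$ and $T_2(\xi)$ in the appendix), as you also suggest; but since only unimodality and continuity are used, this does not affect the validity of the argument.
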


\begin{proof}
For any $0<\epsilon<K$,
\begin{align}
P(n)&\geq e^{-\frac{r}{r-q}\rho}\mathbb{E}\left[(K-A_{n})1_{A_{n}\leq K-\epsilon}\right]
\geq e^{-\frac{r}{r-q}\rho}\epsilon\mathbb{P}(A_{n}\leq K-\epsilon)\,.
\end{align}
Therefore, $\liminf_{n\rightarrow\infty}\frac{1}{n}\log P(n)\geq-\mathcal{I}(K-\epsilon)$.
Since it holds for any $\epsilon\in(0,K)$, we conclude that
\begin{equation}
\liminf_{n\rightarrow\infty}\frac{1}{n}\log P(n)\geq-\mathcal{I}(K).
\end{equation}
On the other hand,
\begin{equation}
P(n)=e^{-\frac{r}{r-q}\rho}\mathbb{E}\left[(K-A_{n})1_{A_{n}\leq K}\right]
\leq e^{-\frac{r}{r-q}\rho}K\mathbb{P}(A_{n}\leq K),
\end{equation}
which implies that $\limsup_{n\rightarrow\infty}\frac{1}{n}\log P(n)\leq-\mathcal{I}(K)$.
Hence, we proved the \eqref{PLimit}.

For any $\epsilon>0$,
\begin{align}
C(n)&\geq e^{-\frac{r}{r-q}\rho}\mathbb{E}\left[(A_{n}-K)1_{A_{n}\geq K+\epsilon}\right]
\geq e^{-\frac{r}{r-q}\rho}\epsilon\mathbb{P}(A_{n}\geq K+\epsilon)\,.
\end{align}
Therefore, $\liminf_{n\rightarrow\infty}\frac{1}{n}\log C(n)\geq-\mathcal{I}(K+\epsilon)$.
Since it holds for any $\epsilon>0$, we have
\begin{equation}
\liminf_{n\rightarrow\infty}\frac{1}{n}\log C(n)\geq-\mathcal{I}(K).
\end{equation}
For any $\frac{1}{p}+\frac{1}{q}=1$, $p,q>1$, by H\"{o}lder's inequality,
\begin{align}\label{C1}
C(n)&=e^{-\frac{r}{r-q}\rho}\mathbb{E}\left[(A_{n}-K)^{+}1_{A_{n}\geq K}\right]
\\
&\leq e^{-\frac{r}{r-q}\rho}\left(\mathbb{E}[[(A_{n}-K)^{+}]^{p}]\right)^{\frac{1}{p}}
\left(\mathbb{E}[(1_{A_{n}\geq K})^{q}]\right)^{\frac{1}{q}}
\nonumber
\\
&\leq e^{-\frac{r}{r-q}\rho}\left(\mathbb{E}[(A_{n}+K)^{p}]\right)^{\frac{1}{p}}
\mathbb{P}\left(A_{n}\geq K\right)^{\frac{1}{q}}.
\nonumber
\end{align}
By Jensen's inequality, for any $x,y>0$, it is clear that for any $p\geq 2$, 
$(\frac{x+y}{2})^{p}\leq\frac{x^{p}+y^{p}}{2}$. Therefore, for any $p\geq 2$,
\begin{equation}\label{C2}
\mathbb{E}[(A_{n}+K)^{p}]
\leq 2^{p-1}(\mathbb{E}[A_{n}^{p}]+K^{p}).
\end{equation}
We can compute that
\begin{align}\label{C3}
\mathbb{E}[A_{n}^{p}]
&= n^{-p}
\mathbb{E}\left[\left(\sum_{i=1}^{n}S_{0}e^{\sigma Z_{t_{i}}+(r-q-\frac{1}{2}\sigma^{2})t_{i}}\right)^{p}\right]
\\
&= n^{-p}
\mathbb{E}\left[\left(\sum_{i=1}^{n}S_{0}e^{\sigma\sqrt{\tau}Z_{i}+(r-q-\frac{1}{2}\sigma^{2})\tau i}\right)^{p}\right]
\nonumber
\\
&\leq n^{-p}
\mathbb{E}\left[\left(\sum_{i=1}^{n}S_{0}e^{\sigma\sqrt{\tau}\max_{1\leq i\leq n}Z_{i}+|\rho|}\right)^{p}\right]
\nonumber
\\
&\leq S_{0}^{p}e^{|\rho| p}\mathbb{E}\left[e^{\frac{\sqrt{2\beta}}{n}p\max_{1\leq i\leq n}Z_{i}}\right]
\nonumber
\\
&=S_{0}^{p}e^{|\rho| p}
\mathbb{E}\left[e^{\frac{\sqrt{2\beta}}{n}p|Z_{n}|}\right]
\nonumber
\\
&=S_{0}^{p}e^{|\rho| p}
\mathbb{E}\left[e^{\frac{\sqrt{2\beta}}{\sqrt{n}}p|Z_{1}|}\right] \,,
\nonumber
\end{align}
where we used the reflection principle for the Brownian motion and the Brownian scaling property.
Note that $\mathbb{E}[e^{\theta|Z_{1}|}]$ is finite for any $\theta>0$. Hence, from \eqref{C1}, \eqref{C2}, \eqref{C3},
we conclude that for any $1<q<2$ (and thus $p>2$, where $\frac{1}{p}+\frac{1}{q}=1$),
\begin{equation}
\limsup_{n\rightarrow\infty}\frac{1}{n}\log C(n)\leq-\frac{1}{q}\lim_{n\rightarrow\infty}\frac{1}{n}\log\mathbb{P}(A_{n}\geq K)
=-\frac{1}{q}\mathcal{I}(K).
\end{equation}
Since it holds for any $1<q<2$, by letting $q\downarrow 1$, we proved \eqref{CLimit}.
\end{proof}

\subsection{In-the-Money Case}

We consider the case of in-the-money Asian options, that is 
$K>\frac{S_{0}}{\rho}(e^{\rho}-1)$ for the put option
(and $K<\frac{S_{0}}{\rho}(e^{\rho}-1)$ for the call option).
Since $A_{n}\rightarrow A_{\infty}$ a.s. we get from the bounded convergence 
theorem and put-call parity, that
$P(n)\rightarrow K-\frac{S_{0}}{\rho}(e^{\rho}-1)$
and $C(n)\rightarrow\frac{S_{0}}{\rho}(e^{\rho}-1)-K$.
The next results concern the speed of the convergence.

\begin{proposition}
When $K<\frac{S_{0}}{\rho}(e^{\rho}-1)$ and $\rho\neq 0$, 
\begin{equation}\label{CLimitII}
C(n)=e^{-\frac{r}{r-q}\rho}\left(\frac{S_{0}}{\rho}(e^{\rho}-1)-K\right)
+\frac{e^{-\frac{r\rho}{r-q}}S_{0}(e^{\rho}-1)}{2n}+O(n^{-2}).
\end{equation}
and when $K>\frac{S_{0}}{\rho}(e^{\rho}-1)$ and $\rho\neq 0$, 
\begin{equation}\label{PLimitII}
P(n)=e^{-\frac{r}{r-q}\rho}\left(K-\frac{S_{0}}{\rho}(e^{\rho}-1)\right)
-\frac{e^{-\frac{r\rho}{r-q}}S_{0}(e^{\rho}-1)}{2n}+O(n^{-2}).
\end{equation}
The case $\rho=0$ is similar. When $K<S_{0}$,
\begin{equation}
C(n)=(S_{0}-K)+e^{-n\mathcal{I}(K)+o(n)},
\end{equation}
and when $K>S_{0}$,
\begin{equation}
P(n)=(K-S_{0})+e^{-n\mathcal{I}(K)+o(n)}.
\end{equation}
\end{proposition}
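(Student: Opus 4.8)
The plan is to split each option payoff into its limiting affine part plus a ``wrong-side'' remainder. The affine part is a deterministic quantity that is evaluated in closed form and expanded to order $1/n$, while the remainder is shown to be negligible — in fact exponentially small in $n$ — via the large deviation principle of Proposition~\ref{LDPProp}.

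\emph{Call, $K<\frac{S_0}{\rho}(e^\rho-1)=A_\infty$.} Write $(A_n-K)^+=(A_n-K)+(K-A_n)^+$, so that
\begin{equation*}
C(n)=e^{-rt_n}\left(\mathbb{E}[A_n]-K\right)+e^{-rt_n}\,\mathbb{E}\left[(K-A_n)^+\right].
\end{equation*}
Since $\mathbb{E}[S_{t_i}]=S_0e^{\rho i/n}$, the first expectation is explicit: $\mathbb{E}[A_n]=\frac{S_0}{n}\sum_{i=1}^n e^{\rho i/n}=\frac{S_0}{n}\cdot\frac{e^\rho-1}{1-e^{-\rho/n}}$, and expanding $1-e^{-\rho/n}=\frac{\rho}{n}\left(1-\frac{\rho}{2n}+O(n^{-2})\right)$ gives $\mathbb{E}[A_n]=\frac{S_0(e^\rho-1)}{\rho}+\frac{S_0(e^\rho-1)}{2n}+O(n^{-2})$. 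Combined with $e^{-rt_n}=e^{-r\rho/(r-q)}$, this produces the leading term and the $\frac{1}{2n}$-correction in \eqref{CLimitII}. It remains to see that $e^{-rt_n}\mathbb{E}[(K-A_n)^+]=O(n^{-2})$; in fact it is smaller than any power of $1/n$, since $\mathbb{E}[(K-A_n)^+]\le K\,\mathbb{P}(A_n\le K)$ and, because $\mathcal I$ is a good rate function vanishing only at $A_\infty$ and decreasing on $(0,A_\infty)$ (by Proposition~\ref{prop:rhogeneral} and Corollary~\ref{VarRateFunction}), we have $\inf_{x\le K}\mathcal I(x)=\mathcal I(K)>0$, hence $\mathbb{P}(A_n\le K)=e^{-n\mathcal I(K)+o(n)}$ by Proposition~\ref{LDPProp}.

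\emph{Put, $K>A_\infty$.} Symmetrically, $(K-A_n)^+=(K-A_n)+(A_n-K)^+$ gives
\begin{equation*}
P(n)=e^{-rt_n}\left(K-\mathbb{E}[A_n]\right)+e^{-rt_n}\,\mathbb{E}\left[(A_n-K)^+\right],
\end{equation*}
and the first term yields \eqref{PLimitII} exactly as above. The extra care here is that $A_n$ has an unbounded right tail, so $\mathbb{E}[(A_n-K)^+]$ cannot be bounded crudely; instead, as in the proof of Proposition~\ref{prop:13}, apply H\"older's inequality with conjugate exponents $p,q>1$ to get $\mathbb{E}\left[(A_n-K)^+1_{A_n\ge K}\right]\le\left(\mathbb{E}[A_n^p]\right)^{1/p}\mathbb{P}(A_n\ge K)^{1/q}$, use the uniform-in-$n$ moment bound \eqref{C3} for $\mathbb{E}[A_n^p]$, and note that $K>A_\infty$ forces $\mathbb{P}(A_n\ge K)=e^{-n\mathcal I(K)+o(n)}$. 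Hence this term is again $o(n^{-2})$, completing \eqref{PLimitII}.

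\emph{The case $\rho=0$, and the main difficulty.} The same decomposition applies: now $\mathbb{E}[A_n]\to S_0$ (with $\mathbb{E}[A_n]=S_0$ when $r=q$) and $e^{-rt_n}\to 1$, so the affine part converges to $S_0-K$ (resp.\ $K-S_0$), while the wrong-side remainder is exactly of order $e^{-n\mathcal I(K)+o(n)}$ — the upper bound is as above, and for the matching lower bound one uses $\mathbb{E}[(K-A_n)^+]\ge\epsilon\,\mathbb{P}(A_n\le K-\epsilon)\ge e^{-n\mathcal I(K-\epsilon)+o(n)}$ exactly as in the proof of Proposition~\ref{prop:13}, letting $\epsilon\downarrow 0$ and invoking continuity of $\mathcal I$ at $K$. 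The routine part throughout is the geometric-sum expansion; the step that requires attention is ensuring the wrong-side terms are genuinely negligible to the stated order, which for the put rests on combining the moment bound \eqref{C3} with the large deviation estimate, and this is precisely where the strict positivity of $\mathcal I(K)$ for $K\ne A_\infty$ is essential.
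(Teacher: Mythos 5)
Your proposal is correct and is essentially the same argument as the paper's. Your decomposition $(A_n-K)^+=(A_n-K)+(K-A_n)^+$ is put-call parity in disguise — it produces exactly the identity $C(n)=e^{-rt_n}\mathbb{E}[A_n-K]+P(n)$ that the paper invokes — and you then expand $\mathbb{E}[A_n]=\frac{S_0}{n}\frac{e^\rho-1}{1-e^{-\rho/n}}$ to order $1/n$ and absorb the out-of-the-money piece into the error, just as the paper does. The only difference is that you re-derive the exponential smallness of the OTM option from scratch (Markov-type bound for the put, H\"older plus the moment estimate \eqref{C3} for the call), whereas the paper simply cites the already-established Proposition~\ref{prop:13}; this is a stylistic choice, not a different route.
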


\begin{proof}
When $K<\frac{S_{0}}{\rho}(e^{\rho}-1)$, we proved that $P(n)=e^{-n\mathcal{I}(k)+o(n)}$.
From put-call parity,
\begin{equation}
C(n)-P(n)=e^{-rt_{n}}\mathbb{E}[A_{n}-K]
=e^{-\frac{r}{r-q}\rho}\left[S_{0}\frac{e^{\rho}-1}{n(1-e^{-\frac{\rho}{n}})}-K\right].
\end{equation}
Therefore, 
\begin{align}
&C(n)-P(n)-e^{-\frac{r}{r-q}\rho}\left(\frac{S_{0}}{\rho}(e^{\rho}-1)-K\right)
\\
&=e^{-\frac{r\rho}{r-q}}S_{0}(e^{\rho}-1)\left[\frac{1}{n(1-e^{-\frac{\rho}{n}})}-\frac{1}{\rho}\right]
\nonumber
\\
&=e^{-\frac{r\rho}{r-q}}S_{0}(e^{\rho}-1)\left[\frac{1}{\rho-\frac{1}{2}\frac{\rho^{2}}{n}+O(n^{-2})}-\frac{1}{\rho}\right]
\nonumber
\\
&=\frac{e^{-\frac{r\rho}{r-q}}S_{0}(e^{\rho}-1)}{2n}+O(n^{-2}).
\nonumber
\end{align}
Since $P(n)=e^{-n\mathcal{I}(k)+o(n)}$, we proved \eqref{CLimitII}. Similarly, we have \eqref{PLimitII}.
\end{proof}

\subsection{At-the-Money Case}

Consider next the case of at-the-money Asian options, that is
$K=\frac{S_{0}}{\rho}(e^{\rho}-1)=A_{\infty}$. Since 
$A_{n}\rightarrow A_{\infty}$ a.s., using the bounded convergence theorem, we 
have $P(n)\rightarrow 0$ as $n\rightarrow\infty$. Put-call parity implies
that $C(n)\rightarrow 0$ as $n\rightarrow\infty$ as well. Note that in the case of out-of-the-money, we have
already seen that both $P(n)$ and $C(n)$ decay to zero exponentially fast in $n$, where the exponent
is given by $\mathcal{I}(K)$. The next result is about the speed that $P(n)$ and $C(n)$ decay to zero as $n\rightarrow\infty$
for at-the-money Asian options. We will see that, unlike the out-of-the-money Asian options, whose asymptotics
are governed by the large deviations results, the asymptotics for at-the-money case are governed
by the normal fluctuations from the central limit theorem and non-uniform 
Berry-Esseen bound.

\begin{proposition}\label{prop:ATM}
When the Asian option is at-the-money, that is, 
$K=\frac{S_{0}}{\rho}(e^{\rho}-1)=A_{\infty}$,
\begin{align}
&P(n)=e^{-\frac{r\rho}{r-q}}S_{0}\sqrt{\frac{\beta v(\rho)}{\pi}}
\frac{1}{\sqrt{n}}(1+o(1)),\\
&C(n)=e^{-\frac{r\rho}{r-q}}S_{0}\sqrt{\frac{\beta v(\rho)}{\pi}}
\frac{1}{\sqrt{n}}(1+o(1)),
\end{align}
as $n\rightarrow\infty$.
\end{proposition}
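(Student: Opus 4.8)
The plan is to express both $P(n)$ and $C(n)$ through a truncated first moment of the normalized fluctuation $W_n:=\sqrt{n}(A_n-A_\infty)/S_0$ from Proposition~\ref{prop2}, and then show this moment converges to the corresponding quantity for the limiting Gaussian. Using the at-the-money hypothesis $K=A_\infty$ and $e^{-rt_n}=e^{-\frac{r\rho}{r-q}}$, we have $A_n-K=S_0 W_n/\sqrt{n}$, so
\begin{equation*}
P(n)=e^{-\frac{r\rho}{r-q}}S_0\,\frac{1}{\sqrt{n}}\,\mathbb{E}[W_n^{-}],\qquad
C(n)=e^{-\frac{r\rho}{r-q}}S_0\,\frac{1}{\sqrt{n}}\,\mathbb{E}[W_n^{+}],
\end{equation*}
with $x^{\pm}=\max(\pm x,0)$. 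For $N\sim N(0,2\beta v(\rho))$ one has $\mathbb{E}[N^{\pm}]=\sqrt{2\beta v(\rho)}/\sqrt{2\pi}=\sqrt{\beta v(\rho)/\pi}$, which is exactly the constant in the statement, and Proposition~\ref{prop2} already gives $W_n\Rightarrow N$. Hence the whole proof reduces to promoting this weak convergence to convergence of the unbounded functionals $x\mapsto x^{\pm}$.

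For that I would reuse the decomposition from the proof of Proposition~\ref{prop2}, namely $W_n=G_n+\xi_n+\epsilon_n$, where $G_n:=\frac{1}{\sqrt n}\sum_{i=1}^n e^{\rho i/n}\frac{\sqrt{2\beta}}{n}B_i$ is \emph{exactly} a centered Gaussian with variance $\sigma_n^2\to 2\beta v(\rho)$, $\epsilon_n$ is the deterministic remainder shown there to vanish, and $\xi_n$ is the remaining error. Since $x\mapsto x^{\pm}$ is $1$-Lipschitz,
\begin{equation*}
\big|\mathbb{E}[W_n^{\pm}]-\mathbb{E}[G_n^{\pm}]\big|\le\mathbb{E}|W_n-G_n|\le\mathbb{E}|\xi_n|+|\epsilon_n|,
\end{equation*}
while $\mathbb{E}[G_n^{\pm}]=\sigma_n/\sqrt{2\pi}\to\sqrt{\beta v(\rho)/\pi}$ from the exact Gaussian formula. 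The crucial step — where convergence of $\xi_n$ in probability is not enough — is to upgrade it to $\mathbb{E}|\xi_n|\to 0$. This follows from the two-sided bounds already established in the proof of Proposition~\ref{prop2}: the upper bound $\xi_n\le\xi_n^{(\mathrm{up})}$ with $\xi_n^{(\mathrm{up})}\ge 0$ and $\mathbb{E}[\xi_n^{(\mathrm{up})}]=O(1/\sqrt n)$ controls $\mathbb{E}[\xi_n^{+}]$, and the lower bound $\xi_n\ge-\frac{\beta}{\sqrt n}\frac{e^\rho-1}{n(1-e^{-\rho/n})}=O(1/\sqrt n)$ controls $\mathbb{E}[\xi_n^{-}]$, so $\mathbb{E}|\xi_n|=\mathbb{E}[\xi_n^{+}]+\mathbb{E}[\xi_n^{-}]\to 0$. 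Combining the displays yields $\mathbb{E}[W_n^{\pm}]\to\sqrt{\beta v(\rho)/\pi}$ and hence the claimed asymptotics for $P(n)$ and $C(n)$.

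The main obstacle is precisely this last point — $L^1$, rather than merely in-probability, control of the error, i.e. the uniform integrability of $\{W_n^{\pm}\}$. An alternative to the Lipschitz-comparison argument, in the spirit of the discussion preceding the proposition, is to represent $W_n$ as a normalized sum of independent (non-identically distributed) summands plus the negligible $\xi_n$ and invoke a non-uniform Berry--Esseen bound to dominate $\mathbb{P}(W_n\le -t)$ uniformly in $n$, then integrate in $t$ over $[0,\infty)$; the cases $\rho\ne 0$ and $\rho=0$ are handled identically. As a cross-check on the answer, put-call parity gives $C(n)-P(n)=e^{-\frac{r\rho}{r-q}}\big(\frac{S_0}{n}\sum_{i=1}^n e^{\rho i/n}-K\big)$, and at the money $\frac{S_0}{n}\sum_{i=1}^n e^{\rho i/n}-A_\infty=\frac{S_0(e^\rho-1)}{2n}+O(n^{-2})$, so $C(n)-P(n)=O(1/n)=o(1/\sqrt n)$, consistent with $P(n)$ and $C(n)$ sharing the same leading $n^{-1/2}$ term.
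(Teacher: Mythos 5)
Your proof is correct, and it takes a genuinely different and simpler route than the paper's. The paper treats $\sum_i X_i$ as a generic normalized sum and passes through a $\delta$-truncation argument, Cauchy--Schwarz, and the non-uniform Berry--Esseen bound to transfer the truncated first moment to its Gaussian limit. You instead exploit a special feature of this problem: the dominant term $G_n=\frac{\sqrt{2\beta}}{n^{3/2}}\sum_{i=1}^n e^{\rho i/n}B_i$ is a finite linear combination of jointly Gaussian variables, hence \emph{exactly} centered Gaussian with variance $\sigma_n^2\to 2\beta v(\rho)$, so $\mathbb{E}[G_n^{\pm}]=\sigma_n/\sqrt{2\pi}$ with no limit theorem needed. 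Combining the $1$-Lipschitz property of $x\mapsto x^{\pm}$ with $L^1$ control of the error then suffices, and your derivation of $\mathbb{E}|\xi_n|\to 0$ from the two one-sided bounds already established in the proof of Proposition~\ref{prop2} (the nonnegative majorant $\xi_n^{(\mathrm{up})}$ with $\mathbb{E}[\xi_n^{(\mathrm{up})}]=O(n^{-1/2})$, and the deterministic $O(n^{-1/2})$ minorant) is sound: $\xi_n^{+}\le\xi_n^{(\mathrm{up})}$ and $\xi_n^{-}\le c_n$ pointwise, giving $\mathbb{E}|\xi_n|\to 0$. What you lose is generality — the Berry--Esseen route would survive if the underlying increments were not Gaussian — but for this proposition your argument is cleaner, avoids the truncation bookkeeping entirely, and arrives at the same constant $\sqrt{\beta v(\rho)/\pi}$. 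The put-call parity cross-check at the end is a nice sanity check and is consistent with the paper.
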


\begin{proof}
\begin{align}
C(n)&=e^{-rt_{n}}\mathbb{E}[(K-A_{n})^{+}]
=e^{-\frac{r\rho}{r-q}}\mathbb{E}\left[(A_{n}-A_{\infty})1_{A_{n}\geq A_{\infty}}\right]
\\
&=e^{-\frac{r\rho}{r-q}}S_{0}\frac{1}{\sqrt{n}}
\mathbb{E}\left[\sqrt{n}\frac{(A_{n}-A_{\infty})}{S_{0}}1_{\sqrt{n}\frac{(A_{n}-A_{\infty})}{S_{0}}\geq 0}\right].
\nonumber
\end{align}
We have proved in Proposition~\ref{prop2} that 
$\sqrt{n}\frac{(A_{n}-A_{\infty})}{S_{0}}\rightarrow N(0,2\beta v(\rho))$ as $n\rightarrow\infty$.
Intuitively, it is clear that 
$\mathbb{E}\left[\sqrt{n}\frac{(A_{n}-A_{\infty})}{S_{0}}1_{\sqrt{n}\frac{(A_{n}-A_{\infty})}{S_{0}}\geq 0}\right]\rightarrow\mathbb{E}[Z1_{Z\geq 0}]$ 
where $Z\sim N(0,2\beta v(\rho))$. 
But in order to prove this, the central limit theorem is not sufficient. 
We need a non-uniform Berry-Esseen bound \cite{Nagaev,Bikelis}, 
which we recall next.
See e.g. Pinelis \cite{Pinelis} for a survey on this subject. 

\begin{theorem}[Non-uniform Berry-Esseen bound]
For any independent and not necessarily identically 
distributed random variables $X_{1},X_{2},\ldots,X_{n}$ with zero means
and finite variances and $\mbox{Var}(W_{n})=1$, where 
$W_{n}=\sum_{i=1}^{n}X_{i}$,
let $F_{n}$ be the cumulative distribution function of $W_{n}$ and $\Phi$ the 
standard normal cumulative distribution function, 
that is $\Phi(x):=\frac{1}{\sqrt{2\pi}}\int_{-\infty}^{x}e^{-y^{2}/2}dy$.

The difference between the two distributions is bounded as \cite{Nagaev,Bikelis}
\begin{equation}
|F_{n}(x)-\Phi(x)|\leq\frac{C\sum_{i=1}^{n}\mathbb{E}|X_{i}|^{3}}{1+|x|^{3}},
\end{equation}
for any $-\infty<x<\infty$, where $C$ is a constant. The best known bound on
this constant in the general (non-identical $X_i$) case is $C<31.935$ \cite{Pinelis}. 
\end{theorem}

We have proved that
\begin{equation}\label{ThreeTerms}
\sqrt{n}\frac{(A_{n}-A_{\infty})}{S_{0}}
=\sum_{i=1}^{n}X_{i}
+\xi_{n}+\varepsilon_{n} \,,
\end{equation}
where
\begin{equation}\label{Xidef}
X_{i}:=\frac{\sqrt{2\beta}}{n^{3/2}}V_{i}\frac{e^{\frac{\rho(n+1)}{n}}-e^{\frac{\rho i}{n}}}{e^{\frac{\rho}{n}}-1},\qquad 1\leq i\leq n,
\end{equation}
where $V_{i}$ are i.i.d. $N(0,1)$ random variables and
\begin{equation}
\varepsilon_{n}:=\frac{1}{\sqrt{n}}\sum_{i=1}^{n}e^{\rho\frac{i}{n}}-\sqrt{n}\frac{e^{\rho}-1}{\rho}.
\end{equation}
The plan of the proof will be to show that the contributions from the
second and third terms in \eqref{ThreeTerms} are negligible, and to
apply the non-uniform Berry-Esseen bound to the first term in \eqref{ThreeTerms}.

From \eqref{ThreeTerms}, we have
\begin{equation}
\mathbb{E}\left[\sqrt{n}\frac{(A_{n}-A_{\infty})}{S_{0}}1_{\sqrt{n}\frac{(A_{n}-A_{\infty})}{S_{0}}\geq 0}\right]
=\mathbb{E}\left[\left(\sum_{i=1}^{n}X_{i}
+\xi_{n}+\varepsilon_{n}\right)1_{\sum_{i=1}^{n}X_{i}
+\xi_{n}+\varepsilon_{n}\geq 0}\right],
\end{equation}
which implies that
\begin{equation}\label{95}
\left|\mathbb{E}\left[\sqrt{n}\frac{(A_{n}-A_{\infty})}{S_{0}}1_{\sqrt{n}\frac{(A_{n}-A_{\infty})}{S_{0}}\geq 0}\right]
-\mathbb{E}\left[\left(\sum_{i=1}^{n}X_{i}\right)1_{\sum_{i=1}^{n}X_{i}
+\xi_{n}+\varepsilon_{n}\geq 0}\right]\right|
\leq\mathbb{E}|\xi_{n}|+|\varepsilon_{n}|.
\end{equation}
We have proved already that $\mathbb{E}|\xi_{n}|$ and $|\varepsilon_{n}|\rightarrow 0$
as $n\rightarrow\infty$. Next, notice that
\begin{align}\label{96}
&\mathbb{E}\left[\left(\sum_{i=1}^{n}X_{i}\right)1_{\sum_{i=1}^{n}X_{i}
+\xi_{n}+\varepsilon_{n}\geq 0}\right]
\\
&=\sqrt{\sum_{i=1}^{n}\mbox{Var}(X_{i})}\mathbb{E}\left[\left(\sum_{i=1}^{n}Y_{i}\right)1_{\sum_{i=1}^{n}Y_{i}
+\bar{\xi}_{n}+\bar{\varepsilon}_{n}\geq 0}\right],
\nonumber
\end{align}
where $Y_{i}=(\sum_{i=1}^{n}\mbox{Var}(X_{i}))^{-1/2}X_{i}$, $\bar{\xi}_{n}=(\sum_{i=1}^{n}\mbox{Var}(X_{i}))^{-1/2}\xi_{n}$
and $\bar{\varepsilon}_{n}=(\sum_{i=1}^{n}\mbox{Var}(X_{i}))^{-1/2}\varepsilon_{n}$, 
so that $\mbox{Var}(\sum_{i=1}^{n}Y_{i})=1$.
Recall that we already proved that
\begin{equation}
\lim_{n\rightarrow\infty}\sum_{i=1}^{n}\mbox{Var}(X_{i})=2\beta v(\rho).
\end{equation}

The expectation on the right-hand side of (\ref{96}) can be written as
\begin{align}\label{Step1}
&\mathbb{E}\left[\left(\sum_{i=1}^{n}Y_{i}\right)1_{\sum_{i=1}^{n}Y_{i}
+\bar{\xi}_{n}+\bar{\varepsilon}_{n}\geq 0}\right]
\\
&=\mathbb{E}\left[\left(\sum_{i=1}^{n}Y_{i}\right)1_{\sum_{i=1}^{n}Y_{i}
+\bar{\xi}_{n}+\bar{\varepsilon}_{n}\geq 0, |\bar{\xi}_{n}+\bar{\varepsilon}_{n}|\leq\delta}\right]
+\mathbb{E}\left[\left(\sum_{i=1}^{n}Y_{i}\right)1_{\sum_{i=1}^{n}Y_{i}
+\bar{\xi}_{n}+\bar{\varepsilon}_{n}\geq 0, |\bar{\xi}_{n}+\bar{\varepsilon}_{n}|>\delta}\right],
\nonumber
\end{align}
for any $\delta > 0$. The second term is bounded from above by the
Cauchy-Schwarz inequality
\begin{align}\label{CSI}
&\left|\mathbb{E}\left[\left(\sum_{i=1}^{n}Y_{i}\right)1_{\sum_{i=1}^{n}Y_{i}
+\bar{\xi}_{n}+\bar{\varepsilon}_{n}\geq 0, |\bar{\xi}_{n}+\bar{\varepsilon}_{n}|>\delta}\right]\right|
\\
&\leq\mathbb{E}\left[\left(\left(\sum_{i=1}^{n}Y_{i}\right)1_{\sum_{i=1}^{n}Y_{i}
+\bar{\xi}_{n}+\bar{\varepsilon}_{n}\geq 0}\right)^{2}\right]^{1/2}
\mathbb{E}[(1_{|\bar{\xi}_{n}+\bar{\varepsilon}_{n}|>\delta})^{2}]^{1/2}
\nonumber
\\
&\leq
\mathbb{E}\left[\left(\sum_{i=1}^{n}Y_{i}\right)^{2}\right]^{1/2}
\mathbb{P}(|\bar{\xi}_{n}+\bar{\varepsilon}_{n}|>\delta)^{1/2}
=\mathbb{P}(|\bar{\xi}_{n}+\bar{\varepsilon}_{n}|>\delta)^{1/2}\rightarrow 0,
\nonumber
\end{align}
as $n\rightarrow\infty$. 
The first term in (\ref{Step1}) can be written furthermore as
\begin{align}\label{2terms}
&\mathbb{E}\left[\left(\sum_{i=1}^{n}Y_{i}\right)1_{\sum_{i=1}^{n}Y_{i}
+\bar{\xi}_{n}+\bar{\varepsilon}_{n}\geq 0, |\bar{\xi}_{n}+\bar{\varepsilon}_{n}|\leq\delta}\right]
\\
&=\mathbb{E}\left[\left(\sum_{i=1}^{n}Y_{i}\right)1_{\sum_{i=1}^{n}Y_{i}
+\bar{\xi}_{n}+\bar{\varepsilon}_{n}\geq 0, |\bar{\xi}_{n}+\bar{\varepsilon}_{n}|\leq\delta,\sum_{i=1}^{n}Y_{i}\geq 0}\right]
\nonumber
\\
&\qquad\qquad
+\mathbb{E}\left[\left(\sum_{i=1}^{n}Y_{i}\right)1_{\sum_{i=1}^{n}Y_{i}
+\bar{\xi}_{n}+\bar{\varepsilon}_{n}\geq 0, 
|\bar{\xi}_{n}+\bar{\varepsilon}_{n}|\leq\delta,
\sum_{i=1}^{n}Y_{i}\leq 0}\right] \,.
\nonumber
\end{align}
The second term in \eqref{2terms} is negative and is bounded in absolute value as
\begin{align}\label{Step2}
&0 < \left|\mathbb{E}\left[\left(\sum_{i=1}^{n}Y_{i}\right)1_{\sum_{i=1}^{n}Y_{i}
+\bar{\xi}_{n}+\bar{\varepsilon}_{n}\geq 0, |\bar{\xi}_{n}+\bar{\varepsilon}_{n}|\leq\delta,\sum_{i=1}^{n}Y_{i}\leq 0}\right]
\right|
\\
&\leq
\mathbb{E}\left[\left(\sum_{i=1}^{n}Y_{i}\right)^{2}\right]^{1/2}
\mathbb{P}\left(\sum_{i=1}^{n}Y_{i}
+\bar{\xi}_{n}+\bar{\varepsilon}_{n}\geq 0, |\bar{\xi}_{n}+\bar{\varepsilon}_{n}|\leq\delta,\sum_{i=1}^{n}Y_{i}\leq 0\right)^{1/2}
\nonumber
\\
&\leq\mathbb{P}\left(-\delta\leq\sum_{i=1}^{n}Y_{i}\leq 0\right)^{1/2}
\rightarrow[\Phi(0)-\Phi(-\delta)]^{1/2} \,,
\nonumber
\end{align}
as $n\rightarrow\infty$ by the central limit theorem. 

Next, we need to estimate the first term in \eqref{2terms}. 
We first give an upper bound,
\begin{equation}\label{Step3}
\mathbb{E}\left[\left(\sum_{i=1}^{n}Y_{i}\right)1_{\sum_{i=1}^{n}Y_{i}
+\bar{\xi}_{n}+\bar{\varepsilon}_{n}\geq 0, |\bar{\xi}_{n}+\bar{\varepsilon}_{n}|\leq\delta,\sum_{i=1}^{n}Y_{i}\geq 0}\right]
\leq
\mathbb{E}\left[\left(\sum_{i=1}^{n}Y_{i}\right)1_{\sum_{i=1}^{n}Y_{i}\geq 0}\right].
\end{equation}
Next, we give a lower bound,
\begin{align}\label{Step4}
&\mathbb{E}\left[\left(\sum_{i=1}^{n}Y_{i}\right)1_{\sum_{i=1}^{n}Y_{i}
+\bar{\xi}_{n}+\bar{\varepsilon}_{n}\geq 0, |\bar{\xi}_{n}+\bar{\varepsilon}_{n}|\leq\delta,\sum_{i=1}^{n}Y_{i}\geq 0}\right]
\\
&\geq
\mathbb{E}\left[\left(\sum_{i=1}^{n}Y_{i}\right)1_{\sum_{i=1}^{n}Y_{i}\geq\delta, 
|\bar{\xi}_{n}+\bar{\varepsilon}_{n}|\leq\delta}\right]\,.
\nonumber
\end{align}
This can be written further as
\begin{align}\label{Step5}
&\mathbb{E}\left[\left(\sum_{i=1}^{n}Y_{i}\right)1_{\sum_{i=1}^{n}Y_{i}\geq\delta, 
|\bar{\xi}_{n}+\bar{\varepsilon}_{n}|\leq\delta}\right]
\\
&=\mathbb{E}\left[\left(\sum_{i=1}^{n}Y_{i}\right)1_{\sum_{i=1}^{n}Y_{i}\geq\delta}\right]
-\mathbb{E}\left[\left(\sum_{i=1}^{n}Y_{i}\right)1_{\sum_{i=1}^{n}Y_{i}\geq\delta, 
|\bar{\xi}_{n}+\bar{\varepsilon}_{n}|>\delta}\right].
\nonumber
\end{align}
By following the same argument as in \eqref{CSI}, we have
\begin{equation}\label{Step6}
\lim_{n\rightarrow\infty}
\mathbb{E}\left[\left(\sum_{i=1}^{n}Y_{i}\right)1_{\sum_{i=1}^{n}Y_{i}\geq\delta, 
|\bar{\xi}_{n}+\bar{\varepsilon}_{n}|>\delta}\right]=0.
\end{equation}
The bounds (\ref{Step3}) and (\ref{Step4}) can be combined with the bounds 
(\ref{Step2}) to obtain simpler
bounds on the expectation in (\ref{2terms}) in the $n\to \infty$ limit.
By \eqref{Step1}-\eqref{Step6}, these bounds translate into corresponding
bounds for the expectation (\ref{Step2}).
We get for any $\delta \geq 0$
\begin{align}
&\liminf_{n\rightarrow\infty}\mathbb{E}\left[\left(\sum_{i=1}^{n}Y_{i}\right)1_{\sum_{i=1}^{n}Y_{i}
+\bar{\xi}_{n}+\bar{\varepsilon}_{n}\geq 0}\right]
\\
&\geq
\liminf_{n\rightarrow\infty}\mathbb{E}\left[\left(\sum_{i=1}^{n}Y_{i}\right)1_{\sum_{i=1}^{n}Y_{i}\geq\delta}\right]
-[\Phi(0)-\Phi(-\delta)]^{1/2},
\nonumber
\end{align}
and
\begin{align}
\limsup_{n\rightarrow\infty}\mathbb{E}\left[\left(\sum_{i=1}^{n}Y_{i}\right)1_{\sum_{i=1}^{n}Y_{i}
+\bar{\xi}_{n}+\bar{\varepsilon}_{n}\geq 0}\right]
\leq
\limsup_{n\rightarrow\infty}\mathbb{E}\left[\left(\sum_{i=1}^{n}Y_{i}\right)1_{\sum_{i=1}^{n}Y_{i}\geq 0}\right]
\,.
\end{align}

Finally, take the $\delta\to 0$ limit, which gives
\begin{eqnarray}
\lim_{n\to \infty} 
\mathbb{E}\left[\left(\sum_{i=1}^{n}Y_{i}\right)1_{\sum_{i=1}^{n}Y_{i}
+\bar{\xi}_{n}+\bar{\varepsilon}_{n}\geq 0}\right]
= 
\lim_{n\to \infty} 
\mathbb{E}\left[\left(\sum_{i=1}^{n}Y_{i}\right)1_{\sum_{i=1}^{n}Y_{i}\geq 0}
\right] \,.
\end{eqnarray} 
The non-uniform Berry-Esseen bound can be applied to compute the expectation
on the right-hand side.

The sums of third moments appearing in the non-uniform Berry-Esseen bound
are estimated as follows.
Recalling that $Y_{i} = (\sum_{i=1}^n \mbox{Var}(X_i))^{-1/2} X_i$ 
where $X_i$ are defined in terms of $N(0,1)$ i.i.d. random variables $V_i$ as 
given in (\ref{Xidef}), we find
\begin{align}
\sum_{i=1}^{n}\mathbb{E}|Y_{i}|^{3}
&=\frac{1}{\left(\sum_{i=1}^{n}\mbox{Var}(X_{i})\right)^{3/2}}
\sum_{i=1}^{n}\mathbb{E}|X_{i}|^{3}
\\
&=\frac{1}{\left(\sum_{i=1}^{n}\mbox{Var}(X_{i})\right)^{3/2}}
\frac{(2\beta)^{3/2}}{n^{1/2}}\mathbb{E}|V_{1}|^{3}
\sum_{i=1}^{n}\left(\frac{e^{\frac{\rho(n+1)}{n}}-e^{\frac{\rho i}{n}}}{n(e^{\frac{\rho}{n}}-1)}\right)^{3}\frac{1}{n}
\nonumber
\\
&\leq\frac{C_{0}(\rho)}{n^{1/2}},
\nonumber
\end{align}
where $C_{0}(\rho)>0$ depends only on $\rho$. 
Therefore, by the non-uniform Berry-Esseen bound, we have
\begin{equation}
|F_{n}(x)-\Phi(x)|\leq\frac{C_{1}(\rho)}{n^{1/2}}\frac{1}{1+|x|^{3}},
\end{equation}
for any $-\infty<x<\infty$, where $F_{n}$ is the cumulative distribution 
function of $\sum_{i=1}^{n}Y_{n}$, and $C_1(\rho)>0$ is another constant.
Hence, we have, with $Z \sim N(0,2\beta v(\rho))$
\begin{align}
\left|\mathbb{E}\left[\left(\sum_{i=1}^{n}Y_{i}\right)1_{\sum_{i=1}^{n}Y_{i}\geq 0}\right]
-\mathbb{E}[Z1_{Z\geq 0}]\right|
&=\left|\int_{0}^{\infty}xdF_{n}(x)-\int_{0}^{\infty}xd\Phi(x)\right|
\\
&=\left|\int_{0}^{\infty}F_{n}(x)dx-\int_{0}^{\infty}\Phi(x)dx\right|
\nonumber
\\
&\leq\int_{0}^{\infty}\frac{C_{1}(\rho)}{n^{1/2}}\frac{1}{1+|x|^{3}}dx.
\nonumber
\end{align}
which goes to zero as $n\rightarrow\infty$. 
We conclude that we have
\begin{equation}
C(n)=e^{-\frac{r\rho}{r-q}}S_{0}\mathbb{E}[Z1_{Z\geq 0}]\frac{1}{\sqrt{n}}(1+o(1)),
\end{equation}
as $n\rightarrow\infty$, where $Z\sim N(0,2\beta v(\rho))$.
The expectation is given explicitly by
\begin{equation}
\mathbb{E}[Z1_{Z\geq 0}]=\sqrt{2\beta v(\rho)}\frac{1}{\sqrt{2\pi}}\int_{0}^{\infty}xe^{-\frac{x^{2}}{2}}dx
=\sqrt{\frac{\beta v(\rho)}{\pi}} \,.
\end{equation}
This completes the proof of the asymptotics for the at-the-money call 
option $C(n)$.
The asymptotics for the price of the at-the-money put option $P(n)$
can be obtained by using put-call parity. The proof is complete. 
\end{proof}

\section{Asymptotics for Floating Strike Asian Options}
\label{sec:5}

We consider in this section the floating strike Asian options, which 
are a variation of the standard Asian option.
The floating strike Asian call option with strike $K$
and weight $\kappa$ has payoff $(\kappa S_{T}-A_{T})^{+}$ at maturity $T$
and the floating strike put option has payoff $(A_{T}-\kappa S_{T})^{+}$ at 
maturity $T$.

The floating-strike Asian option is more difficult to price than the fixed-strike case
because the joint law of $S_{T}$ and $A_{T}$ is needed. Also, the 
one-dimensional PDE that the floating-strike Asian price satisfies after a 
change of num\'{e}raire is difficult to solve numerically as the Dirac delta 
function appears as a coefficient, see e.g. \cite{Ingersoll}, \cite{RogersShi}, 
\cite{Alziary}. See \cite{Ritchken,Chung,HW} for alternative methods which
have been proposed to deal with this problem.

It has been shown by Henderson and Wojakowski \cite{HW} that the floating-strike 
Asian options with continuous time averaging can be related to fixed strike 
ones. 
These equivalence relations have been extended to discrete time averaging Asian
options in \cite{VDLDG}. According to these relations we have
\begin{align}
\label{equiv1}
&e^{-rt_{n}}\mathbb{E}\left[(\kappa S_{t_{n}}-A_{n})^{+}\right] =
e^{-qt_{n}}\mathbb{E}_*\left[(\kappa S_{0}-A_{n})^{+}\right] \,, \\
\label{equiv2}
&e^{-rt_{n}}\mathbb{E}\left[(A_{n}-\kappa S_{t_{n}})^{+}\right] =
e^{-qt_{n}}\mathbb{E}_*\left[(A_{n}-\kappa S_{0})^{+}\right] \,,
\end{align}
The expectations on the right-hand side are taken with respect to a different
measure $\mathbb{Q}_*$, where the asset price $S_t$ follows  the process
\begin{eqnarray}
dS_t = (q-r) S_t dt + \sigma S_t dW_t^*\,,
\end{eqnarray}
with $W_t^*$ a standard Brownian motion in the $\mathbb{Q}_*$ measure.

We are interested in the asymptotics of the price of the Asian call/put options
with payoffs $(\kappa S_{t_{n}}-A_{n})^{+}$ and $(A_{n}-\kappa S_{t_{n}})^{+}$,
\begin{align}
&C(n):=e^{-rt_{n}}\mathbb{E}\left[(\kappa S_{t_{n}}-A_{n})^{+}\right],
\\
&P(n):=e^{-rt_{n}}\mathbb{E}\left[(A_{n}-\kappa S_{t_{n}})^{+}\right].
\end{align}

As $n\rightarrow\infty$, $\kappa S_{t_{n}}-A_{n}\rightarrow 
\kappa S_{0}e^{\rho}-S_{0}\frac{e^{\rho}-1}{\rho}$ a.s.
When $\kappa < \frac{1}{\rho}(1-e^{-\rho})$ the call option is 
out-of-the-money and the put option is in-the-money. 
When $\kappa > \frac{1}{\rho}(1-e^{-\rho})$, the call option is in-the-money
and the put option is out-of-the-money. 
When $\kappa=\frac{1}{\rho}(1-e^{-\rho})$, the call and put options
are at-the-money. 

For the expectations on the right-hand side of the equivalence relations 
(\ref{equiv1}), (\ref{equiv2}) we have that as $n\to \infty$,
$\kappa S_0 - A_n \to \kappa S_0 - S_0 \frac{e^{-\rho}-1}{-\rho}$ a.s.
We conclude that for $\kappa < \frac{1}{\rho}(1-e^{-\rho})$ these
equivalence relations map an out-of-money
floating strike call (put) Asian option onto an out-of-money fixed 
strike put (call) Asian option. For $\kappa > \frac{1}{\rho}(1-e^{-\rho})$ a
similar relation holds between the respective in-the-money Asian options.

Let us derive the asymptotics of the price of the floating strike Asian options.
This could be expressed in terms of the asymptotics of the fixed strike Asian options
obtained in the previous sections, with the help of the equivalence relations.
An alternative way is to derive directly the large deviation result for the 
floating strike Asian options. Then we will relate the rate function to that 
for the fixed strike Asian options, and show that this is consistent with the
equivalence relations.

We have the following result for the asymptotics of floating strike
Asian options.

\begin{proposition}\label{FloatProp}
(i) When $\kappa < \frac{1}{\rho}(1-e^{-\rho})$, the call option is 
out-of-the-money,
\begin{equation}
C(n)=e^{-n\mathcal{H}(0)+o(n)},\qquad\text{as $n\rightarrow\infty$}\,,
\end{equation}
and the put option is in-the-money,
\begin{equation}
P(n)=
\begin{cases}
-\kappa S_{0}e^{-\frac{r}{r-q}\rho}+ S_{0}e^{-\frac{r}{r-q}\rho}\frac{e^{\rho}-1}{\rho}
+\frac{ e^{-\frac{r\rho}{r-q}}S_{0}(e^{\rho}-1)}{2n}+O(n^{-2}) &\rho\neq 0,
\\
(1-\kappa)S_{0}+e^{-n\mathcal{H}(0)+o(n)} &\rho=0.
\end{cases}
\end{equation}

(ii) When $\kappa>\frac{1}{\rho}(1-e^{-\rho})$, the put option is 
out-of-the-money
\begin{equation}
P(n)=e^{-n\mathcal{H}(0)+o(n)},\qquad\text{as $n\rightarrow\infty$},
\end{equation}
and the call option is in-the-money,
\begin{equation}
C(n)=
\begin{cases}
\kappa S_{0}e^{-\frac{r}{r-q}\rho}- S_{0}e^{-\frac{r}{r-q}\rho}\frac{e^{\rho}-1}{\rho}
-\frac{ e^{-\frac{r\rho}{r-q}}S_{0}(e^{\rho}-1)}{2n}+O(n^{-2}) &\rho\neq 0,
\\
S_{0}(\kappa - 1)+e^{-n\mathcal{H}(0)+o(n)} &\rho=0.
\end{cases}
\end{equation}

The rate function in (i) and (ii) is given by 
\begin{equation}\label{HZero}
\mathcal{H}(0):=\inf_{g\in\mathcal{AC}_{0}[0,1],\kappa e^{\sqrt{2\beta}g(1)}-\int_{0}^{1}e^{\sqrt{2\beta}g(y)}dy=0}
\frac{1}{2}\int_{0}^{1}\left(g'(x)-\frac{\rho}{\sqrt{2\beta}}\right)^{2}dx \,.
\end{equation}

(iii) When $\kappa=\frac{1}{\rho}(1-e^{-\rho})$, the call and put options are in-the-money,
\begin{equation}
\lim_{n\rightarrow\infty}\sqrt{n}C(n)
=\lim_{n\rightarrow\infty}\sqrt{n}P(n)
=S_{0}e^{-\frac{r\rho}{r-q}}\mathbb{E}[Z1_{Z\geq 0}],
\end{equation}
where $Z=N(0,s)$ is a normal random variable with mean $0$ and variance
\begin{equation}
s^2 = 
\frac{2\beta}{\rho^{2}}\left[1-\frac{2}{\rho}(e^{\rho}-1)+\frac{e^{2\rho}-1}{2\rho}\right].
\end{equation}
\end{proposition}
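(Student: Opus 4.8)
The plan is to follow the template used for the fixed-strike case, working throughout with the scalar random variable $W_n:=\kappa S_{t_n}-A_n$. First I would establish a large deviation principle for $\mathbb{P}(W_n/S_0\in\cdot)$, read off its rate function, and then deduce the out-of-the-money, in-the-money and at-the-money asymptotics in (i)--(iii) exactly as in Propositions~\ref{prop:13} and \ref{prop:ATM}. For the LDP, write $g(x)=\frac{1}{n}\sum_{j=1}^{\lfloor xn\rfloor}(V_j+\frac{\rho}{\sqrt{2\beta}})$ as in the proofs of Theorem~\ref{theorem3} and Proposition~\ref{LDPProp}, so that $S_{t_n}=S_0e^{\sqrt{2\beta}g(1)-\beta/n}$ and $A_n=S_0\int_0^1 e^{\sqrt{2\beta}g(y)-\beta\lfloor yn\rfloor/n^2}\,dy$, hence $W_n/S_0=\kappa e^{\sqrt{2\beta}g(1)-\beta/n}-\int_0^1 e^{\sqrt{2\beta}g(y)-\beta\lfloor yn\rfloor/n^2}\,dy$. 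Mogulskii's theorem gives an LDP for $\mathbb{P}(g\in\cdot)$ on $L^\infty[0,1]$ with the good rate function $\frac{1}{2}\int_0^1(g'(x)-\frac{\rho}{\sqrt{2\beta}})^2\,dx$ on $\mathcal{AC}_0[0,1]$; the map $g\mapsto\kappa e^{\sqrt{2\beta}g(1)}-\int_0^1 e^{\sqrt{2\beta}g(y)}\,dy$ is continuous in the supremum norm (the integral term by the estimate already used in the proof of Theorem~\ref{theorem3}, and the endpoint term because supremum-norm convergence forces $g_n(1)\to g(1)$); the correction factors $e^{-\beta k/n^2}\in[e^{-\beta/n},1]$ are uniformly negligible, exactly as in Proposition~\ref{LDPProp}. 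The contraction principle then shows that $\mathbb{P}(W_n/S_0\in\cdot)$ satisfies an LDP with rate function $\mathcal{H}$, where $\mathcal{H}(w)$ is the infimum of $\frac{1}{2}\int_0^1(g'(x)-\frac{\rho}{\sqrt{2\beta}})^2\,dx$ over $g\in\mathcal{AC}_0[0,1]$ with $\kappa e^{\sqrt{2\beta}g(1)}-\int_0^1 e^{\sqrt{2\beta}g(y)}\,dy=w$; its value at $w=0$ is the quantity \eqref{HZero}. As in the remark after Proposition~\ref{LDPProp}, $\mathcal{H}$ vanishes precisely at $w_\ast=\kappa e^\rho-\frac{e^\rho-1}{\rho}$, attained by $g(x)=\frac{\rho x}{\sqrt{2\beta}}$, and $\mathrm{sgn}(w_\ast)$ switches at $\kappa=\frac{1}{\rho}(1-e^{-\rho})$; the explicit solution of this constrained variational problem (which I would relegate to the Appendix, as for $\mathcal{I}$) also shows that $\mathcal{H}$ is continuous and monotone between $0$ and $w_\ast$.

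For the out-of-the-money cases I would repeat the argument of Proposition~\ref{prop:13} with $A_n$ replaced by $W_n$. When $\kappa<\frac{1}{\rho}(1-e^{-\rho})$ the call is out-of-the-money: writing $C(n)=e^{-rt_n}S_0\,\mathbb{E}[(W_n/S_0)^+]$, the lower bound $C(n)\ge e^{-rt_n}S_0\epsilon\,\mathbb{P}(W_n/S_0\ge\epsilon)$ and the LDP give $\liminf\frac{1}{n}\log C(n)\ge-\mathcal{H}(\epsilon)\to-\mathcal{H}(0)$ as $\epsilon\downarrow0$, while the upper bound follows from H\"{o}lder's inequality $C(n)\le e^{-rt_n}(\mathbb{E}[(W_n^+)^p])^{1/p}\mathbb{P}(W_n\ge0)^{1/q}$ together with moment bounds on $S_{t_n}^p$ (the reflection-principle estimate \eqref{C3}) and on $A_n^p$, then $q\downarrow1$. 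The symmetric argument applied to $-W_n$ handles the out-of-the-money put when $\kappa>\frac{1}{\rho}(1-e^{-\rho})$, with the same rate $\mathcal{H}(0)$. For the in-the-money cases I would use put-call parity: $(\kappa S_{t_n}-A_n)^+-(A_n-\kappa S_{t_n})^+=\kappa S_{t_n}-A_n$, so $C(n)-P(n)=e^{-rt_n}\mathbb{E}[\kappa S_{t_n}-A_n]=e^{-\frac{r\rho}{r-q}}\bigl(\kappa S_0 e^\rho-S_0\frac{e^\rho-1}{n(1-e^{-\rho/n})}\bigr)$. For $\rho\neq0$, expanding $\frac{1}{n(1-e^{-\rho/n})}=\frac{1}{\rho}+\frac{1}{2n}+O(n^{-2})$ and using that the out-of-the-money leg is $e^{-n\mathcal{H}(0)+o(n)}$ (negligible at every polynomial order) yields the stated $\frac{1}{n}$ expansions; for $\rho=0$ one has $e^{-rt_n}=1$ and $\mathbb{E}[S_{t_n}]=\mathbb{E}[A_n]=S_0$, hence $C(n)-P(n)=(\kappa-1)S_0$ exactly, which combined with the out-of-the-money estimate for the other leg gives the formulas in (i) and (ii).

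For the at-the-money case (iii), the hypothesis $\kappa=\frac{1}{\rho}(1-e^{-\rho})$ is equivalent to $\kappa S_0 e^\rho=A_\infty$, so the constant parts cancel and $\sqrt{n}\,\frac{\kappa S_{t_n}-A_n}{S_0}=\kappa e^\rho\sqrt{2\beta}\,\frac{B_n}{\sqrt{n}}-\sqrt{n}\,\frac{A_n-A_\infty}{S_0}+R_n$, where $R_n$ collects the deterministic tail $\varepsilon_n$, the remainder $\xi_n$ of Proposition~\ref{prop2}, and the quadratic error from linearizing $S_{t_n}$; each of these was already shown to vanish, and in fact $\mathbb{E}|R_n|\to0$. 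Writing $B_i=\sum_{j=0}^{i-1}V_j$ and re-summing by parts exactly as in the proof of Proposition~\ref{prop2}, the leading part becomes a sum $\sum_{j=0}^{n-1}\widehat{X}_j$ of independent centered Gaussians with $\sum_{j=0}^{n-1}\mathrm{Var}(\widehat{X}_j)\to 2\beta\int_0^1\bigl(\kappa e^\rho-\frac{e^\rho-e^{\rho x}}{\rho}\bigr)^2 dx=\frac{2\beta}{\rho^2}\int_0^1(e^{\rho x}-1)^2\,dx$, which equals the stated $s^2$ after using $\kappa e^\rho=\frac{e^\rho-1}{\rho}$ and integrating. Hence $\sqrt{n}\,\frac{\kappa S_{t_n}-A_n}{S_0}\Rightarrow N(0,s^2)$, and the truncation argument of Proposition~\ref{prop:ATM}---splitting on $\{|R_n|\le\delta\}$, bounding the cross terms by Cauchy--Schwarz, applying the non-uniform Berry--Esseen bound to the Gaussian sum (its sum of third moments being $O(n^{-1/2})$), and then letting $\delta\downarrow0$---upgrades this to $\mathbb{E}\bigl[\sqrt{n}\,\frac{\kappa S_{t_n}-A_n}{S_0}1_{\kappa S_{t_n}-A_n\ge0}\bigr]\to\mathbb{E}[Z1_{Z\ge0}]$ with $Z\sim N(0,s^2)$. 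Multiplying by $e^{-\frac{r\rho}{r-q}}S_0/\sqrt{n}$ gives the stated limit of $\sqrt{n}\,C(n)$, and since $\sqrt{n}(C(n)-P(n))=e^{-\frac{r\rho}{r-q}}\sqrt{n}\bigl(\kappa S_0 e^\rho-S_0\frac{e^\rho-1}{n(1-e^{-\rho/n})}\bigr)=O(n^{-1/2})\to0$, the put has the same limit.

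The \emph{main obstacle} is the first step: obtaining the LDP for $W_n$ through the contraction principle, which requires the continuity of $g\mapsto\kappa e^{\sqrt{2\beta}g(1)}-\int_0^1 e^{\sqrt{2\beta}g(y)}\,dy$---in particular continuity of the endpoint evaluation $g\mapsto g(1)$, which did not arise in Theorem~\ref{theorem3}---the uniform control of the multiplicative $e^{-\beta k/n^2}$ corrections, and enough regularity of $\mathcal{H}$ near the origin to pass to the limit $\epsilon\downarrow0$ in the out-of-the-money lower bound. In the at-the-money case the delicate point, exactly as in Proposition~\ref{prop:ATM}, is the uniform-integrability/truncation step that justifies replacing $W_n$ by its Gaussian leading term inside $\mathbb{E}[(\cdot)^+]$.
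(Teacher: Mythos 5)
Your proposal follows essentially the same route as the paper's proof: contraction of the Mogulskii LDP under the map $g\mapsto\kappa e^{\sqrt{2\beta}g(1)}-\int_0^1 e^{\sqrt{2\beta}g(y)}\,dy$ to obtain $\mathcal{H}$, the H\"older/put-call-parity scheme of Proposition~\ref{prop:13} for OTM/ITM, and the summation-by-parts plus non-uniform Berry--Esseen argument of Proposition~\ref{prop:ATM} for ATM, with the same variance $s^2$ after substituting $\kappa e^\rho=\frac{e^\rho-1}{\rho}$. The paper's appendix proof is a briefer sketch deferring to the fixed-strike case; you flag explicitly a few technical points it leaves implicit (continuity of the endpoint evaluation $g\mapsto g(1)$ and the $\epsilon\downarrow 0$ regularity of $\mathcal{H}$), but the underlying argument is the same.
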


\begin{proof}
The proof is similar to the fixed-strike case. 
The sketch of the proof will be given in the Appendix.
\end{proof}

We show next that the rate function $\mathcal{H}(0)$ can be simply related
to $\mathcal{I}(x)$
defined in \eqref{RateFunction}.
Recall that we showed explicitly the dependence of $\rho$ of the respective
rate functions $H(\cdot)$ and $I(\cdot)$. Abusing the notations a bit to emphasize the dependence on $\rho$, 
let $H(\cdot;\rho):=H(\cdot)$ and $I(\cdot;\rho):=I(\cdot)$.
We have the following result, which is clearly consistent with the equivalence
relations (\ref{equiv1}), (\ref{equiv2}).

\begin{proposition}
The rate functions for the fixed strike and floating strike Asian options
are related as
\begin{equation}\label{H0Iequiv}
\mathcal{H}(0;\rho) = \mathcal{I}(\kappa S_0; - \rho)\,.
\end{equation}
\end{proposition}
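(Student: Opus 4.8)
The plan is to prove \eqref{H0Iequiv} by exhibiting an explicit cost-preserving bijection between the feasible set in \eqref{HZero} and the feasible set defining $\mathcal{I}(\kappa S_0;-\rho)$ via \eqref{RateFunction}, namely a time reversal. For $g\in\mathcal{AC}_0[0,1]$ I would set $h(x):=g(1-x)-g(1)$. Then $h$ is absolutely continuous, $h(0)=g(1)-g(1)=0$, so $h\in\mathcal{AC}_0[0,1]$, and $h'(x)=-g'(1-x)$ for a.e. $x$. Substituting $u=1-y$ in the integral,
\begin{equation*}
\int_{0}^{1}e^{\sqrt{2\beta}h(y)}\,dy
= e^{-\sqrt{2\beta}g(1)}\int_{0}^{1}e^{\sqrt{2\beta}g(u)}\,du,
\end{equation*}
so $g$ satisfies the constraint $\kappa e^{\sqrt{2\beta}g(1)}=\int_{0}^{1}e^{\sqrt{2\beta}g(y)}\,dy$ of \eqref{HZero} if and only if $\int_{0}^{1}e^{\sqrt{2\beta}h(y)}\,dy=\kappa$, which is exactly the constraint $\int_0^1 e^{\sqrt{2\beta}h(y)}\,dy=\frac{\kappa S_0}{S_0}$ appearing in \eqref{RateFunction} for $\mathcal{I}(\kappa S_0;\cdot)$.

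Next I would check that the cost functionals match up under this map with the sign flip $\rho\mapsto-\rho$. Changing variables $x\mapsto 1-x$ and using $(-a+b)^2=(a-b)^2$,
\begin{equation*}
\frac12\int_{0}^{1}\left(h'(x)+\frac{\rho}{\sqrt{2\beta}}\right)^{2}dx
=\frac12\int_{0}^{1}\left(-g'(1-x)+\frac{\rho}{\sqrt{2\beta}}\right)^{2}dx
=\frac12\int_{0}^{1}\left(g'(u)-\frac{\rho}{\sqrt{2\beta}}\right)^{2}du,
\end{equation*}
so the $\mathcal{H}(0;\rho)$-cost of $g$ equals the $\mathcal{I}(\cdot;-\rho)$-cost of $h$. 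Since the map $g\mapsto h$ is an involution — its inverse $g(x)=h(1-x)-h(1)$ has the same form — it is a bijection from $\{g\in\mathcal{AC}_0[0,1]:\kappa e^{\sqrt{2\beta}g(1)}=\int_0^1 e^{\sqrt{2\beta}g(y)}dy\}$ onto $\{h\in\mathcal{AC}_0[0,1]:\int_0^1 e^{\sqrt{2\beta}h(y)}dy=\kappa\}$ that carries one cost functional to the other. Taking the infimum over either side gives $\mathcal{H}(0;\rho)=\mathcal{I}(\kappa S_0;-\rho)$.

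There is no real obstacle here; the identity is essentially bookkeeping once the right substitution is chosen. The only minor points I would address are: both feasible sets are nonempty (for $\kappa>0$ a linear path $g(x)=cx$ can be tuned to meet either constraint, since $c\mapsto(1-e^{-\sqrt{2\beta}c})/(\sqrt{2\beta}c)$ is onto $(0,\infty)$), and we only need equality of infima, so no attainment, compactness, or lower-semicontinuity argument is needed beyond what is already available. I would also remark that \eqref{H0Iequiv} is precisely what the equivalence relations \eqref{equiv1}, \eqref{equiv2} predict: the change of numéraire reverses the drift $r-q\mapsto q-r$, hence $\rho\mapsto-\rho$, and turns the floating-strike payoff $\kappa S_{t_n}-A_n$ into the fixed-strike payoff $\kappa S_0-A_n$ with strike $K=\kappa S_0$, which at the level of rate functions is exactly the asserted relation.
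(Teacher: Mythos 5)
Your proof is correct and uses exactly the same time-reversal substitution as the paper ($h(x)=g(1-x)-g(1)$, i.e.\ $g(x)=g(1)+h(1-x)$), transforming the floating-strike constraint and cost functional into the fixed-strike ones with $\rho\mapsto-\rho$. You are somewhat more explicit than the paper about the bijectivity of the map and the nonemptiness of the feasible sets, but the argument is the same.
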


\begin{proof}
The functionals in the variational problems for $\mathcal{H}(0)$ and
$\mathcal{I}(x)$ are identical, and the only difference is in the 
constraints on $g(x)$. The constraints can be related as follows.

Let us express $g(x)$ in the variational problem for $\mathcal{H}(0)$  in terms
of a new function $h(x)$ defined as $g(x) = g(1) + h(1-x)$. This function
satisfies the constraint $h(0)=0$. The rate function is now given by
\begin{equation}\label{HZeroh}
\mathcal{H}(0):=\inf_{h\in\mathcal{AC}_{0}[0,1],\kappa-\int_{0}^{1}e^{\sqrt{2\beta}h(y)}dy=0}
\frac{1}{2}\int_{0}^{1}\left(h'(x)+\frac{\rho}{\sqrt{2\beta}}\right)^{2}dx,
\end{equation}
It is easy to see that this variational problem is identical to that for the
rate function $\mathcal{I}(x)$, identifying $K/S_0=\kappa$ and $\rho \to -\rho$.
This concludes the proof of the relation (\ref{H0Iequiv}).
\end{proof}

\section{Implied volatility and numerical tests}
\label{sec:6}

It has become accepted market practice to quote European option prices
in terms of their implied volatility. This is defined as that value of the
log-normal volatility which, upon substitution into the Black-Scholes formula,
reproduces the market option prices. A similar normal implied volatility 
can be defined in terms of the Bachelier formula. 

Although Asian options are quoted in practice by price, and not by implied
volatility, it is convenient to define an equivalent 
implied volatility also for these options. We will define the equivalent
log-normal implied volatility of an Asian option with strike $K$ and 
maturity $T$ as that value of the volatility $\Sigma_{\rm LN}(K,T)$
which reproduces the Asian option
price when substituted into the Black-Scholes formula for an European 
option with the same parameters $(K,T)$
\begin{align}\label{CAsian}
C(K,S_0,T) &= e^{-rT} (A_\infty \Phi(d_1) - K \Phi(d_2)), \\
P(K,S_0,T) &= e^{-rT} (K \Phi(-d_2) - A_\infty \Phi(-d_1)) ,\nonumber
\end{align}
where
\begin{equation}
A_\infty = S_0 \frac{1}{\rho} (e^\rho-1) = 
\frac{1}{(r-q)T} (e^{(r-q)T}-1),
\end{equation}
and
\begin{equation}
d_{1,2} = \frac{1}{\Sigma_{\rm LN}(K,T)\sqrt{T}}\left(
\log\frac{A_\infty}{K} \pm \frac12 \Sigma_{\rm LN}^2(K,T) T \right).
\end{equation}
The equivalent log-normal volatility $\Sigma_{\rm LN}$ defined in this way
exists for any Asian option call price $C(K,S_0,T)$ satisfying the
Merton bounds $(A_\infty - K)^+ \leq e^{rT} C(K,S_0,T) \leq A_\infty$ 
\cite{RoperRutk}. For finite $n$ the price of the Asian option is
bounded as\footnote{The lower bound follows from the convexity of the
payoff $(x-K)^+$ and the upper bound follows from $(x-K)^+ \leq x$.} 
$(\mathbb{E}[A_n] - K)^+ \leq e^{rT} C(K,S_0,T) \leq \mathbb{E}[A_n]$
with $\mathbb{E}[A_n]=\frac{1}{n} \frac{e^\rho-1}{1-e^{-\rho/n}}$, 
so the required bounds are satisfied for $n\to \infty$. 

One can define also a normal
equivalent volatility  $\Sigma_{\rm N}(K,T)$ of an Asian option, as that
volatility which reproduces the Asian option price when substituted into the
Bachelier option pricing formula. 

We would like to study the implications of the asymptotic results for 
Asian option prices derived in  Section~\ref{sec:4} for the equivalent
log-normal volatility $\Sigma_{\rm LN}$, and for the equivalent normal 
volatility $\Sigma_{\rm N}$. 
This is given by the following result.

\begin{proposition}\label{prop:impvol}
i) The asymptotic normal and log-normal equivalent implied volatilities of an 
OTM Asian option in the $n\to \infty$ limit at constant $\beta = \frac12
\sigma^2 t_n n$ are given by
\begin{align}\label{SigBS}
& \lim_{n\to \infty} \frac{\Sigma_{\rm LN}^2(K,n)}{\sigma^2} = 
\frac12 \frac{\log^2(K/A_\infty)}
{\mathcal{J}(K/S_0,\rho)}\,, \\
\label{SigN}
& \lim_{n\to \infty} \frac{\Sigma_N^2(K,n)}{\sigma^2} = 
\frac12 \frac{(K - A_\infty)^2}
{\mathcal{J}(K/S_0, \rho)}\,,
\end{align}
where $\mathcal{J}(K/S_0,\rho)$ is related to the rate function 
$\mathcal{I}(x)$ as in (\ref{Jcaldef}), and is given by 
Proposition~\ref{prop:rhogeneral}.

ii) The equivalent log-normal implied volatility for $n\to \infty$ of an 
at-the-money Asian option is
\begin{equation}\label{ATM}
\lim_{n\to \infty} \frac{\Sigma_{\rm LN}(A_\infty,n)}{\sigma} = 
\frac{S_0}{A_\infty} \sqrt{v(\rho)},
\end{equation}
and the corresponding result for the equivalent normal implied volatility is
\begin{equation}\label{ATMN}
\lim_{n\to \infty} \frac{\Sigma_{\rm N}(A_\infty,n)}{\sigma} = 
S_0 \sqrt{v(\rho)} \,.
\end{equation}
\end{proposition}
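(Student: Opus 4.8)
The plan is to combine the price asymptotics already established --- Proposition~\ref{prop:13} for out-of-the-money options and Proposition~\ref{prop:ATM} for the at-the-money case --- with the behaviour of the Black--Scholes and Bachelier pricing formulas in the degenerate regime where the total variance $w:=\Sigma^2 t_n$ tends to $0$. Throughout I use $\sigma^2 t_n = 2\beta/n$, equivalently $t_n = 2\beta/(n\sigma^2)$, from \eqref{AssumpI}, and the fact that $e^{-rt_n}=e^{-r\rho/(r-q)}$ is a fixed constant in the limit. Note also that, since $e^{rt_n}C(n)\to (A_\infty-K)^+$ and $e^{rt_n}P(n)\to(K-A_\infty)^+$, the Merton bounds needed for the equivalent implied volatilities to be well defined hold for all large $n$.

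For part i) I would argue by a monotonicity/sandwich argument. Fix an OTM strike, say $K>A_\infty$ (the put case $K<A_\infty$ is symmetric). The analytic input is the elementary expansion, valid at fixed forward $A_\infty$ and fixed $K\neq A_\infty$,
\[
-\log\bigl(A_\infty\Phi(d_1)-K\Phi(d_2)\bigr)=\frac{\log^2(K/A_\infty)}{2w}+O\!\bigl(\log(1/w)\bigr),\qquad w\to0^+,
\]
which follows from $d_{1,2}=\tfrac{\log(A_\infty/K)}{\sqrt w}\pm\tfrac12\sqrt w$ and the Gaussian tail estimate $\Phi(-x)=\tfrac{\phi(x)}{x}(1+O(x^{-2}))$ (here $\phi$ is the standard normal density). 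Writing $w_n:=\Sigma_{\rm LN}^2(K,n)t_n$ and using that $\Sigma\mapsto C_{\rm BS}(\Sigma)$ is strictly increasing, the bounds $e^{-n(\mathcal I(K)\pm\varepsilon)}$ on $e^{rt_n}C(n)$ coming from Proposition~\ref{prop:13} translate into two-sided bounds on $w_n$; inverting the displayed asymptotics gives $\liminf n w_n\ge \tfrac{\log^2(K/A_\infty)}{2(\mathcal I(K)+\varepsilon)}$ and $\limsup n w_n\le\tfrac{\log^2(K/A_\infty)}{2(\mathcal I(K)-\varepsilon)}$, so letting $\varepsilon\downarrow0$ yields $n w_n\to \log^2(K/A_\infty)/(2\mathcal I(K))$. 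Dividing by $n t_n=2\beta/\sigma^2$ and recalling $\mathcal I(K)=\tfrac1{2\beta}\mathcal J(K/S_0,\rho)$ gives \eqref{SigBS}. The normal equivalent volatility \eqref{SigN} is obtained identically, replacing the Black--Scholes formula by the Bachelier formula and the displayed expansion by $-\log\bigl((A_\infty-K)\Phi(d)+\sqrt w\,\phi(d)\bigr)=\tfrac{(K-A_\infty)^2}{2w}+O(\log(1/w))$ with $d=(A_\infty-K)/\sqrt w$.

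For part ii) I would use the sharper statement of Proposition~\ref{prop:ATM}, namely $e^{rt_n}C(n)=S_0\sqrt{\beta v(\rho)/\pi}\,n^{-1/2}(1+o(1))$. For $K=A_\infty$ the Black--Scholes call value is $e^{-rt_n}A_\infty\bigl(\Phi(\tfrac12\Sigma\sqrt{t_n})-\Phi(-\tfrac12\Sigma\sqrt{t_n})\bigr)$, and the bracket equals $\Sigma\sqrt{t_n}/\sqrt{2\pi}\,(1+o(1))$ once $\Sigma\sqrt{t_n}\to0$ (verified a posteriori, or enforced by the same monotonicity argument as in part i). Equating with the Proposition~\ref{prop:ATM} asymptotics and substituting $t_n=2\beta/(n\sigma^2)$ makes all $n$-dependence cancel and leaves $\tfrac{A_\infty}{\sigma}\Sigma_{\rm LN}=S_0\sqrt{v(\rho)}$, which is \eqref{ATM}. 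For the normal implied volatility one uses instead the ATM Bachelier value $e^{-rt_n}\Sigma_{\rm N}\sqrt{t_n}/\sqrt{2\pi}$ and obtains \eqref{ATMN} the same way; the put statements follow from put--call parity. The main obstacle is the OTM part: one must invert the Black--Scholes (and Bachelier) pricing map in a regime where the option price \emph{and} the maturity both degenerate, so the usual deep-OTM or small-time expansions are not directly quotable. The crux is to establish the displayed asymptotic expansion with enough control and to check that its $O(\log(1/w))$ correction contributes only $o(n)$ after applying $-\tfrac1n\log$; once this is done, monotonicity of the pricing formula converts the large-deviation limit $-\tfrac1n\log C(n)\to\mathcal I(K)$ directly into convergence of $n w_n$, and hence of $\Sigma^2/\sigma^2$. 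The at-the-money part is then routine.
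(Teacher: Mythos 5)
Your proposal is correct and follows essentially the same route as the paper: feed the large-deviation asymptotics of Proposition~\ref{prop:13} (resp.\ the CLT asymptotics of Proposition~\ref{prop:ATM}) into the small-total-variance expansion of the Black--Scholes (resp.\ Bachelier) formula, then use $\sigma^2 t_n = 2\beta/n$ to cancel the $n$-dependence. The paper's appendix records the key analytic input as $\lim_{T\to 0}(\Sigma^2 T)\log\bigl(A_\infty\bar C_{\rm BS}(K/A_\infty,\Sigma^2 T)\bigr)=-\tfrac12\log^2(K/A_\infty)$ and then passes to the limit by a brief formal manipulation. Your sandwich argument --- using strict monotonicity of $\Sigma\mapsto C_{\rm BS}(\Sigma)$ to convert the two-sided bounds $e^{-n(\mathcal{I}(K)\pm\varepsilon)}$ into two-sided bounds on $n w_n$ before letting $\varepsilon\downarrow 0$ --- makes explicit the inversion step and the a~priori fact that $\Sigma_{\rm LN}^2 t_n\to 0$, which the paper leaves implicit. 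That is a modest gain in rigor rather than a different method, and the rest (ATM case via the linear expansion of $\Phi(\tfrac12\Sigma\sqrt{T})-\Phi(-\tfrac12\Sigma\sqrt{T})$, the normal case via Bachelier, and the put case via put--call parity) matches the paper's proof.
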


\begin{proof}
The proof is given in the Appendix.
\end{proof}

We note that in (\ref{SigBS}) $\sigma$ depends implicitly on $n$ as the
limit is taken at fixed $\beta$. In particular, in the fixed maturity
regime $\tau n = T$ fixed, we have $\sigma \sim n^{-1/2}$
such that both $\sigma$ and $\Sigma_{\rm LN}(K)$
approach 0 as $n\to \infty$, in such a way that their ratio approaches
a finite non-zero value. We will use this relation for finite $n$
to approximate the 
equivalent log-normal implied volatility $\Sigma_{\rm LN}(K)$ as
\begin{equation}
\Sigma^2_{\rm LN}(K,n) = \sigma^2 \frac12 \frac{\log^2(K/A_\infty)}
{\mathcal{J}(K/S_0,\rho)}\,.
\end{equation}
and analogously for $\Sigma_{\rm N}(K)$. These volatilities can be used
together with (\ref{CAsian}) to obtain approximations for Asian option prices.

We show in Table~\ref{Table:1} numerical results for the asymptotic 
approximation for the Asian options obtained from (\ref{CAsian}), for a
few scenarios proposed in \cite{FMW}.
They are compared against a few alternative methods considered in the
literature: the method of Linetsky \cite{Linetsky}, PDE methods 
\cite{FPP2013,VecerRisk}, inversion of Laplace transform \cite{DS,Shaw}, 
and the log-normal approximation \cite{Levy}
corresponding to continuous-time averaging. 

The numerical agreement of the asymptotic result with the precise results of 
the spectral expansion \cite{Linetsky} is very good, and the difference is 
always below $0.5\%$ in relative value. A more appropiate test compares the
difference to the option Vega $\mathcal{V}$: the approximation error of the
asymptotic result is always below $0.24\mathcal{V}$ (compared with the log-normal
approximation which has an error as large as $1.54\mathcal{V}$ (for scenario 7)).
This is smaller than the typical precision on $\sigma$ around the ATM point,
and compares well with typical bid-ask spreads for
Asian options which can be $\sim 1\mathcal{V}$ for maturities up to 1-2Y.

\begin{remark}
We comment on the relation of the asymptotic implied volatility (\ref{SigBS})
to the log-normal approximation \cite{Levy}.
The log-normal approximation \cite{Levy} corresponds to a flat equivalent 
log-normal volatility 
$\Sigma_{\rm LN}^{(Levy)}(T)$. In contrast, the asymptotic equivalent log-normal 
implied volatility $\Sigma_{\rm LN}(K)$ given by (\ref{SigBS}) has a 
non-trivial dependence on strike. It can be easily shown that the
log-normal implied volatility reproduces the asymptotic equivalent
implied volatility at the ATM point in the limit 
$\lim_{\sigma^2 T\to 0,r T=\rho} \Sigma_{\rm LN}^{(Levy)}(T) = 
\Sigma_{\rm LN}(K=A_\infty)$. 
\end{remark}

The results of Table~\ref{Table:1} show that the
asymptotic result is an improvement over the log-normal approximation.

\begin{remark}
The results of \cite{FMW,Linetsky} are obtained using continuous-time
averaging, while our result (\ref{SigBS}) was derived for discrete time
Asian options. However, we note that the result (\ref{SigBS}) does not
depend on the size of the time step $\tau$, so it should hold for arbitrarily
small time step. It is shown elsewhere \cite{SMAO} that a result similar to
(\ref{SigBS}) holds for the small maturity limit of continuous time Asian 
options at fixed $\sigma,r,q$, with the substitution $\rho=0$. The limiting 
procedure adopted in this paper, of taking $n\to \infty$
at fixed $\beta,\rho$, allows one to take into account the dependence on $r,q$ 
in the short maturity expansion.
\end{remark}

\begin{table}[t!]
\caption{\label{Table:0} 
The 7 benchmark scenarios considered for pricing Asian options 
in \cite{FMW,Linetsky}, etc. Here, $q=0$.}
\begin{center}
\begin{tabular}{c|ccccc}
\hline
Scenario & $r$ & $T$ & $S_0$ & $K$ & $\sigma$ \\
\hline
\hline
1 & 0.02   & 1 & 2    & 2 & 0.1  \\
2 & 0.18   & 1 & 2    & 2 & 0.3  \\
3 & 0.0125 & 2 & 2    & 2 & 0.25  \\
4 & 0.05   & 1 & 1.9  & 2 & 0.5  \\
5 & 0.05   & 1 & 2    & 2 & 0.5   \\
6 & 0.05   & 1 & 2.1  & 2 & 0.5  \\
7 & 0.05   & 2 & 2    & 2 & 0.5   \\
\hline
\end{tabular}
\end{center}
\end{table}

\begin{table}[t!]
\caption{\label{Table:1} 
Numerical results for Asian call options under the 7 scenarios considered
in \cite{FMW,Linetsky}, etc. 
FPP3: the 3rd order approximations in Foschi et al. \cite{FPP2013},
Vecer: the PDE method from \cite{VecerRisk},
MAE3: the matched asymptotic expansions 
from Dewynne and Shaw \cite{DS}, Mellin500: the 
Mellin transform based method in Shaw \cite{Shaw}.
The last column shows the results from the
spectral expansion in \cite{Linetsky}, and the LN column shows the
result of the log-normal approximation \cite{Levy}. The column PZ
gives the results of the asymptotic result of this paper using (\ref{CAsian}). }
\begin{center}
\begin{tabular}{c|ccccccc}
\hline
Scenario & FPP3 & MAE3 & Mellin500 & Vecer & PZ & LN & Linetsky\\
\hline
\hline
1 & 0.055986 & 0.055986 & 0.056036 & 0.055986 & 0.055998 & 0.056054 & 0.055986 \\
2 & 0.218387 & 0.218369 & 0.218360 & 0.218388 & 0.218480 & 0.219829 & 0.218387 \\
3 & 0.172267 & 0.172263 & 0.172369 & 0.172269 & 0.172460 & 0.173490 & 0.172269 \\
4 & 0.193164 & 0.193188 & 0.192972 & 0.193174 & 0.193692 & 0.195379 & 0.193174 \\
5 & 0.246406 & 0.246382 & 0.246519 & 0.246416 & 0.246944 & 0.249791 & 0.246416 \\
6 & 0.306210 & 0.306139 & 0.306497 & 0.306220 & 0.306744 & 0.310646 & 0.306220 \\
7 & 0.350040 & 0.349909 & 0.348926 & 0.350095 & 0.351517 & 0.359204 & 0.350095 \\
\hline
\end{tabular}
\end{center}
\end{table}



In order to address the performance of the asymptotic results in the
small volatility and maturity regime we compare our results against those in
Table 4 of \cite{FPP2013}. As pointed out in \cite{Shaw,FMW}, some  of the
methods proposed  in the literature have numerical issues in these regimes of
the model parameters. 
The scenarios considered for this test correspond to $\sigma = 0.01, S_0=100, 
r=0.05, q=0$, and three choices of maturity and strike as shown in 
Table~\ref{Table:SmallVol}. For reasons of space economy, we present only 
a subset of the test results in Table 4 of \cite{FPP2013}, which show the best
agreement with a Monte Carlo calculation. The asymptotic results are in very
good agreement with the alternative methods shown. We note that the computing
time required by the asymptotic method is very good, as it requires only the
solution of a simple non-linear algebraic equation, and the evaluation of a 
function. 

\begin{table}
\caption{\label{Table:SmallVol} 
Test results for Asian call options under small volatility $\sigma=0.01,
S_0=100, r=0.05, q=0$.
The column FPP3 shows the 3rd order approximations in Foschi et al. \cite{FPP2013}.
The column MAE3 shows the results using the matched asymptotic expansions 
from Dewynne and Shaw \cite{DS}. The column Mellin500 shows the results of the 
Mellin transform based method in Shaw \cite{Shaw}. The column PZ shows the 
asymptotic results of this paper.
}
\begin{center}
\begin{tabular}{cc|c|ccc}
\hline
$T$ & $K$ & PZ & FPP3 & MAE3 & Mellin500 \\
\hline
\hline
0.25 & 99  & 1.60739   & $1.60739 \times 10^0$ & $1.60739 \times 10^0$ & $1.51718 \times 10^0$ \\
0.25 & 100 & 0.621359  & $6.21359 \times 10^{-1}$ & $6.21359 \times 10^{-1}$ & $6.96855 \times 10^{-1}$ \\
0.25 & 101 & 0.0137615 & $1.37618 \times 10^{-2}$ & $1.37615 \times 10^{-2}$ & $1.60361 \times 10^{-2}$ \\
\hline
1.00 & 97   & 5.2719   & $5.27190 \times 10^0$ & $5.27190 \times 10^0$ & $5.27474 \times 10^0$ \\
1.00 & 100  & 2.41821   & $2.41821 \times 10^0$ & $2.41821 \times 10^0$ & $2.43303 \times 10^0$ \\
1.00 & 103  & 0.0724339 & $7.26910 \times 10^{-2}$ & $7.24337 \times 10^{-2}$ & $8.50816 \times 10^{-2}$ \\
\hline
5.00 & 80   & 26.1756   & $2.61756 \times 10^1$ & $2.61756 \times 10^1$ & $2.61756 \times 10^1$ \\
5.00 & 100  & 10.5996   & $1.05996 \times 10^1$ & $1.05996 \times 10^1$ & $1.05993 \times 10^1$ \\
5.00 & 120  & $5.8331\cdot 10^{-6}$ & $2.06699 \times 10^{-5}$ & $5.73317 \times 10^{-6}$ & $1.42235 \times 10^{-3}$ \\
\hline
\end{tabular}
\end{center}
\end{table}


We present in Table~\ref{Table:discreteAsian} a comparison with the 
test results for discretely sampled 
Asian options corresponding to the scenarios considered in Table B of 
\cite{VecerRisk}. These scenarios have parameters
$r = 0.1\,, q = 0\,, \sigma = 0.4\,,T = 1\,,  K = 100$.
The results are compared against those obtained in 
\cite{VecerRisk,TavellaRandall,Curran}. The asymptotic results agree with
the alternative methods up to about 1\%-1.5\% in relative error.

\begin{table}[t!]
\caption{\label{Table:discreteAsian} 
Asymptotic results for discretely sampled Asian call options under the 
scenarios considered in Table B of \cite{VecerRisk}, comparing with the
results of \cite{VecerRisk,TavellaRandall,Curran}.}
\begin{center}
\begin{tabular}{c|ccc}
\hline
 & $S_0 = 95$ & $S_0=100$ & $S_0=105$ \\
\hline
\hline
\multicolumn{1}{c|}{} & \multicolumn{3}{c}{Vecer}  \\
\hline
$n=250$  & 8.4001 & 11.1600 & 14.3073 \\
$n=500$  & 8.3826 & 11.1416 & 14.2881 \\
$n=1000$ & 8.3741 & 11.1322 & 14.2786 \\
$\infty$ & 8.3661 & 11.1233 & 14.2696 \\
\hline
\multicolumn{1}{c|}{} & \multicolumn{3}{c}{Tavella-Randall}  \\
\hline
$n=250$  & 8.3972 & 11.1573 & 14.3054 \\
$n=500$  & 8.3804 & 11.1392 & 14.2866 \\
$n=1000$ & 8.3719 & 11.1300 & 14.2771 \\
$\infty$ & 8.3640 & 11.1215 & 14.2681 \\
\hline
\multicolumn{1}{c|}{} & \multicolumn{3}{c}{Curran}  \\
\hline
$n=250$  & 8.3972 & 11.1572 & 14.3048 \\
$n=500$  & 8.3801 & 11.1388 & 14.2857 \\
$n=1000$ & 8.3715 & 11.1296 & 14.2762 \\
$\infty$ & $-$ & $-$ & $-$ \\
\hline
PZ             &  8.3789  & 11.1362 & 14.2818 \\
\hline
\end{tabular}
\end{center}
\end{table}


Finally, in order to test the asymptotic relation (\ref{SigBS}) for the
equivalent log-normal implied volatility we show in Figure~\ref{Fig:sigLN} 
the equivalent log-normal implied volatility of several Asian options obtained 
by numerical simulation (black dots). These results
are obtained by Monte Carlo pricing of Asian options with parameters
\begin{eqnarray}\label{scenario}
\sigma=0.2\,,\qquad r=q=0\,,\qquad \tau=0.01 ,
\end{eqnarray}
and $n=50,100,200$ averaging dates. 
The Monte Carlo calculation used $N_{\rm MC}=10^6$ samples. The strikes 
considered cover a region around the 
ATM point $K=S_0$; the numerical precision of the simulation decreases 
rapidly outside of this region. We note very good agreement with the 
asymptotic result of Proposition~\ref{prop:impvol}, even for $n$ as low as 50.

\section*{Acknowledgements}

We are grateful to an anonymous referee and the editor for their helpful comments and suggestions.
D.~P. would like to thank Dyutiman Das and Roussen Roussev for useful 
discussions about Asian options in financial practice.
L.~Z. is partially supported by NSF Grant DMS-1613164.

\section{Appendix}

\begin{proof}[Proof of Proposition \ref{VarProblem}]

The variational problem appearing in equation (\ref{varprob40}) can be
written equivalently by introducing the  function $f(x) = bg(x)$ as
\begin{equation}\label{varprob40p}
\lambda(a,b;\rho) = \frac{1}{b^2} \sup_{f\in\mathcal{AC}_{0}[0,1]}
\left\{ - ab^2 \int_{0}^{1}e^{f(x)}dx
-\frac{1}{2}\int_{0}^{1}\left(f'(x)-\rho \right)^{2}dx
\right\}\,.
\end{equation}

The functional $\Lambda[f]$ appearing in this variational problem can be
rewritten as
\begin{align}\label{Lambdadef}
\Lambda[f] &= -ab^2 \int_0^1 dx e^{f(x)} - \frac12 \int_0^1
\left(f'(x) - \rho \right)^2 dx   \\
&= -a b^2 \int_0^1  e^{f(x)} dx- \frac12 \int_0^1
[f'(x)]^2 dx + f(1) \rho - \frac12 \rho^2 \,. \nonumber
\end{align}
In the second line we integrated by parts and wrote $\int_0^1 f'(x) dx  = f(1)$
where we took into account the constraint $f(0)=0$.
Although in Proposition~\ref{VarProblem} we have $a>0$, the 
variational problems in Section~\ref{sec:4} require also the case
of negative $a$. For this reason we will treat here both cases of positive 
and negative $a$. 

The optimal function $f(x)$ satisfies the Euler-Lagrange equation 
\begin{equation}\label{ODE}
f''(x) = a b^2 e^{f(x)},
\end{equation}
with the boundary conditions
\begin{equation}\label{bc}
f(0) = 0\,,\qquad f'(1) = \rho \,.
\end{equation}
The second boundary condition (at $x=1$) is a transversality condition.

We observe that the quantity
\begin{equation}\label{Edef}
E = - ab^2 e^{f(x)} + \frac12 [f'(x)]^2 = -ab^2 e^{f(1)} + \frac12 \rho^2
\end{equation}
is a constant of motion of the differential equation (\ref{ODE}). Its value 
was expressed in terms of $f(1)$ by taking $x=1$ and using the boundary 
condition (\ref{bc}). Taking the integral of this relation over $x:(0,1)$ can be
used to eliminate the integral of $\frac12 [f'(x)]^2$ in the functional
$\Lambda[f]$. This can be put into the equivalent form
\begin{align}\label{Lambda2}
\Lambda[f]  = -2 a b^2 \int_0^1  e^{f(x)} dx + ab^2 e^{f(1)}
 + f(1) \rho -  \rho^2 \,. 
\end{align}

The Euler-Lagrange equation (\ref{ODE}) can be solved exactly. Two
independent solutions of this equation are
\begin{align}\label{f1}
& f_1(x) = \delta x - 2 \log\left( \frac{e^{\delta x} + \gamma}{1+\gamma} \right), \\
\label{f2}
& f_2(x) = - 2 \log |\cos (\xi x + \eta) | + 2\log |\cos \eta |\,.
\end{align}
The first solution was given in \cite{GR} where a related differential equation
appears in the context of optimal sampling for Monte Carlo pricing of Asian
options.
It is easy to see by direct substitution into (\ref{ODE}) that these functions
satisfy this equation, with the appropriate boundary condition at $x=0$.
Requiring that the coefficient in this equation and the boundary condition 
$f'(1)=\rho$ are satisfied gives two conditions. 

For $f_1(x)$ we have the conditions
\begin{align}\label{f1const}
2\gamma \delta^2 = - a b^2 (1 + \gamma)^2 ,\qquad
\delta \frac{\gamma - e^\delta}{\gamma + e^\delta} = \rho \,.
\end{align}
Eliminating $\gamma$ between these two equations as
$\gamma = \frac{\delta + \rho}{\delta-\rho}e^\delta$ 
gives an equation for $\delta$:
\begin{equation}\label{deltaeq}
\delta^2 - \rho^2 = - 2ab^2 \left( \cosh\left(\frac12\delta\right) + \frac{\rho}{\delta}
\sinh\left(\frac12\delta\right) \right)^2\,.
\end{equation}

For $f_2(x)$ we obtain the conditions
\begin{align}\label{f2const}
2\xi^2 = ab^2 \cos^2\eta \,,\qquad
2\xi \tan(\xi + \eta) = \rho \,.
\end{align}
The second relation allows one to eliminate $\eta$ as
\begin{equation}
\tan \eta = \frac{\frac12 \rho - \xi \tan\xi}
{\xi + \frac12\rho \tan\xi} \,.
\end{equation}
We obtain the equation for $\xi$
\begin{equation}\label{lambdaeq}
2\xi^2 (4\xi^2 + \rho^2) = ab^2 
(2\xi\cos\xi + \rho \sin\xi)^2 \,.
\end{equation}

\begin{figure}[t!]
\begin{center}
\includegraphics[height=25mm]{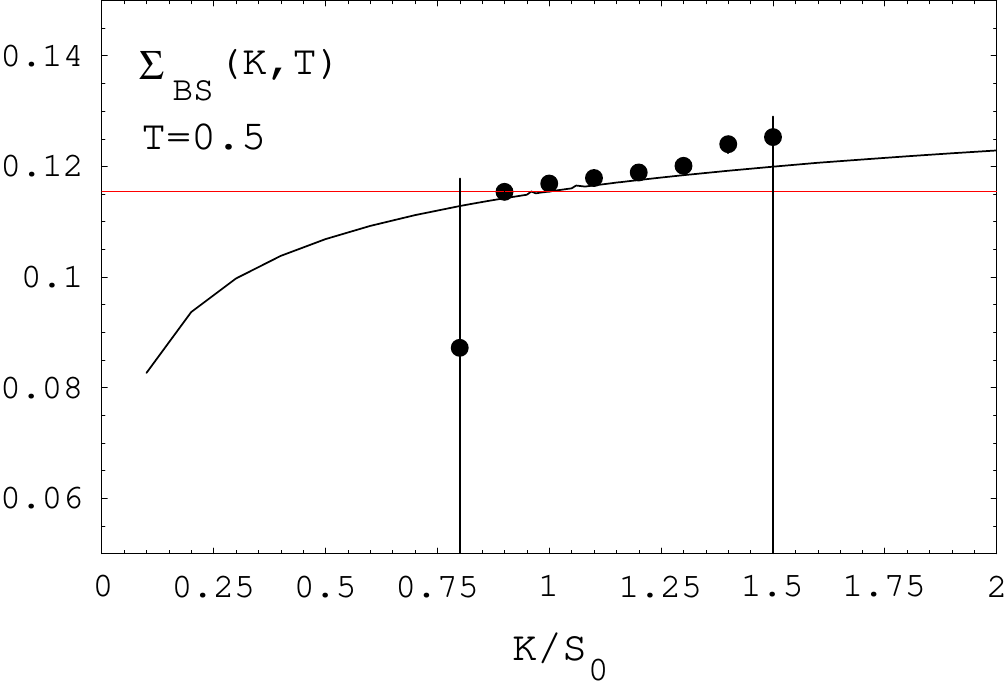}
\includegraphics[height=25mm]{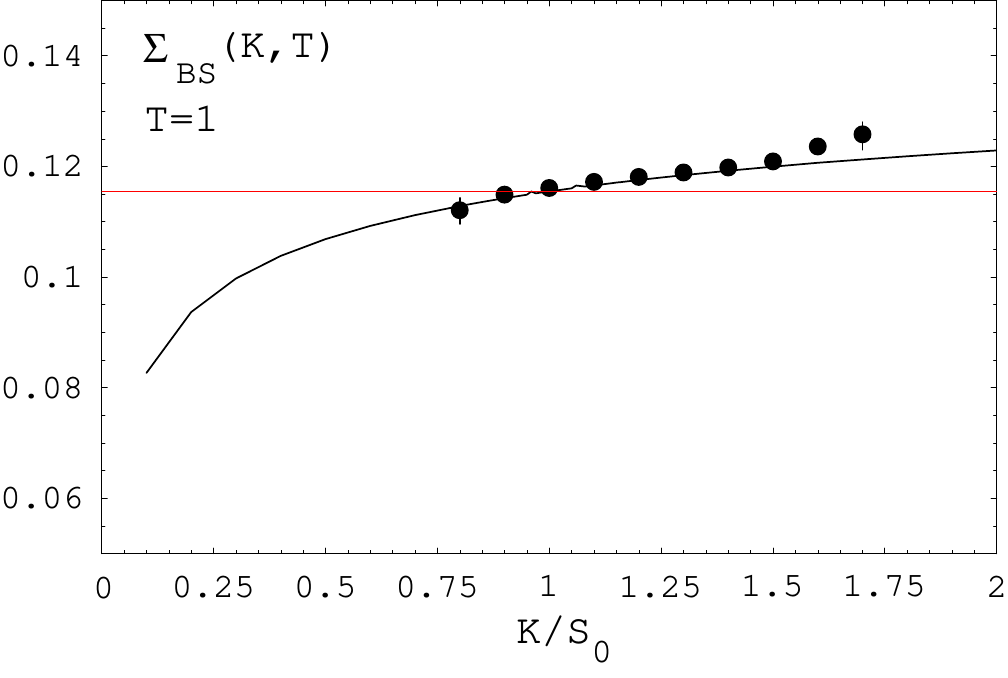}
\includegraphics[height=25mm]{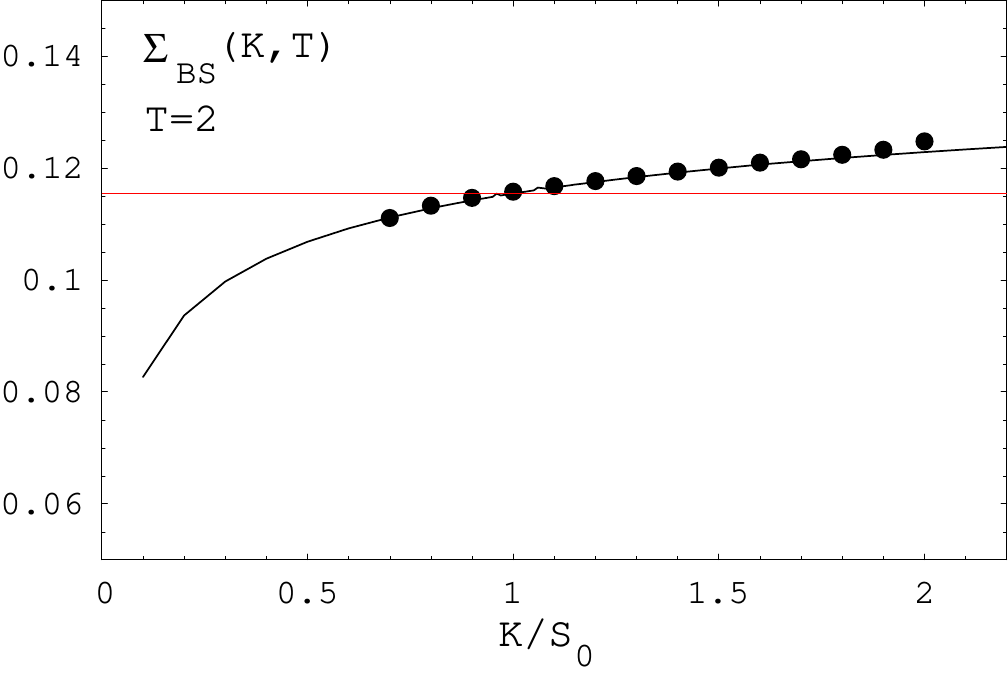}
\end{center}
\caption{
The equivalent log-normal volatility $\Sigma_{\rm LN}(K,S_0)$ 
of Asian options in the Black-Scholes model given by (\ref{SigBS}) 
(black curve).
The red line is at $\frac{1}{\sqrt3}\sigma$ and corresponds to the ATM  
equivalent volatility. The dots show the log-normal equivalent
volatility obtained by Monte Carlo pricing of the Asian options with maturity 
$T=0.5,1,2$. The BS model parameters are $r=q=0,\sigma=0.2$. The time step 
of the MC simulation is $\tau=0.01$ and the number of paths $N_{\rm MC}=1m$.}
\label{Fig:sigLN}
\end{figure}


Finally, the integral appearing in $\Lambda[f]$ can be computed in closed
form for each solution, and we have
\begin{align}
& T_1(\delta,\rho) = \int_0^1 dx e^{f_1(x)} = \frac{1}{\delta} \sinh\delta +
\frac{2\rho}{\delta^2} \sinh^2\left(\frac12\delta\right), 
\\
& T_2(\xi,\rho) = \int_0^1 dx e^{f_2(x)} = \frac{1}{2\xi}
\sin(2\xi) \left( 1 + \frac{\rho}{2\xi} \tan\xi \right) \,.
\end{align}

Substituting these results into (\ref{Lambda2}), we find the following
results for the function $\lambda(a,b;\rho)$ 
\begin{align}
& \lambda_1(a,b;\rho) 
\\
&= 
-2 a T_1(\delta) + a e^\delta \left(\frac{1+\gamma}{e^\delta+\gamma}\right)^2
 + \frac{1}{b^2} \rho \left(\delta - 2 \log\left(\frac{e^\delta + \gamma}{1+\gamma}\right)
 \right) - \frac{\rho^2}{b^2}
\nonumber
\\
& = a \left\{ 1 + \sinh^2\frac{\delta}{2} \left( 1 - \frac{4\rho}{\delta^2}
+ \frac{\rho^2}{\delta^2} \right) - \frac{2-\rho}{\delta} \sinh \delta \right\}
+ \frac{2}{b^2}\rho \log 
\left[\cosh\frac{\delta}{2} + \frac{\rho}{\delta} \sinh\frac{\delta}{2} \right]
- \frac{\rho^2}{b^2}, \nonumber\\
& \lambda_2(a,b;\rho) 
\\
&= 
-2 a T_2(\xi) + a \frac{\cos^2\eta}{\cos^2(\xi+\eta)}
 + \frac{1}{b^2} \rho \log\frac{\cos^2\eta}{\cos^2(\xi+\eta)} 
- \frac{\rho^2}{b^2}
\nonumber
\\
& = a \left\{ 1 - \sin^2 \xi \left(1 + \frac{\rho}{\xi^2} - 
\frac{\rho^2}{4\xi^2}\right) + \frac{\rho-2}{2\xi} \sin(2\xi)
\right\} 
 + \frac{2\rho}{b^2} \log \left[\cos \xi + \frac{\rho}{2\xi}\sin\xi\right]
- \frac{\rho^2}{b^2}\,, \nonumber
\end{align}
where $\delta$ and $\xi$ are given by the solutions of the equations
(\ref{deltaeq}) and (\ref{lambdaeq}), respectively.
For given $(a>0,b,\rho)$, only one of these two equations has a solution,
which determines the optimal function $f(x)$ uniquely, and the function
$\lambda(a,b;\rho)$. 
This completes the proof of Proposition~\ref{VarProblem}.
\end{proof}


\begin{proof}[Proof of Proposition \ref{prop:rhogeneral}]

The variational problem (\ref{RateFunction}) can be written equivalently 
in terms of $\mathcal{J}(x,\rho)$ defined as in (\ref{Jcaldef}), by 
introducing the function $f(y) = \sqrt{2\beta} g(y)$ as
\begin{equation}
\mathcal{J}(x,\rho)=
\inf_{f\in\mathcal{AC}_{0}[0,1],\int_{0}^{1} e^{f(y)}dy=\frac{x}{S_0}}
\frac12 \int_{0}^{1} (f'(y)-\rho)^{2} dy\,.
\end{equation}

The integral constraint on $f(y)$ is taken into account
by introducing a Lagrange multiplier $a$ and defining an auxiliary
functional
\begin{align}
\Lambda[f] &= \frac12 \int_0^1 dy 
\left( f'(y) - \rho \right)^2 + a\left(
\int_0^1 dy e^{f(y)} - \frac{x}{S_0} \right) \\
&= \frac12 \int_0^1 dy [f'(y)]^2 + 
a \int_0^1 dy e^{f(y)} - 
\rho f(1) + \frac12 \rho^2 - a \frac{x}{S_0}\,.\nonumber
\end{align}

The solution of this variational problem $f(y)$ satisfies the 
Euler-Lagrange equation
\begin{equation}\label{EL2}
f''(y) = a e^{f(y)},
\end{equation}
with boundary conditions (the condition at $y=1$ is a transversality condition)
\begin{equation}
f(0)=0\,,\qquad f'(1) = \rho\,.
\end{equation}
This differential equation and the associated boundary conditions are identical
to the equation appearing in the proof of Proposition~\ref{VarProblem}. As shown,
this can be solved exactly, and the solutions are given in (\ref{f1}), (\ref{f2}).
The details of the proof will be slightly different, as in the present 
case the coefficient
$a$ (the Lagrange multiplier) is not known, but is one of the unknowns of the
variational problem. However, we will show that it can be determined using the
integral constraint 
\begin{equation}\label{fconst}
\int_0^1 e^{f(y)} dy = \frac{K}{S_0}\,.
\end{equation}
Before proceeding with the solution of the variational problem, we give a
preliminary result which expresses the rate function only in terms of $a,f(1)$.

\begin{lemma}
The rate function $\mathcal{J}(x,\rho)$ is given by
\begin{eqnarray}\label{Jsimple}
\mathcal{J}(K/S_0,\rho) = 
a \left(\frac{K}{S_0} - e^{f(1)}\right) - \rho f(1) + \rho^2\,.
\end{eqnarray}
\end{lemma}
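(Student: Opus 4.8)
The plan is to eliminate the kinetic term $\tfrac12\int_0^1 [f'(y)]^2\,dy$ from the rate functional by using a first integral of the Euler--Lagrange equation \eqref{EL2}, mirroring the role played by the constant $E$ in \eqref{Edef} in the proof of Proposition~\ref{VarProblem}. First I would expand the square in the rate function and integrate the cross term by parts; since the optimal $f$ satisfies $f(0)=0$, this gives
\[
\mathcal{J}(K/S_0,\rho)=\frac12\int_0^1 [f'(y)]^2\,dy-\rho f(1)+\frac12\rho^2 .
\]

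Next I would observe that along any solution of \eqref{EL2} the quantity $E(y):=\tfrac12[f'(y)]^2-a\,e^{f(y)}$ is constant, since $E'(y)=f'(y)\bigl(f''(y)-a\,e^{f(y)}\bigr)=0$ by \eqref{EL2}. Evaluating $E$ at $y=1$ and using the transversality condition $f'(1)=\rho$ gives $E=\tfrac12\rho^2-a\,e^{f(1)}$. Rewriting $\tfrac12[f'(y)]^2=E+a\,e^{f(y)}$, integrating over $[0,1]$, and inserting the integral constraint $\int_0^1 e^{f(y)}\,dy=K/S_0$ yields
\[
\frac12\int_0^1[f'(y)]^2\,dy=E+a\int_0^1 e^{f(y)}\,dy=\frac12\rho^2-a\,e^{f(1)}+a\frac{K}{S_0}.
\]
Substituting this into the previous display and collecting terms produces $\mathcal{J}(K/S_0,\rho)=a\bigl(\tfrac{K}{S_0}-e^{f(1)}\bigr)-\rho f(1)+\rho^2$, which is the assertion.

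The argument is short and I do not anticipate a genuine obstacle: the only points that require attention are that the infimum is attained by a classical solution of \eqref{EL2} obeying the stated boundary conditions (already implicit in the derivation of \eqref{f1}, \eqref{f2}) and that the transversality condition is invoked correctly when the first integral $E$ is evaluated at $y=1$. The substantive work is deferred to the step that follows this lemma, where one inserts the explicit solutions $f_1,f_2$, fixes the Lagrange multiplier $a$ from the constraint \eqref{fconst}, and thereby obtains the two branches $\mathcal{J}_1$ and $\mathcal{J}_2$ together with the equations \eqref{deltaeq2} and \eqref{lambdaeq2}.
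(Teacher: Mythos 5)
Your argument is exactly the paper's: expand $\frac12\int_0^1(f'-\rho)^2$, use the conserved quantity $E=\frac12(f')^2-ae^{f}$ of the Euler--Lagrange equation evaluated at $y=1$ via the transversality condition, integrate over $(0,1)$, and invoke the constraint $\int_0^1 e^{f}=K/S_0$. The proof is correct and follows the same route as the paper.
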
 

\begin{proof}
The Euler-Lagrange equation (\ref{EL2}) conserves the following quantity 
\begin{equation}
E = \frac12 (f'(y))^2 - a e^{f(y)}\,,
\end{equation}
which gives
\begin{equation}
\frac12 [f'(y)]^2 - a e^{f(y)} = \frac12\rho^2 - a e^{f(1)} \,.
\end{equation}
Taking the integral of this relation over $x:(0,1)$, and using the 
constraint (\ref{fconst}) gives the result (\ref{Jsimple}).

\end{proof}

The only remaining part of the proof is determining $a,f(1)$. This can be
done from the constraint (\ref{fconst}). Substituting (\ref{f1}) into this
constraint gives
\begin{equation}\label{betaeq3}
\int_0^1 dx e^{f_1(x)} = \frac{1}{\delta} \sinh \delta + 
\frac{2\rho}{\delta^2} \sinh^2(\delta/2) = \frac{K}{S_0}\,.
\end{equation}
which is an equation for $\delta$. 
This equation has solutions only for $K/S_0 \geq 1+ \rho/2$. Once $\delta,\gamma$
are known, the Lagrange multiplier $a$ is determined using the relation
(\ref{f1const}). Substituting into (\ref{Jsimple}) we find the rate function
\begin{align}\label{J1res}
\mathcal{J}(K/S_0,\rho) &= \frac12 (\beta^2 - \rho^2) \left(1 - \frac{2\tanh(\beta/2)}
{\beta + \rho \tanh(\beta/2)} \right) \\
&\qquad\qquad  
- 2\rho \log \left[\cosh(\beta/2) + 
\frac{\rho}{\beta} \sinh(\beta/2) \right] + \rho^2 \,.\nonumber
\end{align}

A similar calculation using $f_2(x)$ gives
\begin{equation}\label{lambdaeq3}
\int_0^1 dx e^{f_2(x)} = 
\frac{1}{2\xi} \sin(2\xi) 
\left( 1 + \frac{\rho}{2} \frac{\tan\xi}{\xi} \right)
= \frac{K}{S_0}\,.
\end{equation}
Both $\eta$ and $\xi+\eta$ must be in the $(-\pi/2, \pi/2)$  range.
The equation (\ref{lambdaeq3}) has solutions only for $K/S_0 \leq 1 + \rho/2$.

Using the solution for $\xi$, the Lagrange multiplier $a$ is found from
(\ref{f2const}). Substituting into (\ref{Jsimple}) we find the rate function 
\begin{equation}\label{J2res}
\mathcal{J}(K/S_0,\rho) = 2\left(\xi^2 + \frac14 \rho^2\right) 
\left\{
\frac{\tan\xi}{\xi + \frac{\rho}{2}\tan\xi}  - 1
 \right\} - 
2\rho \log\left( \cos\xi + \frac{\rho}{2\xi} \sin\xi \right) 
+ \rho^2 \,.
\end{equation}
This completes the proof of Proposition~\ref{prop:rhogeneral}.

\end{proof}


\begin{proof}[Proof of Proposition \ref{FloatProp}]
Start by noting that $S_{t_{n}}=S_{0}e^{\sigma Z_{t_{n}}+(r-q-\frac{1}{2}\sigma^{2})t_{n}}$
can be written equivalently as $S_{0}e^{\frac{\sqrt{2\beta}}{n}\sum_{j=1}^{n}(V_{j}+\frac{\rho}{\sqrt{2\beta}})-\frac{\beta}{n}}$
in distribution where $V_{j}$ are i.i.d. $N(0,1)$ random variables. Let us also recall that
$A_{n}=\frac{1}{n}S_{0}\sum_{k=0}^{n-1}e^{\frac{\sqrt{2\beta}}{n}\sum_{j=1}^{k}(V_{j}+\frac{\rho}{\sqrt{2\beta}})
-\frac{\beta k}{n^{2}}}$. The terms $\frac{\beta}{n}$, $\frac{\beta k}{n^{2}}$ are uniformly bounded 
and negligible
and if we let $g(x)=\frac{1}{n}\sum_{j=1}^{\lfloor xn\rfloor}(V_{j}+\frac{\rho}{\sqrt{2\beta}})$,
then 
$\kappa e^{\frac{\sqrt{2\beta}}{n}\sum_{j=1}^{n}(V_{j}+\frac{\rho}{\sqrt{2\beta}})}
-\frac{1}{n}\sum_{k=0}^{n-1}e^{\frac{\sqrt{2\beta}}{n}\sum_{j=1}^{k}(V_{j}+\frac{\rho}{\sqrt{2\beta}})}
= \kappa e^{\sqrt{2\beta}g(1)} - \int_{0}^{1}e^{\sqrt{2\beta}g(x)}dx$.
The map $g\mapsto \kappa e^{\sqrt{2\beta}g(1)} - \int_{0}^{1}e^{\sqrt{2\beta}g(x)}dx$ is continuous
in the supremum norm and by contraction principle, $\mathbb{P}(\kappa S_{t_{n}}- A_{n}\in\cdot)$
satisfies a large deviation principle with the rate function
\begin{equation}\label{FloatRate}
\mathcal{H}(x)=\inf_{g\in\mathcal{AC}_{0}[0,1],
\kappa e^{\sqrt{2\beta}g(1)}-\int_{0}^{1}e^{\sqrt{2\beta}g(y)}dy=\frac{x}{S_{0}}}
\frac{1}{2}\int_{0}^{1}\left(g'(y)-\frac{\rho}{\sqrt{2\beta}}\right)^{2}dy.
\end{equation}
As $n\rightarrow\infty$, $\kappa S_{t_{n}}- A_{n}\rightarrow \kappa S_{0}e^{\rho}-S_{0}\frac{e^{\rho}-1}{\rho}$ a.s.
When $\kappa < \frac{1}{\rho}(1-e^{-\rho})$, the call option is out-of-the-money and 
\begin{equation}
C(n)=e^{-n\mathcal{H}(0)+o(n)},\qquad\text{as $n\rightarrow\infty$},
\end{equation}
and by put-call parity, when $\rho\neq 0$,
\begin{align}
C(n)-P(n)&=e^{-rt_{n}}\mathbb{E}\left[\kappa S_{t_{n}}- A_{n}\right]
\\
&=\kappa S_{0}e^{-\frac{r}{r-q}\rho}- e^{-\frac{r}{r-q}\rho}S_{0}\frac{e^{\rho}-1}{n(1-e^{-\frac{\rho}{n}})}
\nonumber
\\
&=\kappa S_{0}e^{-\frac{r}{r-q}\rho}- S_{0}e^{-\frac{r}{r-q}\rho}\frac{e^{\rho}-1}{\rho}
-\frac{e^{-\frac{r\rho}{r-q}}S_{0}(e^{\rho}-1)}{2n}+O(n^{-2}).
\nonumber
\end{align}
Therefore, as $n\rightarrow\infty$, the asymptotics for in-the-money put option is
\begin{equation}
P(n)=-\kappa S_{0}e^{-\frac{r}{r-q}\rho}+ S_{0}e^{-\frac{r}{r-q}\rho}\frac{e^{\rho}-1}{\rho}
+\frac{ e^{-\frac{r\rho}{r-q}}S_{0}(e^{\rho}-1)}{2n}+O(n^{-2}).
\end{equation}
When $\rho=0$, 
\begin{equation}
P(n)=(1-\kappa)S_{0}+e^{-n\mathcal{H}(0)+o(n)},
\qquad
\text{as $n\rightarrow\infty$}.
\end{equation}

When $\kappa=\frac{1}{\rho}(1-e^{-\rho})$, i.e., at-the-money, the asymptotics for $C(n)$ and $P(n)$ 
are governed by the central limit theorem. $\frac{1}{\sqrt{n}S_{0}}(\kappa S_{t_{n}}-A_{n})$ can be 
approximated by
\begin{equation}
\kappa e^{\rho}\frac{\sqrt{2\beta}}{\sqrt{n}}\sum_{j=0}^{n-1}V_{j}-
\frac{\sqrt{2\beta}}{n^{3/2}} \sum_{j=0}^{n-1}V_{j}\sum_{i=j+1}^{n}e^{\rho\frac{i}{n}},
\end{equation}
with $V_j = N(0,1)$ i.i.d. random variables. The variance of this expression converges to 
\begin{align}
\int_{0}^{1}\left[\kappa e^{\rho}\sqrt{2\beta}-\frac{\sqrt{2\beta}}{\rho}(e^{\rho}-e^{\rho x})\right]^{2}dx
&=\frac{2\beta}{\rho^{2}}\int_{0}^{1}(1-e^{\rho x})^{2}dx
\\
&=\frac{2\beta}{\rho^{2}}\left[1-\frac{2}{\rho}(e^{\rho}-1)+\frac{e^{2\rho}-1}{2\rho}\right].
\nonumber
\end{align}

We can further use the nonuniform Berry-Esseen bound for the central limit 
theorem to obtain the following asymptotics,
\begin{equation}
\lim_{n\rightarrow\infty}\sqrt{n}C(n)
=\lim_{n\rightarrow\infty}\sqrt{n}P(n)
=S_{0}e^{-\frac{r\rho}{r-q}}\mathbb{E}[Z1_{Z\geq 0}],
\end{equation}
where $Z$ is a normal random variable with mean $0$ and variance
\begin{equation}
\frac{2\beta}{\rho^{2}}\left[1-\frac{2}{\rho}(e^{\rho}-1)+\frac{e^{2\rho}-1}{2\rho}\right].
\end{equation}

When $\kappa > \frac{1}{\rho}(1-e^{-\rho})$, the put option is out-of-the-money and
\begin{equation}
P(n)=e^{-n\mathcal{H}(0)+o(n)},\qquad\text{as $n\rightarrow\infty$},
\end{equation}
and when $\rho\neq 0$, we have for in-the-money call option
\begin{equation}
C(n)=\kappa S_{0}e^{-\frac{r}{r-q}\rho}-S_{0}e^{-\frac{r}{r-q}\rho}\frac{e^{\rho}-1}{\rho}
-\frac{e^{-\frac{r\rho}{r-q}}S_{0}(e^{\rho}-1)}{2n}+O(n^{-2}),
\end{equation}
and when $\rho=0$,
\begin{equation}
C(n)=S_{0}(\kappa - 1)+e^{-n\mathcal{H}(0)+o(n)},
\qquad
\text{as $n\rightarrow\infty$}.
\end{equation}
\end{proof}



\begin{proof}[Proof of Proposition~\ref{prop:impvol}]

i) The price of an undiscounted European option in the Black-Scholes model 
depends only on $\sigma^2 T$ and $K/F$ with $F$ the forward asset price.
In our case given by (\ref{CAsian}) we have $F = A_\infty$, and we denote 
this dependence as
$e^{-rT} A_\infty \bar C_{\rm BS}(K/A_\infty,\sigma^2 T)$,
with $\bar C_{\rm BS}(k, v) := \Phi(\frac{1}{\sqrt{v}}(-\log k + \frac12v)) -
k \Phi(\frac{1}{\sqrt{v}}(-\log k - \frac12v))$.

By definition of the 
equivalent log-normal implied volatility we have
\begin{eqnarray}
C(n) = e^{-rT} A_\infty \bar C_{\rm BS}(K/A_\infty,\Sigma_{\rm LN}^2 T) \,.
\end{eqnarray}

Consider an OTM Asian call option $K> A_\infty$.
We have from Proposition~\ref{prop:13}
\begin{eqnarray}
\lim_{n\to \infty} \frac{1}{n} \log C(n) = 
- \frac{1}{2\beta} \mathcal{J}(K/S_0,\rho) \,.
\end{eqnarray}
Also, we have
\begin{eqnarray}
\lim_{T\to 0} (\Sigma^2_{\rm LN} T) \log \left( A_\infty
\bar C_{\rm BS}(K/A_\infty,\Sigma_{\rm LN}^2 T)\right) = 
- \frac12 \log^2(K/A_\infty)
\end{eqnarray}


We get thus, setting $T=t_n$,
\begin{eqnarray}
\lim_{n\to \infty} \Sigma^2_{\rm LN}(K,n) n^2 \tau &=& 
\lim_{n\to \infty} 
\frac{\Sigma^2_{\rm LN}(K,n) n\tau 
\log[A_\infty\bar C_{\rm BS}(K/A_\infty,\Sigma^2_{\rm LN}T)]}
{\frac{1}{n} \log C(n)} \\
&=&
\beta \frac{\log^2(K/A_\infty)}{\mathcal{J}(K/S_0,\rho)}\,.\nonumber
\end{eqnarray}
Recalling that $\beta = \frac12 \sigma^2 n^2\tau$ this is written
equivalently as
\begin{equation}
\lim_{n\to \infty} \frac{1}{\sigma^2} \Sigma_{\rm LN}^2(K,n) = 
\frac12 \frac{\log^2(K/A_\infty)}
{\mathcal{J}(K/S_0, \rho)}\,,
\end{equation}
which reproduces the result (\ref{SigBS}).

ii) At-the-money Asian option.
The Black-Scholes formula  gives for this case
\begin{align}
\bar C_{\rm BS}(1,\Sigma_{\rm LN}^2 T) &=
\Phi\left(\frac12 \Sigma_{\rm LN}\sqrt{T}\right)
- \Phi\left(-\frac12 \Sigma_{\rm LN}\sqrt{T}\right)  \\
&=
\frac{1}{\sqrt{2\pi}} \Sigma_{\rm LN} \sqrt{T} \left(1 + 
O\left(\Sigma_{\rm LN}^2 T\right)\right)\,.\nonumber
\end{align}
The large $n$ asymptotics of the ATM Asian option given in 
Proposition~\ref{prop:ATM} reads
\begin{equation}
C(A_\infty,n) = \frac{1}{\sqrt{\pi}} S_0 e^{-\frac{r\rho}{r-q}}
\sqrt{\beta v(\rho)} \frac{1}{\sqrt{n}}\,.
\end{equation}

The two results are related as $C(A_\infty,n) = e^{-rT} A_\infty
\bar C_{\rm BS}(1,\Sigma^2_{\rm LN} T)$.
Recalling that we have $\sigma\sqrt{t_n}=
\frac{1}{\sqrt{n}} \sqrt{2\beta}$ we obtain the asymptotics of the equivalent 
implied volatility of the ATM Asian option
\begin{equation}
\lim_{n\to \infty} \frac{\Sigma_{\rm LN}(A_\infty,n)}{\sigma} = 
\frac{S_0}{A_\infty} \sqrt{v(\rho)}\,.
\end{equation}
This reproduces equation (\ref{ATM}). 


The proof of (\ref{SigN}) proceeds in a similar way, starting with the 
Bachelier formula for the call option prices.

\end{proof}

\end{document}